\pdfoutput=1 
\pdfminorversion=5 
\pdfobjcompresslevel=3 
\pdfcompresslevel=9

\documentclass[fleqn,11pt,letter]{article}
\usepackage[utf8]{inputenc}
\usepackage[pdftex]{graphicx}
\usepackage[usenames,dvipsnames,table]{xcolor}
\usepackage{fullpage}
\usepackage{times}


\usepackage{wrapfig}
\usepackage{sidecap}
\usepackage{enumitem}

\newcommand{\redcol}[1]{\textcolor{Mahogany}{#1}}
\newcommand{\greencol}[1]{\textcolor{OliveGreen}{#1}}
\newcommand{\redu}{\redcol{u}}
\newcommand{\greenv}{\greencol{v}}

\newcommand{\greenw}{\greencol{w}}

\usepackage{amssymb}
\usepackage{amsmath}
\usepackage{float}
\usepackage{morefloats}
\setcounter{tocdepth}{3}
\usepackage{graphicx}
\usepackage{amsthm}
\usepackage{rotating}
\usepackage{wrapfig}

\usepackage{cite} 

\usepackage{hyperref}

\usepackage{url} 
\urldef{\maildiku}\path|roden@diku.dk|
\urldef{\mailjacob}\path|jaho@di.ku.dk|

\usepackage{todonotes}
\DeclareGraphicsExtensions{.pdf,.png,.jpg,.jpeg,.mps}

\newcommand{\eps}{\varepsilon}

\newcommand{\Oo}{\mathcal O} 
\newcommand{\lca}{\operatorname{lca}}

\newcommand{\dual}[1]{\ensuremath{{#1}^*}}



\newcommand{\source}{\operatorname{init}}

\newcommand{\set}[1]{\left\{#1\right\}}
\newcommand{\sizeof}[1]{\left\lvert#1\right\rvert}
\newcommand{\simplepath}[2]{#1\rightsquigarrow{}#2}

\newcommand{\reachable}[2]{#1\rightsquigarrow{}#2}

\usepackage{amsthm}
\theoremstyle{plain}
\newtheorem{theorem}{Theorem}[section]

\theoremstyle{definition}
\newtheorem{lemma}[theorem]{Lemma}
\newtheorem{corollary}[theorem]{Corollary}

\newtheorem{definition}[theorem]{Definition}

\usepackage{authblk}

\author{Jacob Holm and Eva Rotenberg and Mikkel Thorup\thanks{Research partly supported by Thorup's
    Advanced Grant from the Danish Council for Independent Research
    under the Sapere Aude research career programme.} \thanks{Research partly supported by the FNU project AlgoDisc - Discrete Mathematics, Algorithms, and Data
    Structures.}}
\affil{University of Copenhagen (DIKU),\\
    jaho@di.ku.dk, roden@di.ku.dk, mthorup@di.ku.dk} 
\title{Planar Reachability in Linear Space and Constant Time}

\begin{document}
\thispagestyle{empty}
\maketitle
\begin{abstract}
We show how to represent a planar digraph in linear space so
that reachability queries can be answered in constant time. The data structure can be constructed in linear time. This representation
of reachability is thus optimal in both time and space, and has optimal construction time. 
The previous best solution used $O(n\log n)$ space for constant query
time [Thorup FOCS'01].
\end{abstract}

\thispagestyle{empty}
\newpage

\section{Introduction}
\label{S-intro}
Representing reachability of a directed graph is a fundamental challenge.
We want to represent a digraph $G=(V,E)$, $n=|V|$, $m=|E|$, so that we
for any vertices $u$ and $w$ can tell if $u$ reaches $v$, that is, if
there is a dipath from $u$ to $v$. There are two extreme solutions: one
is to just store the graph, as is, using $O(m)$ words of space and answering
reachability queries from scratch, e.g., using breadth-first-search, in $O(m)$ time.
The other is to store a reachability matrix using $n^2$ bits and then
answer reachability queries in constant time. Thorup and Zwick \cite{TZ05:dist-oracle} proved
that there are graphs classes such that any representation of reachability needs
$\Omega(m)$ bits. Also, P\v{a}tra\c{s}cu~\cite{Pat11:unify-lower} has proved that there are directed
graphs with $O(n)$ edges where constant time reachability queries require $n^{1+\Omega(1)}$ space. Thus, for constant time reachability queries to a general digraph,
all we know is that the worst-case space is somewhere between $\Omega(m+n^{1+\Omega(1)})$ 
and $n^2$ bits.

The situation is in stark contrast to the situation for
undirected/symmetric graphs where we can trivially represent
reachability queries on $O(n)$ space and constant time, simply by
enumerating the connected components, and storing with each vertex the
number of the component it belongs to. Then $u$ reaches $v$ if and only
if the have the same component number.

In this paper we focus on the planar case, which feels particularly relevant when you live on a sphere. 
For planar digraphs it is already known that we can
do much better than for general digraphs. Back in 2001, Thorup~\cite{DBLP:journals/jacm/Thorup04}
presented a reachability oracle for planar digraphs 
using $O(n\lg n)$ space for constant query time, or linear space for $O(\log n)$ query time. 
In this paper, we present the
first improvement; namely an $O(n)$ space reachability oracle that can answer 
reachability queries in constant time. 
Note that this bound is asymptotically optimal; even to distinguish between the subclass 
of directed paths of length $n$, we need $\Omega(n\log n)$ bits.
Our oracle is constructed in linear time.

\paragraph{Computational model} 
The computational model for all upper bounds is the word RAM, modelling what we can program in a
standard programming language such as C~\cite{KR88}. A word
is a unit of space big enough to fit any vertex identifier, so
a word has $w\geq \lg n$ bits, and word operations take constant time. Here $\lg=\log_2$. In our upper bounds, we limit ourselves to the \emph{practical RAM} model~\cite{Mil96}, which is a restriction of the word RAM to the standard operations on words available in C that are AC0. This includes indexing arrays as needed
just to store a reachability matrix with constant time access, but excludes e.g. multiplication and division.
Thus, unless otherwise specified, we measure {\em
space\/} as the number of words used and {\em time\/} as the
number of word operations performed. 

The $\Omega(m+n^{1+\Omega(1)})$ space lower bound from 
\cite{Pat11:unify-lower} for general graphs is in the cell-probe model 
subsuming the word RAM with an arbitrary instruction set.

\paragraph{Other related work}
Before \cite{DBLP:journals/jacm/Thorup04}, the best reachability oracles for general planar digraphs were distance
oracles, telling not just if $u$ reaches $w$, but if so, also the length of
the shortest dipath from $u$ to $w$ \cite{CX00,ArChChDaSmZa96,Dji96}.
For such planar distance oracles, the best current time-space trade-off is $\tilde
O(n/\sqrt s)$ time for any $s\in[n,n^2]$ \cite{MS12}.

The construction of \cite{DBLP:journals/jacm/Thorup04} also yields approximate
distance oracles for planar digraphs. With edge weights from $[N]$, $N\leq 2^w$, 
distance queries where answered within a factor $(1+\epsilon)$ in
$O(\log\log(Nn)+1/\eps)$ time using $O(n(\log n)(\log(Nn))/\eps)$ space. These bounds have not been improved.

For the simpler case of undirected graphs, where reachability is
trivial, \cite{DBLP:journals/jacm/Thorup04,Kle02} provides a more efficient
$(1+\eps)$-approximate distance queries for planar graphs in
$O(1/\eps)$ time and $O(n(\log n)/\eps)$ space. In \cite{DBLP:journals/corr/abs-1104-5214} it was
shown that the space can be improved to linear if the query time is
increased to $O((\log n)^2/\eps^2)$. In \cite{KST13:undir-plan-oracle}
it was shown how to represent planar graphs with bounded weights using
$O(n \log^2((\log n)/\eps) \log^*(n) \log\log(1/\eps))$ space and
answering $(1+\eps)$ approximate distance queries in
$O((1/\eps)\log(1/\eps) \log \log(1/\eps)\log^*(n) + \log\log\log n))$
time. Using $\bar O$ to suppress factors of $O(\log\log n)$ and
$O(\log(1/\eps))$, these bounds reduce to $\bar O(n)$ space and $\bar
O(1/\eps)$ time. This improvement is similar in spirit to our
improvement for reachability in planar digraphs. However, the
techniques are entirely different.

There has also been work on special classes of planar
digraphs.  In particular, for a planar $s$-$t$-graph, where all
vertices are on dipaths between $s$ and $t$, Tamassia and Tollis
\cite{DBLP:journals/tcs/TamassiaT93} have shown that we can represent reachability in linear
space, answering reachability queries in constant time.  Also, 
\cite{CX00,DPZ91,DPZ95} presents improved bounds for
planar exact distance oracles when all the vertices are on the
boundary of a small set of faces.

\paragraph{Techniques}
We will develop our linear space constant query time reachability oracles
by considering more and more complex classes of planar digraphs.
We make reductions from $i+1$ to $i$ in the following:
\begin{enumerate}
\item Acyclic planar s-t-graph; $\exists (s,t)$, such that all vertices are reachable from $s$ and may reach $t$.~\cite{DBLP:journals/tcs/TamassiaT93}
\item Acyclic planar single-source graph; $\exists s$, such that all vertices are reachable from $s$. See Section~\ref{sec:smpd}.
\item Acyclic planar In-Out graph; $\exists s$ such that all vertices with out-degree $0$ are reachable from $s$.
See Section~\ref{sec:inout}
\item Any acyclic planar graph. The reduction to acyclic planar In-Out graphs from general acyclic planar graphs is known.~\cite{DBLP:journals/jacm/Thorup04}
\item Any planar graph. The reduction to acyclic planar graphs is
  well-known.  Using the depth first search algorithm by
  Tarjan~\cite{Tarjan72depthfirst}, we can contract each strongly
  connected component to get an acyclic planar graph.  Vertices in the
  same strongly connected component can always reach each other, and
  vertices in distinct strongly connected components can reach each
  other if the corresponding vertices in the contracted graph can.
\end{enumerate}



The most technically involved step is the reduction from single-source graph to s-t-graph. As in \cite{DBLP:journals/jacm/Thorup04}, we use separators to form a tree over a partitioning of the vertices of the graph. 
However, in~\cite{DBLP:journals/jacm/Thorup04}, the \emph{alternation number}; the number of directed segments in the frame that separates a child from its parent (see Section~\ref{sec:prelim}), needs only be a constant number. In contrast, it is a crucial part of our construction that the alternation number, which must be even, is at most $4$. Also, in our data structure, paths cannot go upward in the rooted tree, whereas there is no such restriction in~\cite{DBLP:journals/jacm/Thorup04}. These two features let us use a level ancestor -like algorithm to quickly calculate the best $\le 4$ vertices in a given tree-node that can reach a given vertex $v$. Each component is an s-t-graph, and $v$ can be reached by some $u$ in the ancestral component if and only if $u$ can reach at least one of these best $\le 4$ vertices.
\section{Preliminaries}\label{sec:prelim}

  For a vertex $v$ at depth $d$ in a rooted forest $T$ and an integer
  $0\leq{}i\leq{}d$, the $i$'th \emph{level ancestor} of $v$ in $T$ 
  is the ancestor to $v$ in $T$ at depth $i$.
For two nodes $x$, $y$ in a rooted tree, let $x \preceq y$ denote that $x$ is an ancestor to $y$, and $x \prec y$ that $x$ is a proper ancestor to $y$. 

We say a graph is \emph{plane}, if it is embedded in the plane, and denote by $\pi_v$ the permutation of edges around $v$.
%
Given a plane graph, $(G,\pi)$, we may introduce \emph{corners} to describe the incidence of a vertex to a face. A vertex of degree $n$ has $n$ corners, where if $\pi_v((v,u)) = (v,w)$, and the face $f$ is incident to $(v,u)$ and $(v,w)$, then there is a corner of $f$ incident to $v$ between $(v,u)$ and $(v,w)$.
We denote by $V[X]$ and $E[X]$ the vertices and edges, of some (not necessarily induced) subgraph $X$.
%
Given a subgraph $H$ of a planar embedded graph $G$, the faces of $H$ define \emph{superfaces} of those of $G$, and the faces of $G$ are \emph{subfaces} of those of $H$. Similarly for corners.
%
Note that the faces of $H$ correspond to the connected components of $G^\ast \setminus H$. The super-corners incident to $v$ correspond to a set of consecutive corners in the ordering around $v$.

In a directed graph, we may consider the boundary of a face in some  subgraph, $H$. A corner of a face $f$ of $H$ is a \emph{target} for $f$ if it lies between ingoing edges $(u,v)$ and $(w,v)$, and \emph{source} if it lies between outgoing edges $(v,u)$ and $(v,w)$.
We say the face boundary has \emph{alternation number} $2a$ if it has $a$ source and $a$ target corners.
When a face boundary has alternation number $2a$, we say it consists of $2a$ \emph{disegments} (\emph{di}rected segments), associated with the directed paths from source to target. We associate with each disegment the total ordering stemming from reachability of vertices on the path via the path, and by convention we set $\operatorname{succ}(t,S) = \bot$ for a target vertex $t$ on the disegment. 
Given a set of edges $S\subset E$, we denote by $\source(S)$ the set of inital vertices, $\source(S) = \{u|(u,v)\in S\}$.
%
%
Given a connected planar graph with a spanning tree $T$, the edges $T^\ast := E\setminus T$ form a spanning tree for the dual graph. We call the pair $(T,T^\ast)$ a tree-cotree decomposition of the graph, referring to $T$ and $T^\ast$ as tree and cotree.

When $u$ can reach $v$ we write $\simplepath{u}{v}$.
An s-t-graph is a graph with special vertices $s,t$ such that $\simplepath{s}{v}$ and $\simplepath{v}{t}$ for all vertices $v$. We say a graph is a \emph{truncated s-t-graph} if it is possible to add vertices $s,t$ to obtain an s-t-graph, without violating the embedding. In an s-t-graph, all faces has alternation number $2$.

\section{Acyclic planar single-source digraph}\label{sec:smpd}
Given a global source vertex $s$ for the planar digraph, we wish to
make a data structure for reachability queries. We do this by
reduction to the s-t-case.  A tree-like structure with truncated
s-t-graphs as nodes is obtained by recursively choosing a face $f$
wisely, and then letting vertices that can reach vertices on $f$
belong to this node, and partitioning all other vertices among the
descendants of this node.  As we shall see in
Section~\ref{sec:st-tree-construction}, this can be done in such a way
that we obtain logarithmic height and such that the border between a
node and its ancestors is a cycle of alternation number at most $4$.  We
call this the \emph{frame} of the node.

We always choose the truncated s-t-graph maximally, such that once a path crosses
a frame, it does not exit the frame again. Thus, for $u$ to reach $v$,
$u$ has to lie in a component which is ancestral to that of $v$, and
since the alternation number of any frame between those two component
is at most $4$, the path could always be chosen to use one of the at
most $4$ different ``best'' vertices for reaching $v$ on that
frame. Thus, the idea is to do something inspired by level ancestry to
find those ``best'' vertices in $u$'s component.
We handle the case of frames with alternation
number $2$ in Section~\ref{sec:2-frames}. Frames with alternation
number $4$ are similar but more involved, and the details are found in
Section~\ref{sec:4-frames}.
\begin{wrapfigure}{r}{0.3\textwidth}
\includegraphics[width=0.9\linewidth]{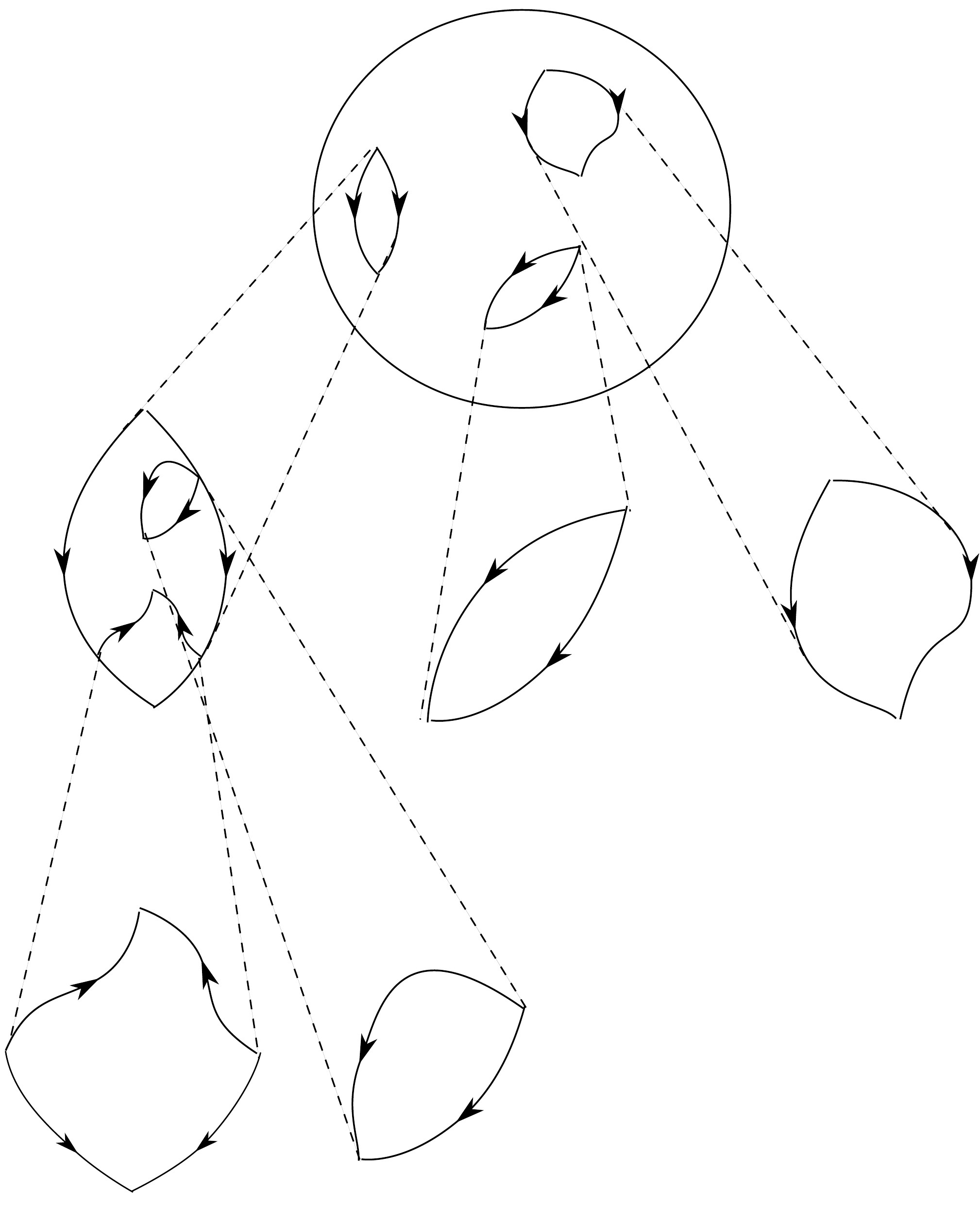}
\caption{A tree of truncated s-t-graphs, each child contained in a face-cycle of its parent.}
\vspace{-40pt}
\label{fig:s-t-decomp}
\end{wrapfigure}

\begin{definition}
  Given a graph $G=(V,E)$, a subgraph $G'=(V',E')$ is
  \emph{backward closed} if
  $\forall{}(u,v)\in{}E:v\in{}V'\implies{}(u,v)\in{}E'$.
\end{definition}
\begin{definition}
  The \emph{backward closure} of a face $f$, denoted
  $\operatorname{bc}(f)$ is the unique smallest backward closed
  graph that contains all the vertices incident to $f$.
\end{definition}

\begin{definition}
  Let $G=(V,E)$ be an acyclic single-source plane digraph, and let
  $\dual{G}=(\dual{V}, \dual{E})$ be its dual.  An
  \emph{s-t-decomposition} of $G$ is a rooted tree where each node
  $x$ is associated with a face $f_x\in\dual{V}$ and subgraphs
  $\dual{G}_x\subseteq\dual{G}$ and $C_x\subseteq{}S_x\subseteq{}G$
  such that:
  \begin{itemize}

  \item $f_x$ is unique ($f_x\neq{}f_y$ for $x\neq{}y$).

  \item $S_x$ is $\operatorname{bc}(f_x)$ if x is the root, and
    $\operatorname{bc}(f_x)\cup{}S_y$ if $x$ is a child of $y$.

  \item $C_x$ is $\operatorname{bc}(f_x)$ if x is the root, and
    $\operatorname{bc}(f_x)\setminus{}S_y$ if $x$ is a child of $y$.

  \item $\dual{G}_x$ is the subgraph of $\dual{G}$ induced by
    $\set{f_z\middle|\:z\text{ is a descendent of }x}$.
    Furthermore, if $x$ is a child of $y$ we require that 
    $\dual{G}_x$ is the connected component of
    $\dual{G}\setminus{}\dual{E}[S_y]$ containing $f_x$.

  \end{itemize}
  If $x$ is a child of $y$, 
  $x$ has a \emph{parent frame}
  $F_x\subseteq{}S_y$ and a set of \emph{down-edges} $E_x\subseteq{}E$ such that:
  \begin{itemize}
  \item $F_x$ is the face cycle in $S_y$ that corresponds to $\dual{G}_x$.
  \item $E_x$ is the set of edges $(w,w')$ such that $w\in{}V[F_x]$ and $w'\in{}V[C_z]$ for some 
    descendant $z$ of $x$.  
  \end{itemize}
  An s-t-decomposition is \emph{good} if the tree has height $\Oo(\log
  n)$ and each frame has alternation number $2$ or $4$.
\end{definition}
\noindent{}The name s-t-decomposition is chosen based on the following
\begin{lemma}
  Each vertex of $G$ is in exactly one $C_x$, and each $C_x$ is a
  truncated s-t-graph.
\end{lemma}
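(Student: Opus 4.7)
The plan is to prove partition and truncated-s-t-graph structure separately. Partition follows from the recursive definition of $S_x$ combined with a careful dual-component analysis; the structural claim follows from backward closure together with the ``sandwich'' shape of $C_x$ between $f_x$ and the parent frame $F_x$.

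For existence of a containing $C_x$, I would use that the $\{f_x\}$ exhaust all faces of $G$ (so $\dual G_r = \dual G$); then every $v \in V$ is incident to some $f_{x_0}$, hence $v \in V[\operatorname{bc}(f_{x_0})] \subseteq V[S_{x_0}]$. Walking up to the shallowest ancestor $x$ with $v \in V[S_x]$, the recursion $S_x = \operatorname{bc}(f_x) \cup S_{\text{parent}(x)}$ forces $v \in V[\operatorname{bc}(f_x)] \setminus V[S_{\text{parent}(x)}] = V[C_x]$ (root case trivial). For uniqueness, suppose $v \in V[C_{x_1}] \cap V[C_{x_2}]$ with $x_1 \ne x_2$. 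The comparable case (wlog $x_1 \preceq x_2$) is immediate from $V[S_{x_1}] \subseteq V[S_{\text{parent}(x_2)}]$. Otherwise, let $y = \lca(x_1, x_2)$ with distinct children $c_1, c_2$ of $y$ on the paths to $x_1, x_2$; then $v \notin V[S_y]$. The key step: $v \notin V[S_y]$ together with backward closure of each $\operatorname{bc}(f_z)$, $z \preceq y$, prevents any edge at $v$ from lying in $E[S_y]$, so the faces around $v$ sit in a single component of $\dual G \setminus \dual E[S_y]$. Along any reaching path $v \tto w$ with $w$ on $f_{x_i}$, every intermediate vertex is also outside $V[S_y]$ (otherwise $v$ would reach an ancestor face), so the face-components of consecutive path vertices coincide. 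Hence $v$'s faces lie in $\dual G_{c_i}$ for both $i = 1, 2$, contradicting $c_1 \ne c_2$.

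For the structural claim, $C_x$ inherits a planar embedding and acyclicity from $G$. Every vertex in $C_x$ reaches $f_x$ by definition of backward closure, so placing a new $t$ inside the $f_x$-face and adding edges from the boundary of $f_x$ to $t$ makes every vertex reach $t$. By the same backward-closure argument, any $u \in V[C_x]$ with no in-edge in $C_x$ has all its $G$-in-edges landing in $V[S_{\text{parent}(x)}]$, hence on $V[F_x]$; these $u$ are exactly the targets of the down-edges $E_x$. Placing a new $s$ in the super-face of $C_x$ containing $F_x$ and connecting $s$ to these sources yields an s-t-graph without breaking the embedding. The root case is analogous, with the global source of $G$ (which lies in $C_r$) replacing the $F_x$-construction.

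I expect the main obstacle to be the incomparable-case uniqueness: specifically, showing that a full reaching path from $v$, not just a single incident edge, cannot leak between dual components when $v \notin V[S_y]$. This requires pairing the local observation ``no $E[S_y]$-edges at vertices outside $V[S_y]$'' with careful propagation along every vertex of the primal path, using single-sourcedness and backward closure to maintain the whole-path inclusion $V[\text{path}] \cap V[S_y] = \emptyset$. Existence, the comparable case, and the s-t-graph construction are largely a matter of unwinding definitions.
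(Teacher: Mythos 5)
Your proof is correct, but it takes a somewhat different route than the paper's in both halves. For the partition claim, the paper works forward from $v$: it defines $I$ as the set of nodes whose faces $v$ can reach, sets $N=\lca(I)$, and argues $v\in S_N\setminus S_{\mathrm{parent}(N)}=C_N$ while $v\notin C_x$ for any other $x$ (unrelated $x$ have no descendants in $I$, so $v\notin V[S_x]$). Your argument instead establishes existence by walking to the shallowest ancestor with $v\in V[S_x]$, and establishes uniqueness for incomparable $x_1,x_2$ by a dual-component argument: since no edge at any vertex of a reaching path from $v$ lies in $E[S_y]$ (where $y=\lca(x_1,x_2)$), the faces $f_{x_1}$ and $f_{x_2}$ must lie in the same component of $\dual{G}\setminus\dual{E}[S_y]$, contradicting $c_1\neq c_2$. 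These are two genuinely different ways to organize the same underlying observation --- that backward closure forces any path from $v\notin V[S_y]$ to stay outside $V[S_y]$ --- and your version makes this propagation step explicit where the paper buries it inside the claim $\bigcap_{x\in I}S_x=S_N$ (which quietly needs $N\in I$). For the structural claim, the paper contracts $S_y$ to a single vertex $s'$, obtaining a single-source graph, then adds a dummy target in $f_x$, and takes $C_x=(S_x/S_y)\setminus\{s'\}$; you instead add a fresh $s$ in the super-face bounded by $F_x$ and connect it to the local sources. These are equivalent constructions, with the paper's contraction being slightly slicker since it automatically yields the source from single-sourcedness of $G$. One small imprecision on your side: the vertices of $C_x$ with no in-edge in $C_x$ are not ``exactly the targets of the down-edges $E_x$'' --- $E_x$ also contains edges whose heads lie in deeper components $C_z$, $z\succ x$ --- but they are exactly the heads of the down-edges that land in $C_x$, which is what your construction actually needs.
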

\begin{proof}
  If $x$ is the root, $C_x=\operatorname{bc}(f_x)$ and this is clearly
  a truncted s-t-graph.  Otherwise let $y$ be the parent of $x$.  Then
  $S_x=\operatorname{bc}(f_x)\cup{}S_y$, is backward-closed and
  therefore contains $s$.  Contracting $S_y$ in that graph to a single
  vertex $s'$ gives a single-source graph $S_x/S_y$ with $s'$ as the
  source.  Adding a dummy target $t'$ in $f_x$ results in 
  an s-t-graph $(S_x/S_y)\cup\set{t'}$.  Thus, $S_x/S_y$ is a truncated s-t-graph, and since
  $C_x=\operatorname{bc}(f_x)\setminus{}S_y=S_x\setminus{}S_y=(S_x/S_y)\setminus\set{s'}$
  so is $C_x$.

  Let $v$ be a vertex, let $I$ be the set of all nodes in the s-t-decomposition whose associated faces $\set{f_x}_{x\in I}$ are reachable from $v$, and let $N=\operatorname{lca}(I)$. We now show that $v$ lies in $C_N$ and only in $C_N$.
  To see that $v\in C_N$, note that $v\in S_x$ for all $x\in I$, but then $v\in \bigcap_{x\in I} S_x = S_N$. But $v\notin S_a$ for any ancestor $a$ of $N$ by definition of $\operatorname{lca}$, and thus, $v\notin S_y$ for the parent $y$ of $N$, entailing $v\in S_N\setminus S_y = C_N$. We have now seen that $v\in C_N$ and that $v\notin C_x$ when $x\prec N$ or $N\prec x$. To see that $v\notin C_x$ for any unrelated $x\neq N$, note the following: if $x$ has no descendants in $I$, then $v\not\in{}V[S_x]$ since all vertices reachable from $v$ lie on some face. Thus, $v\notin C_x\subseteq S_x$.


\end{proof} 

\begin{theorem}\label{thm:st-tree-exists}
  Any acyclic single-source plane digraph has a good s-t-decomposition.
\end{theorem}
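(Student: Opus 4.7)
The plan is to prove the statement by strong induction on $n=|V(G)|$, where the inductive step constructs the root node of the decomposition by finding a single face $f$ whose backward closure is a ``balanced separator'' of both the primal and the dual, and then recurses on each of the resulting subproblems.

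First I would set up a canonical spanning structure. Because $G$ is acyclic and single-source, one can fix a spanning tree $T$ of $G$ all of whose edges are directed away from $s$ (concretely, pick for each $v\neq s$ an arbitrary in-edge to $v$). The cotree $T^*=E\setminus{}T$ is then a spanning tree of the dual $\dual{G}$. This tree-cotree decomposition will be the tool used to locate a good separating face: by a standard centroid argument on $T^*$ I pick a face $f$ such that after deleting $f$ from $\dual{G}$ every connected component of $\dual{G}\setminus\dual{E}[\operatorname{bc}(f)]$ corresponds to a primal subgraph on at most a constant fraction of the vertices of $G$. The vertices incident to $f$, together with all their ancestors under the chosen tree, are absorbed into $\operatorname{bc}(f)$ and form the root cluster $C_{\text{root}}$; the remaining vertices are partitioned among the child components determined by the dual components, matching the conditions in the definition of an s-t-decomposition.

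Next I would verify that each child subproblem is again an acyclic single-source plane digraph on at most a constant fraction of $n$ vertices. Concretely, for a child component arising from a connected component of $\dual{G}\setminus\dual{E}[\operatorname{bc}(f)]$, contracting $\operatorname{bc}(f)$ (intersected with the appropriate subgraph) to a single vertex yields a new single-source plane digraph, still acyclic, to which the induction hypothesis applies. Combining the separator guarantee with induction immediately gives the $\Oo(\log n)$ height bound.

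The hard part will be establishing the alternation bound of at most $4$ on every frame $F_x$. Here the centroid face cannot be chosen arbitrarily: a generic face may bound a backward closure whose boundary cycle alternates between ``source'' and ``target'' corners many times. To control this, I would choose $f$ together with a tree path from $s$ to some vertex on $f$ that will act as an ``axis'': when the axis is fixed, the single-source structure forces the boundary of $\operatorname{bc}(f)$, restricted to each side of the axis, to consist of at most one source--target disegment, because any further alternation would contradict either acyclicity or the single-source property (a further alternation would require another directed path from $s$ reaching $f$ from the other side, which would have been included in the backward closure already). Thus the frame around a child, which is exactly the face cycle in $S_y$ corresponding to that dual component, picks up at most two disegments from each side of the axis, for a total alternation number of at most $4$. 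This geometric/combinatorial argument, together with the fact that the alternation number of any face boundary is always even, is the crux of the theorem; the rest is bookkeeping that the recursively constructed decompositions on each child, concatenated under a new root, satisfy all four bullets of the definition and that the height recursion solves to $\Oo(\log n)$.
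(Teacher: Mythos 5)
Your diagnosis of where the difficulty lies is not correct, and as a result the key part of your plan does not engage with the actual obstruction. You write that ``a generic face may bound a backward closure whose boundary cycle alternates between source and target corners many times,'' and propose an ``axis'' path to control this. But the boundary of $\operatorname{bc}(f)$ in isolation is never the problem: $\operatorname{bc}(f)$ is a truncated s-t-graph (single source $s$, single sink the dummy $t'$ in $f$), and every face of an s-t-graph has alternation number exactly $2$. So at the root, any choice of $f$ gives all children alternation number $2$, no axis needed. The alternation number that must be controlled is that of the face cycles $F_y$ of the \emph{cumulative} graph $S_x = \operatorname{bc}(f_x)\cup S_y$, because those faces inherit target corners from the parent frame $F_x$. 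Your axis argument for $\operatorname{bc}(f)$ is aimed at the wrong object.

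The paper's construction confronts the recursive case head on. When $F_x$ is a $4$-frame with target corners $t^0_x$, $t^1_x$, choosing an arbitrary centroid face could leave both old targets inside the same child, which together with the new target $t_y$ gives a $6$-frame. The fix (Lemma~\ref{lem:4-face-split}) is to pick $f_x$ as the \emph{projection} of the cotree median onto the cotree path between two faces incident to $t^0_x$ and $t^1_x$; this guarantees every child inherits at most one old target corner, hence has alternation number at most $4$. Your proposal has no analogue of this projection step. Moreover, this constrained choice introduces a second issue you also do not address: the projected face need not be the median, so one child (the one containing the median) may fail to shrink. The paper recovers the $\Oo(\log n)$ height by observing that this unshrunk child necessarily has a $2$-frame (neither old target lies inside it), and a $2$-frame node can always take the true median, so the size halves within two levels. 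Both the projection and this two-level halving argument are essential, and neither appears in your plan; the claim that ``Combining the separator guarantee with induction immediately gives the $\Oo(\log n)$ height bound'' glosses over precisely the step where the separator guarantee and the alternation constraint conflict.
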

We defer the proof to section~\ref{sec:st-tree-construction}.  The
reason for studying s-t-decompositions in the context of reachability is
the following
\begin{lemma}
  If $\reachable{u}{v}$ where $u\in{}C_x$ and $v\in{}C_y$ then either
  $x=y$ or $x$ has a child $z$ that is ancestor to $y$ such that any
  $\simplepath{u}{v}$ path contains a vertex in $F_z$.
\end{lemma}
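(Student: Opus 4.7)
The plan is to first show $x\preceq y$, then assuming $x\neq y$ identify the child $z$ of $x$ that lies on the path down to $y$, and finally use backward-closedness and planarity to force any $u\tto v$ path through $V[F_z]$. For the first step, I will use that $S_y$ is backward-closed (being a union of backward closures): since $v\in V[C_y]\subseteq V[S_y]$, walking backward from $v$ along the $u\tto v$ path shows $u\in V[S_y]$. But $V[S_y]$ partitions into the $V[C_w]$ for $w\preceq y$ (from the proof of the previous lemma), so $u\in V[C_x]$ forces $x\preceq y$. If $x=y$ we are done; otherwise, let $z$ be the unique child of $x$ with $z\preceq y$, and let $R_z$ denote the face of the plane graph $S_x$ bounded by $F_z$, i.e., the face corresponding to the component $\dual{G}_z$ of $\dual{G}\setminus\dual{E}[S_x]$.

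The geometric core of the argument is to show that $v$ lies in the open interior $R_z^\circ$ while $u$ lies either on $F_z$ or strictly outside $\overline{R_z}$. The statement about $u$ is immediate from $u\in V[C_x]\subseteq V[S_x]$: vertices of the plane graph $S_x$ sit on face boundaries, never in open interiors of faces of $S_x$, so $u\notin R_z^\circ$; and if moreover $u\notin V[F_z]$, then $u$ is not on $\partial R_z$, placing $u$ strictly outside $\overline{R_z}$. To place $v$, note that $v\notin V[S_x]$ (since $V[C_y]\cap V[S_x]=\emptyset$ for $y\succ x$) and $v\in V[\operatorname{bc}(f_y)]$ yields a directed path $Q$ from $v$ to some vertex incident to $f_y$. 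Backward-closedness of $S_x$ forces every vertex of $Q$ to lie outside $V[S_x]$---otherwise a backward walk along $Q$ would put $v$ itself in $V[S_x]$. Thus none of $Q$'s edges lies in $E[S_x]$, and the planar drawing of $Q$ is disjoint from the drawing of $S_x$, so $Q$ stays within a single open face of $S_x$. Since the endpoint of $Q$ is incident to $f_y\subseteq R_z$ (because $f_y\in\dual{G}_z$ from $y\succeq z$), that face is $R_z$, giving $v\in R_z^\circ$.

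To conclude, let $P=u=v_0,v_1,\ldots,v_k=v$ be any $u\tto v$ path. If $u\in V[F_z]$ we are done; otherwise $u$ is strictly outside $\overline{R_z}$. Let $i$ be the smallest index with $v_i\in\overline{R_z}$; such $i$ exists because $v_k=v\in R_z^\circ\subseteq\overline{R_z}$. Then $v_{i-1}$ is strictly outside $\overline{R_z}$ while $v_i\in\overline{R_z}$, so the planar curve realizing the edge $(v_{i-1},v_i)$ must cross $\partial R_z=F_z$. In a planar embedding, two edges meet only at shared endpoints and no vertex of $G$ lies in the interior of another edge, so the crossing point must be an endpoint of $(v_{i-1},v_i)$; since $v_{i-1}$ is strictly outside $\overline{R_z}$, this endpoint must be $v_i$, giving $v_i\in V[F_z]$. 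The main obstacle is the geometric step placing $v$ in $R_z^\circ$, where backward-closedness of $S_x$ is essential; once that is established, the rest is a routine planar separation argument.
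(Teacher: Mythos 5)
Your proof is correct and follows essentially the same outline as the paper's: establish $x\preceq y$ via backward-closedness, take $z$ to be the child of $x$ on the path down to $y$, and argue that $F_z$ is a cycle separating $u$ from $v$. You are considerably more careful than the paper's terse version (which asserts the separation without argument, and also appears to contain a typo, writing $F_x$ where $F_z$ is meant), in particular by using backward-closedness of $S_x$ to place $v$ strictly inside the face $R_z$ of $S_x$ before invoking the Jordan-curve separation.
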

\begin{proof}
  Note that whenever $\reachable{w}{w'}$ with $w'\in C_a$, $w$ must belong to an ancestor of $a$, since $w\in bc(f_a)$. Thus, $x$ is an ancestor of $y$, which means that either $x=y$ or $x$ has a child $z$ that is an ancestor to $y$. But then either $w$ lies on $F_x$, or $F_x$ is a cycle separating $w$ from $w'$. In either case, a path from $w$ to $w'$ must contain a vertex on $F_x$. 
\end{proof}
Since (by theorem~\ref{thm:st-tree-exists}) we can assume the
alternation number is at most $4$, this reduces the reachability
question to the problem of finding the at most $4$ ``last'' vertices
on $F_z\cap{}C_x$ that can reach $v$ and then checking in $C_x$ if $u$
can reach either of them.  In section~\ref{sec:2-frames} we will show
how to do this efficiently when $F_z$ is \emph{a $2$-frame}, that is, has alternation number $2$, and
in section~\ref{sec:4-frames} we will extend this to the case when
$F_z$ is \emph{a $4$-frame}, that is, has alternation number $4$.

\begin{theorem}
There exists a practical RAM data structure that for any planar digraph with $n$ vertices uses $O(n)$ words of $O(\log n)$ bits and can answer reachability queries in constant time. The data structure can be built in linear time.
\end{theorem}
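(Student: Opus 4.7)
The plan is to compose the chain of reductions enumerated in the Techniques paragraph, bottoming out in the Tamassia–Tollis oracle for planar s-t-graphs~\cite{DBLP:journals/tcs/TamassiaT93}, which already achieves linear space, constant query time, and linear construction time. First I would handle an arbitrary planar digraph by running Tarjan's SCC algorithm~\cite{Tarjan72depthfirst} in linear time, contracting each strongly connected component to a single vertex. This yields an acyclic planar digraph $G'$, and reachability in $G$ reduces to reachability in $G'$ together with an $O(n)$-word table mapping each vertex to its component id (since intra-component reachability is trivial). Next I would apply the known reduction from acyclic planar to acyclic planar In-Out graphs from~\cite{DBLP:journals/jacm/Thorup04}, and the reduction from In-Out to single-source graphs developed in Section~\ref{sec:inout}; by construction, each of these reductions increases $n$ by only a constant factor, runs in linear time, preserves planarity, and reduces a reachability query in the original graph to $O(1)$ queries in the reduced graph.

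It then remains to solve the single-source case in linear space with constant query time and linear construction. By Theorem~\ref{thm:st-tree-exists}, $G$ admits a good s-t-decomposition, i.e.\ a rooted tree of truncated s-t-graphs $\{C_x\}$ of height $\Oo(\log n)$ in which every parent frame has alternation number $2$ or $4$. I would build this decomposition in linear time (as proved in Section~\ref{sec:st-tree-construction}), and on each truncated s-t-graph $C_x$ I would install a Tamassia–Tollis oracle so that intra-node reachability queries cost $O(1)$ time and the total space is $\sum_x O(|C_x|)=O(n)$, since the $C_x$'s partition $V$.

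For a query $\reachable{u}{v}$ with $u\in C_x$ and $v\in C_y$, I would use the lemma preceding the statement: if $\reachable{u}{v}$, then $x=y$ or there is a unique child $z$ of $x$ on the root-to-$y$ path such that some $u{\tto}v$ path passes through a vertex of $F_z$. Checking $x=y$ is done by an id comparison together with the intra-node oracle. Otherwise, by the alternation-number bound there are at most $4$ disegments in $F_z$, and it suffices to find the ``last'' vertex on each disegment that still reaches $v$, then ask the intra-node oracle in $C_x$ whether $u$ reaches any of these $\le 4$ candidates. I would obtain these candidates by the level-ancestor-like data structures described in Section~\ref{sec:2-frames} for $2$-frames and Section~\ref{sec:4-frames} for $4$-frames, whose correctness and linear-space constant-query implementation on the practical RAM are the technical content of those sections; the relevant child $z$ is itself found by a level-ancestor query in the decomposition tree, using the fact that $C_y$'s ancestor path through the decomposition has $\Oo(\log n)$ depth but the answer only depends on the immediate child of $x$ on that path.

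The main obstacle is ensuring that every individual piece fits the strict linear-space, constant-time, practical-RAM budget simultaneously: the Tamassia–Tollis oracles, the frame oracles of Sections~\ref{sec:2-frames}–\ref{sec:4-frames}, the level-ancestor structure on the decomposition tree, and the SCC/In-Out/single-source reductions must each avoid any $\log n$ blow-up and each be constructible in linear time. The height bound and alternation bound in Theorem~\ref{thm:st-tree-exists} are precisely what make this achievable, since they control both the number of candidate vertices per query and the depth at which level-ancestor lookups must operate. Assembling the pieces via the reductions above then yields the claimed data structure.
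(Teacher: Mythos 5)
You follow essentially the same overall architecture the paper uses: reduce general planar to acyclic via SCC contraction, then to In-Out graphs via the known reduction, then to single-source via Section~\ref{sec:inout}, and then exploit a good s-t-decomposition of $\Oo(\log n)$ height with $2$- and $4$-frames, Tamassia--Tollis oracles on each truncated s-t-graph $C_x$, and the frame oracles of Sections~\ref{sec:2-frames}--\ref{sec:4-frames} to produce $\le 4$ candidate entry vertices.

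The step you gloss over, and which is actually the technical crux of the paper's own proof, is the query composition. You say: locate the child $z$ of $c[u]$ on the root-to-$c[v]$ path, compute the $\le 4$ ``last'' vertices on the disegments of $F_z$ that reach $v$, then ``ask the intra-node oracle in $C_x$.'' But the candidates $l_i(v),r_i(v)$ (resp.\ $l^\alpha_i(v),r^\alpha_i(v)$) are sources of down-edges crossing $F_z$; they need not lie in $C_{c[u]}$. In particular, in the $2$-frame case with $i=d_2[u]$ a useful candidate $w$ can satisfy $c_2[w]=c_2[u]$ while $c[w]$ is a strict $\mathcal{T}$-descendant of $c[u]$ inside the same contracted $\mathcal{T}_2$-node, so ``can $u$ reach $w$?'' is not an intra-node question but a genuine reachability query through a stack of nested $4$-frames. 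One must therefore argue that this recursion bottoms out in $O(1)$ work. The paper does this with a four-case split keyed on $d_2$ and $d$: a case-4 ($2$-frame) query with $i=d_2[u]$ yields two candidates whose sub-queries are strictly of lower type ($\le 3$), and a case-3 (pure $4$-frame) query with $i=d[u]$ yields four candidates whose sub-queries are strictly of type $\le 2$ (ancestry check or intra-node oracle), giving at most $2$ uses of case~3 and $8$ of cases~1 and~2 per query. Without this bounded case analysis, constant query time does not follow from the subroutines you cite, even though everything else you invoke is the right machinery and the assembly is otherwise sound.
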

\begin{proof}
First, build a good s-t-decomposition of $G$. Such a decomposition exists (Lemma \ref{thm:st-tree-exists}) and can be built in linear time (Lemma \ref{lem:st-linear-time}). Adding DFS pre- and postorder numbers to each node in the tree lets us discover the ancestry relationship between any two vertices in constant time. 
Then, calculate the structures described in Section \ref{sec:2-frames} (in particular $d_2[]$) and Section \ref{sec:4-frames} ($c[]$ and $d[]$).

To answer $\operatorname{reachable}(u,v)$, there are the following cases. Let $x=c[u]$ and $y=c[v]$. 
\begin{enumerate}
\item If $x\not\preceq{}y$, then $u$ cannot reach $v$. 
\item If $x=y$, then the answer is given by the s-t-graph labelling of $C_x$ from~\cite{DBLP:journals/tcs/TamassiaT93}. 
\item If $x\prec{}y$ and $d_2[u]==d_2[v]$ there are no $2$-frames separating $u$ and $v$, but since $x\prec y$ there are $4$-frames. Let $i=d[u]$, then by~\ref{thm:4-frames} we can in constant time compute $l^0_i(v)$, $r^0_i(v)$, $l^1_i(v)$, and $r^1_i(v)$.  If $u$ can reach any of them, them $u$ can reach $v$, otherwise no.
\item Otherwise $x\prec{}y$ and $d_2[u]<d_2[v]$ and there is a $2$-frame separating $u$ and $v$.  Let $i=d_2[u]$, then by~\ref{thm:2n-frames} we can in constant time compute $l_i(v)$ and $r_i(v)$. If $u$ can reach any of them, then $u$ can reach $v$, otherwise no.
\end{enumerate}
Note that the recursive calls in step $3$ only leads to questions of type $<3$, and similarly the recursive calls in step $4$ only leads to questions of type $<4$.  Thus any query uses case $3$ at most twice and case $1+2$ at most $8$ times. Thus we use only constant time per query.
\end{proof}

A consequence of our construction which might be of independent interest is the following:

\begin{theorem}
	If a planar digraph $G$ admits an s-t-decompostion of height $h$ where all frames have alternation number $2$ and $4$, there exists an $O(h\log n)$ bit labelling scheme for reachability with evaluation time $O(h)$
\end{theorem}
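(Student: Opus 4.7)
The plan is to compress the query algorithm of the preceding theorem into per-vertex labels that carry, for each level of the s-t-decomposition, the bits needed for a single $O(h)$-time scan over both labels.

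For a vertex $v$, let $c_0 \prec c_1 \prec \cdots \prec c_k = c[v]$ be the chain of components from the root of the given s-t-decomposition down to $c[v]$. The label of $v$ consists of $k+1 = O(h)$ blocks, each of $O(\log n)$ bits. The block at level $i$ stores an $O(\log n)$-bit identifier of $c_i$ (for instance, DFS pre- and post-order numbers, so that ancestry between two component identifiers can be decided in $O(1)$). For each $i < k$, the block additionally stores the Tamassia--Tollis~\cite{DBLP:journals/tcs/TamassiaT93} reachability labels, within the s-t-graph $C_{c_i}$, of the at most four ``best'' vertices on the frame $F_{c_{i+1}}\cap C_{c_i}$ that can reach $v$: namely $l_i(v), r_i(v)$ when the frame is a $2$-frame, and $l^0_i(v), r^0_i(v), l^1_i(v), r^1_i(v)$ when it is a $4$-frame. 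The innermost block ($i=k$) stores the Tamassia--Tollis label of $v$ itself in $C_{c_k}$. Each block fits in $O(\log n)$ bits, so the whole label has $O(h\log n)$ bits.

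To evaluate $\operatorname{reachable}(u,v)$, scan $u$'s and $v$'s block sequences in parallel (in $O(h)$ time) to either locate the level $i$ where $c[u] = c_i$ appears in $v$'s chain, or conclude that $c[u]$ is not an ancestor of $c[v]$ (and return false). If $c[u]=c[v]$, use the two Tamassia--Tollis labels from the innermost blocks to decide in $O(1)$. Otherwise, $c[u] = c_i$ for some $i<k$: pair the Tamassia--Tollis label of $u$ in $C_{c_i}$ (taken from $u$'s innermost block) with each of the at most four precomputed best-vertex labels in $v$'s level-$i$ block, and return the disjunction of these $O(1)$ checks. This mirrors cases $3$ and $4$ in the proof of the preceding theorem, so correctness follows.

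The substantive point, and the main obstacle in the writeup, is that the ``best'' frame vertices for reaching $v$ depend only on $v$ and on the level index $i$, not on the particular identity of $u$ at that level---which is precisely what Sections~\ref{sec:2-frames} and~\ref{sec:4-frames} establish. Granted this, together with the fact that the Tamassia--Tollis construction yields $O(\log n)$-bit labels for s-t-graph reachability, baking the per-level descriptors into $v$'s label and running the scan above realises the claimed $O(h\log n)$-bit, $O(h)$-time labelling scheme.
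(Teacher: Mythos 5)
The paper states this theorem without proof, as a remark on the construction just described, so there is no paper proof to compare against. Your proposal is the natural one: unroll the constant-time query algorithm level by level into $v$'s label, storing at each level the DFS interval of $c_i$ plus Tamassia--Tollis labels (within $C_{c_i}$) of the $\leq4$ ``best'' frame vertices, and answer a query with a single $O(h)$ scan. This is correct and is almost certainly what the authors had in mind, since, as you note, the crucial fact from Sections~\ref{sec:2-frames} and~\ref{sec:4-frames} is that $l_i(v),r_i(v),l^\alpha_i(v),r^\alpha_i(v)$ depend only on $v$ and $i$.

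One point you should make explicit, because as written it is a genuine (if easily repaired) gap: the vertices $l_i(v)$, $r_i(v)$, etc.\ live on $F_{c_{i+1}}\subseteq S_{c_i}$, so a priori they may lie in $C_{c_j}$ for some $j<i$ rather than in $C_{c_i}$, in which case they have no Tamassia--Tollis label in $C_{c_i}$ to store. You should argue that in this case the label can simply be $\bot$ and that skipping that entry is sound. The argument is: by the ancestry lemma, any $w\in C_{c_i}$ with $w\rightsquigarrow l_i(v)$ forces $c_i\preceq c[l_i(v)]$, hence $l_i(v)\in C_{c_i}$; contrapositively, if $l_i(v)\notin C_{c_i}$ then no vertex of $C_{c_i}$ reaches $l_i(v)$, so no $u\in C_{c_i}$ can exit via that disegment to reach $v$, and it is correct to omit it. With that observation spelled out (and the corresponding one for $r$, $l^\alpha$, $r^\alpha$), each per-level block holds an identifier plus up to four labels-or-$\bot$ in $O(\log n)$ bits, and the $O(h)$-time, $O(h\log n)$-bit scheme follows as you describe.
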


Especially, if a class of planar digraphs have such an s-t-decompositions of constant height, they have an $O(\log n)$ bit labelling scheme for reachability.

\subsection{Constructing an s-t-decomposition}\label{sec:st-tree-construction}

The s-t-decomposition recursively chooses a face $f$ and consequently a subgraph $H=bc(f)$ of the graph $G$ induced by all vertices that can reach a vertex on $f$. Since $G$ was embedded in the plane, the subgraph $H$ is embedded in the plane, and all vertices of $G\setminus H$ lie in a unique face of $H$. We may choose a tree-cotree composition wisely, such that for each face of $H$, the restriction of $T^\ast$ to the subfaces of that face is again a dual spanning tree (Lemma~\ref{lem:tstar-connected}).

We also have to choose $H$ carefully to ensure logarithmic height, and a limited alternation number on the frames. To ensure at most logarithmic height, we show two cases: $2$-frame-nodes have only small children, while for $4$-frame-nodes, we only need to ensure that their $4$-frame children themselves are small. 

\begin{lemma}\label{lem:face-subtree}
  Let $G=(V,E)$ be a plane graph, let $\dual{G}=(\dual{V}, \dual{E})$
  be its dual, let $(T,\dual{T})$ be a tree/cotree decomposition of
  $G$, and let $S$ be a subgraph of $G$ such that $S\cap{}T$ is
  connected.  Then the faces of $S$ correspond to connected components
  of $\dual{T}\setminus\dual{E}[S]$.
\end{lemma}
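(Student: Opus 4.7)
The plan is to prove the lemma by counting components and faces separately, then identifying a natural map between them that equality of counts upgrades to a bijection. On the dual side, $\dual{T}$ is a spanning tree of $\dual{G}$, and $\dual{E}[S]\cap\dual{T}$ is exactly $\{e^*\mid e\in S\setminus T\}$ because $e^*\in\dual{T}$ iff $e\notin T$. Removing these $|E[S\setminus T]|$ edges from the tree $\dual{T}$ leaves a forest on $\dual{V}$ with exactly $|E[S\setminus T]|+1$ connected components, so it suffices to show that $S$ has $|E[S\setminus T]|+1$ faces and that each component of $\dual{T}\setminus\dual{E}[S]$ lies inside a single face of $S$.

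For the face count I would apply Euler's formula $|V[S]|-|E[S]|+|F[S]|=1+|C[S]|$ to the plane subgraph $S$. The hypothesis that $S\cap T$ is connected, together with $S\cap T$ being a subforest of the tree $T$ on vertex set $V[S]$, forces $S\cap T$ to be a spanning tree of $V[S]$, so $|E[S\cap T]|=|V[S]|-1$, and $S$ itself is connected, giving $|C[S]|=1$. Combining this with $|E[S]|=|E[S\cap T]|+|E[S\setminus T]|$, Euler's formula collapses to $|F[S]|=1+|E[S\setminus T]|$, matching the dual component count.

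For the map itself, every edge of $\dual{T}\setminus\dual{E}[S]$ joins two primal faces whose separating primal edge lies outside $S$, so both faces sit inside the same face of $S$; iterating this along any path inside a component shows that each component of $\dual{T}\setminus\dual{E}[S]$ is contained in a single face of $S$, giving a well-defined map from components to faces. Because the two cardinalities agree, this map is a bijection. The main obstacle is the Euler-formula step: extracting $|V[S]|$, $|E[S]|$, and $|C[S]|$ from the connectedness hypothesis on $S\cap T$ is the only place where the assumption is used, and it is what really drives the lemma; everything else is a short bookkeeping argument about which primal edges separate adjacent dual faces.
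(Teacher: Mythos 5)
Your proof is correct in substance but takes a genuinely different route from the paper's. You count: remove $|\dual{E}[S]\cap\dual{T}|=|E[S]\setminus T|$ edges from the tree $\dual{T}$ to get $|E[S]\setminus T|+1$ components, and use Euler's formula (with $S\cap T$ a spanning tree of $V[S]$, so $S$ is connected and $|E[S\cap T]|=|V[S]|-1$) to show $S$ has exactly $|E[S]\setminus T|+1$ faces; then a map from components to faces closes the argument. The paper instead shows directly that the component structures of $\dual{T}\setminus\dual{E}[S]$ and $\dual{G}\setminus\dual{E}[S]$ coincide, via tree--cotree duality: if $f_1,f_2$ are separated in $\dual{T}\setminus\dual{E}[S]$ by an edge $\dual{e}\in\dual{E}[S]\cap\dual{T}$, then the primal edge $e\in E[S]\setminus T$ closes a fundamental cycle in $T$ that lies entirely in $S$ (using connectivity of $S\cap T$), and the dual of that cycle is an edge cut in $\dual{G}$ separating $f_1$ from $f_2$. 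The paper's proof is more structural and immediately yields the natural correspondence; yours is shorter on the structural side but leans on Euler's formula and a cardinality comparison.

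Two small points to tighten. First, ``well-defined map between two finite sets of equal size'' does not by itself give a bijection --- you need to add surjectivity (every face of $S$ contains at least one face of $G$, hence at least one vertex of $\dual{T}$, hence meets some component), after which injectivity follows from the count. Second, you state that $S\cap T$ is a subforest of $T$ \emph{on vertex set $V[S]$}; that is the right reading (and the paper needs the same thing for its ``cycle lies in $S$'' step), but it is worth noting that ``$S\cap T$ connected'' alone does not force $V[S\cap T]=V[S]$ --- it is an implicit strengthening of the hypothesis that both proofs rely on.
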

\begin{proof}
  Let $\dual{S}$ be the dual of $S$, then
  $\dual{S}=\dual{G}/(\dual{G}\setminus\dual{E}[S])$ and the claim is
  equivalent to saying that the components of
  $\dual{G}\setminus\dual{E}[S]$ correspond to the components of
  $\dual{T}\setminus\dual{E}[S]$.  Consider a pair of faces
  $f_1,f_2\in\dual{V}$. Clearly, if they are in separate components of
  $\dual{G}\setminus\dual{E}[S]$, they are also in
  separate components in $\dual{T}\setminus\dual{E}[S]$. On the other hand, suppose
  $f_1$ and $f_2$ are in different components in
  $\dual{T}\setminus\dual{E}[S]$.  Then there exists an edge
  $\dual{e}\in\dual{E}[S]\cap\dual{T}$ separating them.  The
  corresponding edge $e\in{}E[S]$ induces a cycle in $T$, which is
  also part of $S$ since $S\cap{}T$ is connected.  The dual to that
  cycle is an edge cut in $\dual{G}$ that separates $f_1$ from $f_2$.
\end{proof}

\begin{lemma}\label{lem:tstar-connected}
  Let $T$ be a spanning tree where all edges point away from the source $s$ of $G$,
  then for any node $x$ in an st-decomposition of $G$, the subgraph
  $\dual{T}_x$ of $\dual{T}$ induced by $\dual{V}[\dual{G}_x]$
  is a connected subtree of $\dual{T}$.
\end{lemma}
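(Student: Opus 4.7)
The plan is to reduce the claim to Lemma~\ref{lem:face-subtree} applied with $S=S_y$, where $y$ is the parent of $x$ (the root case, where no such $S_y$ exists, follows by taking $S_y=\emptyset$ so that $\dual{T}_x$ is an induced subgraph of the connected $\dual{T}$ restricted to its descendants' face vertices, and a direct observation finishes it). By the s-t-decomposition definition, $\dual{G}_x$ is the connected component of $\dual{G}\setminus\dual{E}[S_y]$ containing $f_x$, so its vertex set is exactly the vertex set of some face of $S_y$. If the hypothesis $S_y\cap T$ connected holds, then Lemma~\ref{lem:face-subtree} directly produces a connected component of $\dual{T}\setminus\dual{E}[S_y]$ on the same vertex set $\dual{V}[\dual{G}_x]$, and this connectedness automatically extends to the (possibly larger) induced subgraph $\dual{T}_x$.

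The technical core of the argument is therefore verifying that $S_y\cap T$ is a connected subtree of $T$ containing $s$. First, $S_y$ is backward-closed: each $\operatorname{bc}(f_z)$ is backward-closed by construction, and a union of backward-closed subgraphs is backward-closed, so the recursive identity $S_y=\operatorname{bc}(f_y)\cup S_{\mathrm{parent}(y)}$ makes $S_y$ backward-closed by induction. Second, since $T$ points away from $s$, every $v\ne s$ has a unique $T$-in-edge $(u,v)$, namely the last edge of the $T$-path from $s$ to $v$; backward-closedness of $S_y$ forces this edge into $S_y$ whenever $v\in V[S_y]$, and inductively along this $T$-path—which terminates at $s$ because $G$ is acyclic with unique source—every vertex and edge of the path lies in $S_y\cap T$. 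Hence $S_y\cap T$ contains the $T$-path from $s$ to every one of its vertices, so it is a connected subtree of $T$ rooted at $s$.

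With this established, Lemma~\ref{lem:face-subtree} applied to $S=S_y$ says the faces of $S_y$ correspond to connected components of $\dual{T}\setminus\dual{E}[S_y]$. The face corresponding to $\dual{G}_x$ thus yields a connected subgraph $C\subseteq\dual{T}\setminus\dual{E}[S_y]$ with vertex set exactly $\dual{V}[\dual{G}_x]$. Since $\dual{T}_x$ is $\dual{T}$ induced on this vertex set, it contains $C$ and so is connected; being a subgraph of the tree $\dual{T}$ it is also acyclic, hence a subtree of $\dual{T}$, as required. The main obstacle is the first step: pinning down that $S_y\cap T$ is connected is where both hypotheses of the lemma (backward closure, and $T$ oriented away from $s$) are genuinely needed; once that is in hand the rest reduces to a quick invocation of Lemma~\ref{lem:face-subtree} and the elementary fact that an induced subgraph of a tree is a subtree iff it is connected.
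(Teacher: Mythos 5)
Your proof follows the same route as the paper's: apply Lemma~\ref{lem:face-subtree} with $S=S_y$, after observing that $S_y\cap T$ is connected because $S_y$ is a union of backward-closed graphs and $T$ is oriented away from $s$. You flesh out the paper's one-line justification of the connectedness of $S_y\cap T$ (the uniqueness of the $T$-in-edge at each non-source vertex and the induction up the $T$-path), but the key lemma, the decomposition into root/non-root cases, and the overall logic are identical.
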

\begin{proof}
  If $x$ is the root, 
  this trivially holds. If $x$ has a parent $y$,
  $\dual{G}_x$ corresponds to a face in $S_y$.  Now $S_y\cap{}T$ is
  connected since $S_y$ is the union of backward-closed graphs, and
  the result follows from Lemma~\ref{lem:face-subtree}.
\end{proof}

\begin{lemma}\label{lem:2-face-split}
  Let $x$ be a node in an st-decomposition whose parent frame $F_x$
  has alternation number $2$, and let $\dual{A}$ be the set of faces
  in $\dual{T}_x$ incident to the target corner of $F_x$.  Then for
  any child $y$ of $x$:
  \begin{alignat*}{3}
    \dual{A}&\subseteq\dual{V}[\dual{T}_y]
    &&\quad\implies\quad&
    &F_y\text{ has alternation number }4.
    \\
    \dual{A}&\not\subseteq\dual{V}[\dual{T}_y]
    &&\quad\implies\quad&
    &F_y\text{ has alternation number }2.
  \end{alignat*}
\end{lemma}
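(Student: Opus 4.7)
The plan is to analyze $F_y$'s alternation via a careful corner count, using the structure of $S_x$-edges at the target vertex $t_x$. Using the backward closure of $S_x$ and of $S_{\text{parent}(x)}$, I would first show that at $t_x$ in the interior of $F_x$, the $S_x$-edges consist exactly of the two incoming $F_x$-edges (forming the target corner of $F_x$) plus some $k\ge 0$ outgoing $C_x$-edges; no $C_x$-edge can be incoming to $t_x$, for that would force its other endpoint into $V[S_{\text{parent}(x)}]$, contradicting its membership in $V[C_x]$. These $k$ edges subdivide the target corner of $F_x$ into $k+1$ sub-corners lying in distinct faces of $S_x$, so the $G$-faces in different sub-corners belong to different children's $\dual{G}_c$ (they are separated by $E[S_x]$-edges). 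Hence $\dual{A}\subseteq \dual{V}[\dual{T}_y]$ holds iff $k=0$ and the unique $G$-face at the target corner lies in $\dual{G}_y$; equivalently, the face $\phi_y$ of $S_x$ corresponding to $\dual{G}_y$ covers the entire target corner at $t_x$, so $F_y$ uses both $F_x$-edges at $t_x$.

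For the alternation-$4$ direction, I would trace $F_y$ around $\phi_y$: at $t_x$ both $F_y$-edges are incoming, giving a target corner; $F_y$ follows the $L$-arc downward from $t_x$ to a vertex $w_L$ where it leaves $F_x$ via an outgoing down-edge, giving a source corner at $w_L$ (both the $L$-edge upward and the outgoing down-edge are outgoing at $w_L$), and symmetrically $F_y$ arrives at $t_x$ via $R$ from a source vertex $w_R$. This contributes $2$ source and $1$ target corner at $V[F_x]$-vertices. Since contracting $V[S_{\text{parent}(x)}]$ to $s'$ preserves the in/out direction of every edge at every $V[C_x]$-vertex, the corners of $\phi_y$ at $V[C_x]$-vertices in $S_x$ match those in the s-t-graph $(S_x/S_{\text{parent}(x)})\cup\{t'\}$. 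Since faces of this s-t-graph have alternation $2$, $\phi_y$ has exactly $1$ target corner there; since $\phi_y\ne f_x$ means $t'\notin \phi_y$, and $s'$ (being the source) has only outgoing edges and thus no target corners, this target must lie at a $V[C_x]$-vertex, contributing $1$ target on the $C_x$-portion of $F_y$ in $S_x$. Source-target balance on the face cycle $F_y$ then forces the $C_x$-portion to contribute $0$ source corners, giving total alternation $2\cdot 2=4$.

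For the alternation-$2$ direction, either $F_y$ avoids $F_x$ entirely (then $\phi_y$ is an internal face of the s-t-graph, alternation $2$ directly), or $F_y$ touches $F_x$ in a single arc not using both $F_x$-edges at $t_x$. In the latter sub-case, the arc's lower $F_x$-endpoint is a source and its upper endpoint a transit (or $t_x$ itself is a transit when only one of its two $F_x$-edges lies on $F_y$), yielding $1$ source and $0$ target on $V[F_x]$; by the same s-t-graph argument the $C_x$-portion contributes $1$ target, and balance yields total alternation $2$. The main obstacle is the bookkeeping in the alternation-$4$ case: carefully arguing, via the alternation-$2$ property of faces in the contracted s-t-graph together with the source-target balance on $F_y$, that the $C_x$-portion contributes exactly one target and no source corners, despite the contraction of $V[S_{\text{parent}(x)}]$ potentially producing self-loops at $s'$ that complicate the face structure.
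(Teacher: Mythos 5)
Your proof is correct and follows essentially the same route as the paper's: both reduce the claim to the observation that in the contracted s-t-graph $(S_x/S_{\text{parent}(x)})\cup\{t'\}$ every face has alternation $2$, so the $C_x$-portion of $F_y$ is two directed paths meeting at a single new target $t_y$, while any further target corner of $F_y$ can only sit at $t_x$, which happens exactly when $\dual{A}\subseteq\dual{V}[\dual{T}_y]$. The paper states this structural fact about $F_y$ directly and concludes in two lines; you unpack it into an explicit corner count, which is fine (you even correctly flag the self-loop/contraction subtlety). Only a small bookkeeping remark: in the $\dual{A}\not\subseteq\dual{V}[\dual{T}_y]$ branch you should also allow the degenerate subcase where $F_y$ meets $F_x$ only in the single vertex $t_x$, flanked by two down-edges; there $t_x$ contributes a \emph{source} (not transit) corner, and the $F_x$-arc contributes no other corner, but the balance still gives alternation $2$.
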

\begin{proof}
  Let $t_x$ be the target corner of $F_x$ and let $\dual{A}$ be the
  set of faces in $\dual{T_x}$ incident to $t_x$.
  For any child $y$ if $x$, $F_y$ consists of a (possibly empty)
  segment of $F_x$ and two directed paths that meet at a new target
  corner $t_y$.  Each target corner of $F_y$ must therefore be at
  either $t_x$ or $t_y$.
  Now if $\dual{A}\subseteq\dual{V}[\dual{T}_y]$, then both $t_x$ and
  $t_y$ are target corners of $F_y$, otherwise only $t_y$ is.  Either
  way the result follows.
\end{proof}

\begin{lemma}\label{lem:4-face-split}
  Let $x$ be a node in an st-decomposition whose parent frame $F_x$
  has alternation number $4$, and let $\dual{A^0}$ and $\dual{A^1}$ be
  the sets of faces in $\dual{T}_x$ incident to the target corners of
  $F_x$.  Then for any child $y$ of $x$:
  \begin{alignat*}{3}
    \dual{A^0}\not\subseteq\dual{V}[\dual{T}_y]&\vee{}\dual{A^1}\not\subseteq\dual{V}[\dual{T}_y]
    &&\quad\implies\quad&
    &F_y\text{ has alternation number at most }4.
    \\
    \dual{A^0}\not\subseteq\dual{V}[\dual{T}_y]&\wedge{}\dual{A^1}\not\subseteq\dual{V}[\dual{T}_y]
    &&\quad\implies\quad&
    &F_y\text{ has alternation number }2.
  \end{alignat*}
\end{lemma}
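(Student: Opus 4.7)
The plan is to mirror the proof of Lemma~\ref{lem:2-face-split}, but tracking two possible ``inherited'' target corners rather than one. Let $t^0_x$ and $t^1_x$ denote the two target corners of $F_x$, so $\dual{A^i}$ is the set of faces in $\dual{T}_x$ incident to $t^i_x$. For any child $y$ of $x$, the frame $F_y$ is the face cycle in $S_x$ corresponding to $\dual{G}_y$, hence $F_y$ consists of (possibly empty) sub-arcs of $F_x$ glued together with arcs lying in $C_x=\operatorname{bc}(f_x)\setminus S_y$. Because $C_x$ is a truncated s-t-graph, the same argument used in Lemma~\ref{lem:2-face-split} shows that the $C_x$-arcs of $F_y$ form two directed paths meeting at a single new target corner $t_y$. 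Consequently every target corner of $F_y$ lies in $\set{t^0_x,\,t^1_x,\,t_y}$.

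Next, I would prove the geometric characterisation: $t^i_x$ is a target corner of $F_y$ if and only if $\dual{A^i}\subseteq\dual{V}[\dual{T}_y]$. The faces in $\dual{A^i}$ are precisely the faces in $\dual{T}_x$ immediately enclosed by $F_x$ at the corner $t^i_x$. Using Lemma~\ref{lem:face-subtree} (and Lemma~\ref{lem:tstar-connected} to know $\dual{T}_y$ is a subtree of $\dual{T}$), the corner $t^i_x$ survives on $F_y$ exactly when all faces of $\dual{A^i}$ lie on the $y$-side of $F_y$, i.e., when $\dual{A^i}\subseteq\dual{V}[\dual{T}_y]$. This is the same reasoning as in Lemma~\ref{lem:2-face-split}, applied locally and independently around each of the two target corners of $F_x$.

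Combining these two facts yields both implications. If at least one of $\dual{A^0}$, $\dual{A^1}$ is not contained in $\dual{V}[\dual{T}_y]$, then the corresponding $t^i_x$ is not a target corner of $F_y$, leaving at most two target corners ($t_y$ together with at most one of $t^0_x,t^1_x$), so $F_y$ has alternation number at most $4$. If both containments fail, only $t_y$ remains as a target corner of $F_y$, so its alternation number is exactly $2$.

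The step I expect to be the main obstacle is the structural claim in the first paragraph: that the $C_x$-arcs appearing on $F_y$ form exactly two directed paths meeting at a single new target corner $t_y$, just as in the 2-frame setting. With a $4$-frame parent one must check that no additional new target corners are forced at the points where new paths in $C_x$ attach to the surviving segments of $F_x$. This should reduce to $C_x$ being a truncated s-t-graph: its arcs carry a consistent global ``upward'' direction induced by the dummy source/target, so the face of $S_x$ bounded by $F_y$ acquires exactly one local sink strictly inside $C_x$, namely $t_y$, while all other direction changes on $F_y$ occur at vertices of $F_x$ and hence coincide with the inherited corners $t^0_x, t^1_x$.
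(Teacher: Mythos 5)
Your proposal is correct and follows essentially the same route as the paper: the paper's proof likewise observes that $F_y$ consists of a (possibly empty) segment of $F_x$ plus two directed paths in $C_x$ meeting at a single new target corner $t_y$, concludes that every target corner of $F_y$ is one of $t_y$, $t^0_x$, $t^1_x$, and then uses $\dual{A^i}\not\subseteq\dual{V}[\dual{T}_y]\implies t^i_x$ not a target corner of $F_y$ to count corners. The only difference is that you spell out the potential worry about the $C_x$-arcs more explicitly (and state the $t^i_x$-survival criterion as a biconditional, though only one direction is used), whereas the paper asserts the structural claim directly as the analogue of Lemma~\ref{lem:2-face-split}.
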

\begin{proof}
  Let $t^0_x$ and $t^1_x$ be the two target corners of $F_x$ and for
  $i\in\set{0,1}$ let $\dual{A^i}$ be the set of faces in $\dual{T_x}$
  incident to $t^i_x$.
  For any child $y$ of $x$, $F_y$ consists of a (possibly empty)
  segment of $F_x$ and two directed paths that meet at a new target
  corner $t_y$.  Each target corner of $F_y$ must therefore be at
  either $t_y$, $t^0_x$, or $t^1_x$.
  Now if $\dual{A^i}\not\subseteq\dual{V}[\dual{T}_y]$ for some
  $i\in\set{0,1}$, then $t^i_x$ is not a target corner of $F_y$.  So
  the number of target corners in $F_y$ is at least $1$, and at most
  $3$ minus the number of such $i$, and the result follows.
\end{proof}

\begin{proof}[proof of theorem~\ref{thm:st-tree-exists}]
  Let $s$ be the source of $G$ and let $(T,\dual{T})$ be a tree/cotree
  decomposition of $G$ such that all edges in $T$ point away from $s$.
  The st-decomposition can be constructed recursively as follows.
  Start with the root.  In each step we have a node $x$ and by
  Lemma~\ref{lem:tstar-connected} the subgraph $\dual{T}_x$ induced in
  $\dual{T}$ by $\dual{V}[\dual{G}_x]$ is a tree.  The goal is to
  select a face $f_x$ such that for each child $y$:
  \begin{itemize}
  \item The alternation number of $F_y$ is at most $4$, and
  \item For each child $z$ of $y$ (and thus grandchild of $x$),
    $|\dual{T}_z|\leq\frac{1}{2}|\dual{T}_x|$.
  \end{itemize}
  If we can do this for all $x$, we are done.  There are $3$ cases:
  \paragraph{$x$ is the root} Let $f_x$ be the median of
  $\dual{T}_x=\dual{T}$.  Then for each child $y$,
  $|\dual{T}_y|\leq\frac{1}{2}|\dual{T}_x|$, and, since $S_x=\operatorname{bc}(f_x)$ is a
    truncated s-t-graph with a single source, $f_y$ has alternation number $2$.

  \paragraph{$F_x$ has alternation number $2$} 
  Let $f_x$ be the median of $\dual{T}_x$. Then for each child $y$,
  $|\dual{T}_y|\leq\frac{1}{2}|\dual{T}_x|$, and, by
  Lemma~\ref{lem:2-face-split}, $f_y$ has alternation number at most
  $4$.  

  \paragraph{$F_x$ has alternation number $4$} 
  Let $t_0$ and $t_1$ be the
    local targets of $F_x$ and let $f_0,f_1\in\dual{V}[\dual{T}_x]$ be
    (not necessarily distinct) faces incident to $t_0$ and $t_1$
    respectively.  Now choose $f_x$ as the projection of the median
    $m$ of $\dual{T}_x$ on the path $f_0,\ldots,f_1$ in $\dual{T}_x$.
    By Lemma~\ref{lem:4-face-split} this means that for any child $y$
    of $x$, the alternation number of the parent frame $F_y$ is at
    most $4$.
\\- If $f_x=m$ then 
    $|\dual{T}_y|\leq\frac{1}{2}|\dual{T}_x|$.
%
\\- If $f_x\neq{}m$ and $\dual{T}_y$ is not the component
    of $m$ in $\dual{T}_x\setminus\dual{E}[\operatorname{bc}(f_x)]$,
    then 
    $|\dual{T}_y|\leq\frac{1}{2}|\dual{T}_x|$.
%
\\- If $f_x\neq{}m$, and $\dual{T}_y$ is the component of $m$ in
    $\dual{T}_x\setminus\dual{E}[\operatorname{bc}(f_x)]$, then
    $\dual{T}_y$ contains neither $f_0$ nor $f_1$, so by
    Lemma~\ref{lem:4-face-split} the parent frame $F_y$ has
    alternation number at most $2$ and we have just shown this means
    any child $z$ of $y$ has
    $|\dual{T}_z|\leq\frac{1}{2}|\dual{T}_y|\leq\frac{1}{2}|\dual{T}_x|$.
\end{proof}

Note that this construction can be implemented in linear time by using
ideas similar to~\cite{Alstrup97optimalon-line}.

\subsection{Constructing a good s-t-decomposition in linear time}
In the construction of an s-t-decomposition, a face is chosen, some edges are deleted, and new connected components of the dual graph arise. We then recurse on the new connected components of the dual graph. 
By Lemma~\ref{lem:tstar-connected} we can choose a tree/cotree-decomposition such that each component that arises is spanned by a subtree of the cotree.

To obtain linear construction time, we use a variation of the decremental tree connectivity algorithm from~\cite{Alstrup97optimalon-line} to keep track of the subtrees of the cotree, and associate some
information with each subtree. In particular, when $\dual{T}_x$ is a component at some point, we can in constant time find the node $x$.

For each node $x$ we keep the
set of target vertices on $F_x$ (or $\emptyset$ if $x$ is the root),
and a face in $\dual{T}_x$ incident to each target in the set.

Build a top tree (see \cite{Alstrup:2005}) of height $O(\log n)$ over $\dual{T}$, 
and let $\dual{v}_{n-i}$ be the $i$'th face that stops being boundary 
during the construction. Using this enumeration, the boundary faces of a cluster will be visited before boundary faces of their descendants. We use this ordering to find the splitting faces of the s-t-decomposition.

For each $\dual{v}_i$, we can use the connectivity structure to find
the relevant node $x$ to split.  We then need to choose the target face $f_x$ defining the split.  
If $x$ is the root or $F_x$ is a $2$-frame, we just set $f_x=\dual{v}_i$.
If $F_x$ is a $4$-frame, the
information in $x$ contains a pair of faces $f_1,f_2$ and we use
a static nearest common ancestor data structure from Harel and
Tarjan~\cite{DBLP:journals/siamcomp/HarelT84} to find the projection $f_x=\pi(\dual{v}_i)$ of $\dual{v}_i$ on
$f_1,\cdots,f_2$. Note that the projection of $\dual{v}_i$ is always contained in the same connected component as $f_1,f_2$, and thus, the data structure for the whole tree suffices to answer this query for the particular subtree.

Once $f_x$ has been selected, we traverse the graph backwards from the
vertices of $f_x$ until we have found all the edges with destination in
$C_x$.  This search takes $\sizeof{C_x}$ time.  We delete these edges
from the forest as we go along.  Once we are done, we take all targets
in $C_x$ and select an incident face for each component it is incident
to. This again takes $\sizeof{C_x}$ time.  If $f_x\neq\dual{v}_i$ we try with $\dual{v}_i$ again, otherwise we move on to $\dual{v}_{i+1}$.

\begin{lemma}\label{lem:st-alternation-number}
	The s-t-decomposition constructed via the approach sketched above has no frame of alternation number $>4$.
\end{lemma}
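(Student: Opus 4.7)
The plan is to prove the lemma by induction on the order in which the algorithm processes nodes, verifying that the algorithm's online choice of $f_x$ always satisfies the hypothesis of Lemma~\ref{lem:2-face-split} or Lemma~\ref{lem:4-face-split}. The case analysis parallels that in the proof of Theorem~\ref{thm:st-tree-exists}. When $x$ is the root, each child $y$ has $F_y$ equal to a face of $\mathrm{bc}(f_x)$, which as a truncated s-t-graph has all faces of alternation number $2$. When $F_x$ has alternation number $2$, the algorithm sets $f_x=\dual{v}_i$ for an arbitrary face of $\dual{T}_x$, and Lemma~\ref{lem:2-face-split} immediately gives each child a parent frame of alternation number $2$ or $4$.

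The interesting case is $F_x$ with alternation number $4$ and stored incident faces $f_0, f_1 \in \dual{V}[\dual{T}_x]$ for the two target corners $t^0_x, t^1_x$. Here the algorithm picks $f_x = \pi(\dual{v}_i)$, the projection of $\dual{v}_i$ onto the unique path $P$ from $f_0$ to $f_1$ in $\dual{T}_x$. The task is to verify Lemma~\ref{lem:4-face-split}'s hypothesis: for every child $y$ of $x$, at least one of $\dual{A^0}, \dual{A^1}$ fails to be contained in $\dual{V}[\dual{T}_y]$. The key observation is that every dual edge incident to $f_x$ corresponds to a primal edge on the boundary of $f_x$, which trivially lies in $\mathrm{bc}(f_x)$; hence after deleting $\dual{E}[\mathrm{bc}(f_x)]$ from $\dual{T}_x$ the vertex $f_x$ becomes isolated and is assigned to $x$ rather than to any child. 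Because $f_x$ lies on $P$, this separates the endpoints $f_0$ and $f_1$ into distinct subpaths, so no child component $\dual{T}_y$ can contain both of them; in the degenerate subcase $f_x \in \{f_0, f_1\}$, no child contains $f_x$ at all. Since $f_j \in \dual{A^j}$, the required condition holds at every child.

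The main obstacle is this last case, because the algorithm's greedy choice of $f_x$ differs from the median used in the existence proof and must still satisfy Lemma~\ref{lem:4-face-split}. The saving grace is that this lemma only needs one of the two target corners of $F_x$ to disappear in each child, and projection onto the $f_0$-to-$f_1$ path achieves this regardless of where $\dual{v}_i$ sits in $\dual{T}_x$. Along the way one must also verify the invariant that the stored faces $f_0, f_1$ actually belong to $\dual{V}[\dual{T}_x]$ at the time $x$ is processed; this is immediate from the description of the algorithm, which picks such an incident face inside each newly formed component the moment a node is created and only discards the information when $x$ itself is split.
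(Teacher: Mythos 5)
Your proof is correct and follows the same basic approach as the paper's, but where the paper disposes of the lemma in two terse sentences (essentially just citing the projection onto the $f_0$-to-$f_1$ cotree path), you have supplied the missing justification: the observation that every dual edge incident to $f_x$ is in $\dual{E}[\operatorname{bc}(f_x)]$, so $f_x$ becomes isolated after the split and therefore separates $f_0$ from $f_1$ in the cotree, which is exactly what is needed to invoke Lemma~\ref{lem:4-face-split} for every child. (The paper's proof cites Lemma~\ref{lem:2-face-split} in the $4$-frame case, which is evidently a typo for Lemma~\ref{lem:4-face-split}; your version applies the correct lemma.) Your additional point — that the invariant $f_0,f_1\in\dual{V}[\dual{T}_x]$ must be maintained as components split, and that it holds because the algorithm refreshes the stored incident faces whenever a node is created — is a real detail the paper glosses over. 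In short: same route, carefully filled in.
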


\begin{proof}
	Components with $2$-frames always have children with $2$- and $4$-frames. For components with $4$-frames, this follows directly from Lemma \ref{lem:2-face-split}, since we chose a splitting face on the cotree path between faces near the two targets.
\end{proof}

\begin{lemma}\label{lem:st-height}
	The s-t-decomposition constructed via the approach sketched above has height $O(\log n)$.
\end{lemma}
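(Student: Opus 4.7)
The plan is to show that every two consecutive steps down the s-t-decomposition at least halve the cotree size $|\dual{T}_x|$. Since $|\dual{T}_{\text{root}}|\le n$ and every leaf has $|\dual{T}|\ge 1$, this immediately gives depth at most $2\log_2 n + O(1) = O(\log n)$. This mirrors the three-case analysis in the proof of Theorem~\ref{thm:st-tree-exists}, applied to the concrete choices of $f_x$ made by the linear-time procedure.

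The pivotal preparatory step is to extract the right balance property from the top-tree enumeration $\dual{v}_1,\dual{v}_2,\ldots,\dual{v}_n$. Since $\dual{v}_{n-i}$ is the $i$'th face to stop being a cluster boundary during the bottom-up construction of a top tree of depth $O(\log n)$, the reverse order $\dual{v}_1,\dual{v}_2,\ldots$ is effectively a top-down traversal of the top tree. I would verify, by a straightforward induction on the top-tree structure, that whenever $\dual{v}_i$ is the first face of the enumeration falling into a component $\dual{T}_x$ that has just arisen in the construction, $\dual{v}_i$ is a \emph{centroid} of $\dual{T}_x$: removing it leaves pieces each of size at most $\tfrac{1}{2}|\dual{T}_x|$. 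This is exactly the centroid property that the recursive construction of Theorem~\ref{thm:st-tree-exists} relies on.

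Given the centroid property of $\dual{v}_i$, the case analysis is the same as in Theorem~\ref{thm:st-tree-exists}. If $x$ is the root or $F_x$ is a $2$-frame, the linear-time rule sets $f_x = \dual{v}_i$ directly, so every child $y$ satisfies $|\dual{T}_y|\le \tfrac{1}{2}|\dual{T}_x|$. If $F_x$ is a $4$-frame, we set $f_x = \pi(\dual{v}_i)$, the projection of the centroid onto the cotree path between the two faces $f_0,f_1$ near the two targets. Every child $y$ whose cotree does not contain $\dual{v}_i$ already has $|\dual{T}_y|\le \tfrac12|\dual{T}_x|$, and for the at most one exceptional child $y_0$ whose cotree still contains $\dual{v}_i$, Lemma~\ref{lem:4-face-split} forces $F_{y_0}$ to have alternation number $2$, so by the previous case every grandchild $z$ of $x$ through $y_0$ satisfies $|\dual{T}_z|\le \tfrac{1}{2}|\dual{T}_{y_0}|\le \tfrac{1}{2}|\dual{T}_x|$.

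The main obstacle is the first step: making the centroid property of the top-tree enumeration rigorous, since it is the only ingredient not already established in the recursive analysis. Once that lemma is in hand, the two-step halving invariant holds uniformly, and the $O(\log n)$ bound is a telescoping exponent count.
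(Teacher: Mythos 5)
The gap in your proposal is the centroid claim, which you flag as the ``main obstacle'' but do not establish, and which in fact does not hold for top trees in general. A top tree of height $O(\log n)$ guarantees that the recursive \emph{depth} of the cluster hierarchy is logarithmic, but it does not guarantee that the boundary vertex a cluster exposes is a centroid of the subtree it spans. The standard top-tree construction mixes \emph{compress} operations on paths with \emph{rake} operations that absorb single pendant edges; a sequence of rakes can make the partition at a given boundary vertex arbitrarily lopsided in size even though the overall structure is shallow. Since the halving invariant is the only engine of your telescoping bound, the argument has no floor under it once the centroid claim is withdrawn. This is not a cosmetic defect: replacing ``centroid'' by the correct size property of top-tree boundary vertices would require proving a quantitative balance lemma that the paper never needs and that, in this form, is simply false.

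The paper's own proof sidesteps sizes entirely and argues about depth. Its observation is that the enumeration $\dual{v}_1,\dual{v}_2,\ldots$ (by the order in which faces cease to be cluster boundaries) is compatible with the top-tree hierarchy, so that ``the boundary faces of a cluster will be visited before boundary faces of their descendants.'' Consequently, if the split face were always $\dual{v}_i$, each step of the s-t-decomposition would descend at least one level of the top tree, giving height $\le$ top-tree height $=O(\log n)$ directly, with no appeal to halving. The only correction needed is for $4$-frames, where $f_x=\pi(\dual{v}_i)\neq\dual{v}_i$; there, Lemma~\ref{lem:2-face-split} forces $\dual{v}_i$ into a child with a $2$-frame, and that child immediately uses $\dual{v}_i$ as its split face, so the ``delay'' costs at most one extra level per top-tree level, hence a factor $2$. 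You reproduced the factor-$2$ bookkeeping correctly, but you reached for a size invariant (centroid halving) where the paper only needs, and only has, a depth invariant from the top-tree ordering.
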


\begin{proof}
	Since the top-tree has height $O(\log n)$, choosing the boundary face $\dual{v}_i$ as a splitting face every time would result in a tree of the same height; $O(\log n)$. However, for each $4$-frame, we might choose a face $f_x\neq\dual{v}_i$ which is the projection of $\dual{v}_i$ on $f_1 \ldots f_2$.  As noted in Lemma~\ref{lem:2-face-split}, when this happens, $\dual{v}_i$ will lie in a child which has a $2$-frame. But then, $\dual{v}_i$ will be the splitting face for that child. We thus increase the height by no more than a factor $2$, and the s-t-decomposition has height $2 O(\log n) = O(\log n)$.
\end{proof}


\begin{lemma}\label{lem:st-linear-time}
	Let $G=(V,E)$ be a plane single-source graph with source $s$, then we can construct a good s-t-decomposition of $G$ in linear time.
\end{lemma}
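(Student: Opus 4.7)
The plan is to verify that the construction sketched in the preceding paragraphs produces a good s-t-decomposition and runs in linear time. Correctness of the shape of the decomposition is already covered: Lemma~\ref{lem:st-alternation-number} gives the alternation number bound and Lemma~\ref{lem:st-height} gives the $O(\log n)$ height, so what remains is the time analysis and the justification that every step in the sketch can indeed be realized as promised.

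First I would set up the static pieces. Compute a BFS/DFS spanning tree $T$ of $G$ rooted at $s$ with all edges directed away from $s$; this takes $O(n)$ time. Build the dual $\dual{G}$ with its cotree $\dual{T} = \dual{E}\setminus E[T]$ in linear time, and then build a top tree of height $O(\log n)$ over $\dual{T}$ using~\cite{Alstrup:2005} in linear time. The top-tree construction yields the enumeration $\dual{v}_1,\dots,\dual{v}_n$ of faces in which boundary faces of a cluster appear before boundary faces of its descendants, and this is exactly the order in which we will process splits. Finally, preprocess a static nearest common ancestor structure on $\dual{T}$ using~\cite{DBLP:journals/siamcomp/HarelT84} in linear time, which also gives $O(1)$ time projection of any face onto any ancestor-descendant path in $\dual{T}$, and initialize a decremental tree connectivity structure on $\dual{T}$ following~\cite{Alstrup97optimalon-line} so that each deletion and each "find the current component of a face" query runs in $O(1)$ time.

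Next I would argue that processing one candidate face costs $O(1)$ plus $O(\sizeof{C_x})$ whenever it actually causes a split. Given $\dual{v}_i$, a connectivity query locates the current node $x$ in $O(1)$ time. If $x$ is the root or $F_x$ is a $2$-frame, set $f_x=\dual{v}_i$ and split; if $F_x$ is a $4$-frame, retrieve the stored pair $f_1,f_2$ from $x$ and use NCA to compute the projection $f_x = \pi(\dual{v}_i)$ in $O(1)$. Performing the split means traversing backward from the vertices of $f_x$ to discover $\operatorname{bc}(f_x)\setminus S_y = C_x$, deleting the corresponding primal edges, deleting the corresponding dual edges from the connectivity structure, and updating the stored target/face information at each newly created child; each of these subtasks is linear in $\sizeof{C_x}$ using the constant-time primitives above. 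If $f_x \neq \dual{v}_i$ we simply retry with the same $\dual{v}_i$ on the new component containing it, and otherwise move on to $\dual{v}_{i+1}$; in either case, each retry is charged to the split it triggers.

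Finally I would bound the total cost. Because every vertex belongs to exactly one $C_x$ and the graph is planar, $\sum_x \sizeof{C_x} = O(n)$, so the total split work is $O(n)$. The number of candidate faces ever examined is $O(n)$, each contributing $O(1)$ non-split work via the connectivity, top-tree, and NCA data structures, so the overall running time is $O(n)$. The main obstacle I anticipate is the bookkeeping at the split step: one must argue that the target-vertex and "incident face per target component" information stored at each newly created child can be initialized during the same traversal that discovers $C_x$, and that the decremental connectivity structure of~\cite{Alstrup97optimalon-line} can indeed deliver constant-time component identification for the successive deletions of edges in $\dual{E}[\operatorname{bc}(f_x)]$; both follow from the fact that $S_y\cap T$ stays connected (Lemma~\ref{lem:tstar-connected}) so that every component arising during the process is spanned by a subtree of $\dual{T}$, which is exactly the invariant exploited by the top-tree based connectivity structure.
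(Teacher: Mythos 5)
Your proof is correct and follows essentially the same route as the paper's: same primitives (linear-time top tree over $\dual{T}$, Alstrup--Secher--Spork decremental tree connectivity, Harel--Tarjan NCA for the $4$-frame projection), same per-split charging with $\sum_x\sizeof{C_x}=O(n)$, and the same appeal to Lemmas~\ref{lem:st-alternation-number} and~\ref{lem:st-height} for goodness. Your write-up is in fact somewhat more explicit than the paper's own proof about the retry bookkeeping and the invariant from Lemma~\ref{lem:tstar-connected} that makes the cotree connectivity structure applicable.
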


\begin{proof}
	Since the top-tree can be constructed in linear time, and since the decremental connectivity for trees takes linear time, and since the static nearest common ancestor data stucture is constructed in linear time and answers queries in constant time, the construction takes linear time.
	By Lemma \ref{lem:st-alternation-number} and \ref{lem:st-height}, the resulting s-t-decomposition is good.
\end{proof}

%
%

\subsection{2-frames}\label{sec:2-frames}

\begin{definition}
  Let $\mathcal{T}$
  be an st-decomposition of $G=(V,E)$.  Then we can define a
  \emph{$2$-frame-decomposition} $\mathcal{T}_2$ by contracting each
  edge in $\mathcal{T}$ that corresponds to a $4$-frame.  For each
  node $x$ in $\mathcal{T}_2$ that is contracted from a set of nodes
  $Y\subseteq\mathcal{T}$ define $C_x:=\bigcup_{y\in{}Y}C_y$ and if
  $x$ is not the root, define $F_x:=F_{\lca(Y)}$ and
  $E_x:=E_{\lca(Y)}$.  Then $F_x$ is a $2$-frame, and we can define
  $s_x$ to be the source corner, and $t_x$ to be the target corner on
  $F_x$.
\end{definition}


\begin{definition}\label{def:2n-global-LR-partition}
  Let $(\mathcal{L},\mathcal{R})$ be the partition of
  $\cup_{x\in\mathcal{T}_2}E_x$ defined as follows: For each
  $(u,v)\in\cup_{x\in\mathcal{T}_2}E_x$ let $y$ be the node (if it
  exists) closest to the root of $\mathcal{T}_2$ such that
  $(u,v)\in{}E_y$ but $u$ is not the target vertex of $F_y$.  If $y$
  exists and $(u,v)$ is incident to a corner on the clockwise
  disegment of $F_y$ between $s_y$ and $t_y$ assign $(u,v)$ to
  $\mathcal{R}$, otherwise assign $(u,v)$ to $\mathcal{L}$.
\end{definition}





\begin{definition}
  Let $\mathcal{T}_2$ be an $2$-frame-decomposition of $G=(V,E)$.  For any
  vertex $v\in{}V$ define:
  \begin{align*}
    c_2[v]&:=\text{The node }x\text{ in }\mathcal{T}_2\text{ such that }v\in{}V[C_x]
    \\
    d_2[v]&:=\text{The depth of }c_2[v]\text{ in }\mathcal{T}_2
  \end{align*}
\end{definition} 
  
\begin{figure}
\centering
\includegraphics[width=0.4\linewidth]{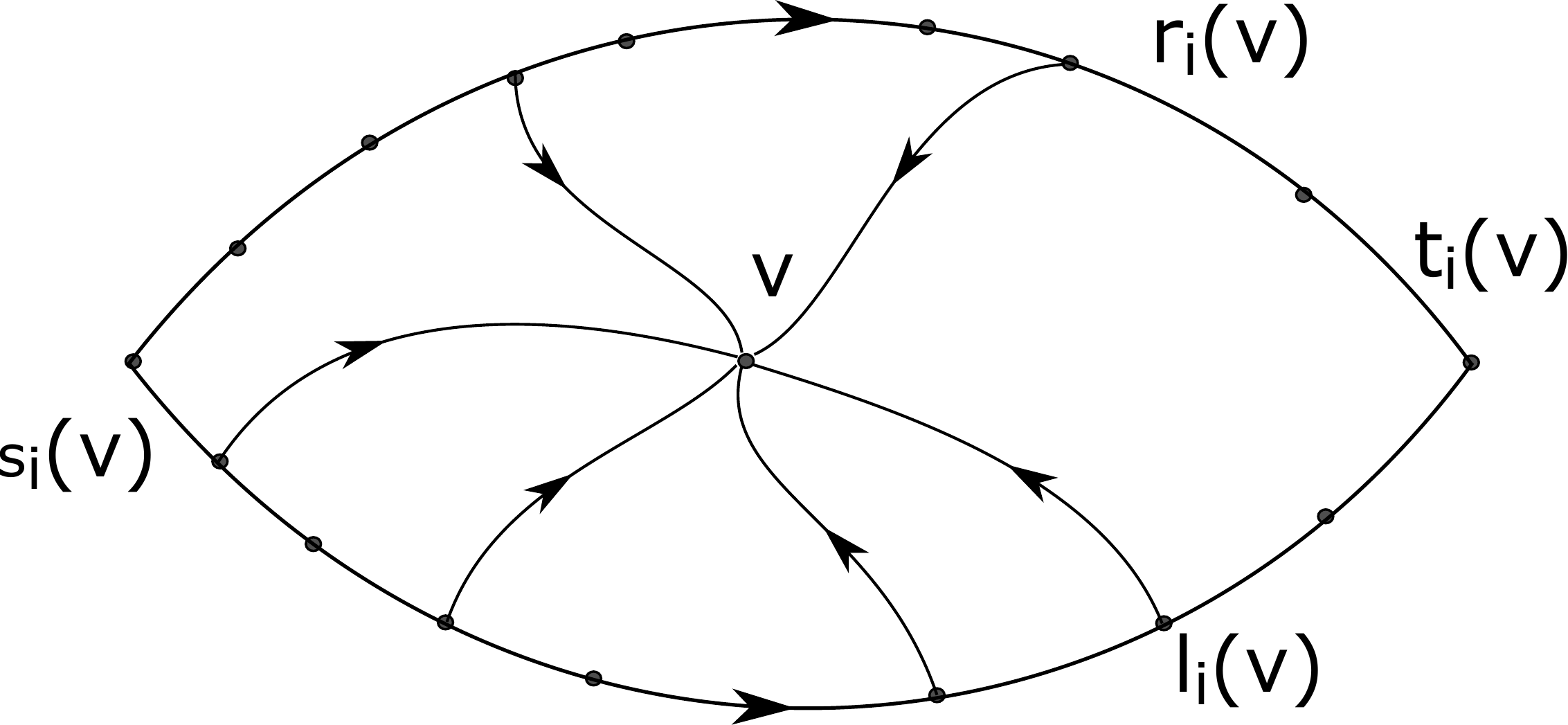}
\caption{The best two vertices that can reach $v$ on level $i$.}
\label{fig:l_i-and-r_i}
\end{figure}

\begin{definition}\label{def:2n-i-of-v}
  For any $0\leq{}i<d_2[v]$, let $x$ be the
  ancestor of $c_2[v]$ at depth $i+1$ and define:
  \begin{align*}
    E_i(v)&:=E_x
    \\
    L_i(v)&:=E_x\cap\mathcal{L}
    \\
    R_i(v)&:=E_x\cap\mathcal{R}
    \\
    \widehat{L}_i(v)&:=\set{(w,w')\in{}L_i(v)\middle|\:
      \reachable{w'}{v}
    }
    \\
    \widehat{R}_i(v)&:=\set{(w,w')\in{}R_i(v)\middle|\:
      \reachable{w'}{v}
    }
    \\
    \widehat{F}_i(v)&:=\widehat{L}_i(v)\cup\widehat{R}_i(v)
    \\
    l_i(v)&:=
    \begin{cases}
      \bot&\text{ if }\widehat{L}_i(v)=\emptyset
      \\
      \text{the last vertex in $\source(\widehat{L}_i(v))$ on the counterclockwise dipath of $F_x$}&\text{ otherwise}
    \end{cases}
    \\
    r_i(v)&:=
    \begin{cases}
      \bot&\text{ if }\widehat{R}_i(v)=\emptyset
      \\
      \text{the last vertex in $\source(\widehat{R}_i(v))$ on the clockwise dipath of $F_x$}&\text{ otherwise}
    \end{cases}
    \\
    s_i(v)&:=\text{The vertex associated with }s_x
    \\
    t_i(v)&:=\text{The vertex associated with }t_x
  \end{align*}
  Additionally, let $L_i(v)$ and $\widehat{L}_i(v)$ be totally ordered
  by the position of the starting vertices on the counterclockwise
  disegment of $F_x$ and the clockwise order around each starting
  vertex.  Similarly let $R_i(v)$ and $\widehat{R}_i(v)$ be totally
  ordered by the position of the starting vertices on the
  clockwise disegment of $F_x$ and the counterclockwise order around
  each starting vertex.
\end{definition}

The goal in this section is a data structure for efficiently computing
$l_i(v)$ and $r_i(v)$ for $0\leq{}i<d_2[v]$.

\begin{lemma}\label{lem:2n-hat-nonempty}
  For any vertex $v\in{}V$ and $0\leq{}i<d_2[v]$:
  $\widehat{F}_i(v)\neq\emptyset$
\end{lemma}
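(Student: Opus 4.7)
The plan is to construct an explicit witness for $\widehat{F}_i(v)\neq\emptyset$ by tracing a directed path from $s$ to $v$ at the moment it first enters the interior of the cycle $F_x$, where $x$ is the ancestor of $c_2[v]$ in $\mathcal{T}_2$ at depth $i+1$ (so $E_i(v)=E_x$). Since $G$ is acyclic and single-source, a directed path $P$ from $s$ to $v$ exists.

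I would first set up the planarity picture: $v$ lies in $V[C_{c_2[v]}]$, where $c_2[v]$ is either $x$ itself or a descendant of $x$ in $\mathcal{T}_2$, and by construction all such $C$-sets are disjoint from $F_x$ (because $F_x$ lies in the parent's $S$ while these $C$-sets are defined to avoid the parent's $S$). Consequently $v$ is strictly inside the cycle $F_x$ in the planar embedding, whereas $s\in V[C_r]$ for the root $r$ places $s$ outside (or on) $F_x$. The path $P$ therefore crosses $F_x$, so I would take $(w,w')$ to be the first edge on $P$ whose head $w'$ belongs to $V[C_z]$ for some descendant $z$ of $x$ in $\mathcal{T}_2$. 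By planarity, the tail $w$ must lie on $V[F_x]$ --- otherwise the edge $(w,w')$ would cross an edge of the cycle $F_x$ without sharing a vertex, violating the plane embedding. Hence $(w,w')\in E_x=E_i(v)$, and the remainder of $P$ from $w'$ to $v$ certifies $w'\rightsquigarrow v$. To place $(w,w')$ in $\mathcal{L}\cup\mathcal{R}$, I would invoke Definition~\ref{def:2n-global-LR-partition}, which is an honest partition of $\cup_{y\in\mathcal{T}_2} E_y$: its \emph{otherwise} branch handles even the case where the distinguished highest ancestor $y$ with $u\neq t_y$ does not exist, sending such an edge to $\mathcal{L}$. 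Combining, $(w,w')\in L_i(v)\cup R_i(v)$ and thus $(w,w')\in\widehat{L}_i(v)\cup\widehat{R}_i(v)=\widehat{F}_i(v)$.

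The only delicate step is the planarity claim that the crossing edge has its tail on $F_x$. This rests on knowing that the Jordan region bounded by the simple cycle $F_x$ in the planar embedding of $G$ contains exactly the vertices of $C_x$ together with those of the components $C_z$ for $z$ a descendant of $x$ in $\mathcal{T}_2$. This is implicit in the s-t-decomposition construction --- each child's $C$ is peeled out inside a face of its parent's $S$ --- so a short induction along $\mathcal{T}$ makes it rigorous. Once that geometric picture is pinned down, the witness construction above goes through directly.
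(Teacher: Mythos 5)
Your proof is correct and takes essentially the same approach as the paper: walk an $s$-to-$v$ path until it crosses the frame $F_x$ and observe that the crossing edge is a member of $\widehat{F}_i(v)$. You phrase the witness as ``first edge whose head is inside'' while the paper uses ``edge following the last vertex of the path on $V[F_x]$,'' and you spell out the Jordan-curve/no-crossing reasoning forcing the tail onto $F_x$ and the totality of the $(\mathcal{L},\mathcal{R})$ partition, both of which the paper leaves implicit; these are presentational rather than substantive differences.
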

\begin{proof}
  Let $x$ be the ancestor of $c_2[v]$ at depth $i+1$.  Since $G$ is a
  single-source graph, there is a path from $s$ to $v$.  This path
  must contain a vertex in $V[F_x]$. 
  But then the edge following the last such vertex on the path must be
  in $\widehat{L}_i(v)\cup{}\widehat{R}_i(v)$ which is therefore
  nonempty.
\end{proof}

\begin{lemma}\label{lem:2n-hat-subset}
  For any $u,v\in{}V$ and $0\leq{}i<d_2[u]$: If $\reachable{u}{v}$ then
  $\widehat{L}_i(u)\subseteq\widehat{L}_i(v)$ and $\widehat{R}_i(u)\subseteq\widehat{R}_i(v)$.
\end{lemma}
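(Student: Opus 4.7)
The proof should be almost immediate from unpacking the definitions, with the one nontrivial ingredient being that the ``ancestor at depth $i+1$'' in the $2$-frame decomposition is the same for $u$ and for $v$.

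First I would invoke the earlier reachability lemma (and its extension to $\mathcal{T}_2$): since $\mathcal{T}_2$ is obtained from $\mathcal{T}$ by contracting $4$-frame edges, and since the earlier lemma guarantees that $c[u]$ is an ancestor of $c[v]$ in $\mathcal{T}$ whenever $\reachable{u}{v}$, contraction preserves the ancestry relation, so $c_2[u]\preceq c_2[v]$ in $\mathcal{T}_2$. In particular $d_2[u]\leq d_2[v]$, so for any $0\leq i<d_2[u]\leq d_2[v]$ both $u$ and $v$ have an ancestor in $\mathcal{T}_2$ at depth $i+1$, and by ancestry this ancestor is the \emph{same} node $x$.

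The second step is just reading off Definition~\ref{def:2n-i-of-v}: since $E_i(v):=E_x$, $L_i(v):=E_x\cap\mathcal{L}$ and $R_i(v):=E_x\cap\mathcal{R}$ depend only on the ancestor at depth $i+1$, we get $E_i(u)=E_i(v)$, $L_i(u)=L_i(v)$, and $R_i(u)=R_i(v)$.

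Finally, fix any $(w,w')\in\widehat{L}_i(u)$. By definition $(w,w')\in L_i(u)=L_i(v)$ and $\reachable{w'}{u}$. Composing with the assumed path $\reachable{u}{v}$ gives $\reachable{w'}{v}$, so $(w,w')\in\widehat{L}_i(v)$. The same argument (using $R_i(u)=R_i(v)$) gives $\widehat{R}_i(u)\subseteq\widehat{R}_i(v)$. There is no real obstacle here; the only thing one must be careful about is that the implicit claim ``$x$ is the same for $u$ and for $v$'' genuinely needs the reachability lemma applied in the contracted tree $\mathcal{T}_2$, not just in $\mathcal{T}$.
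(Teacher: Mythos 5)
Your proof is correct and takes essentially the same route as the paper: use $\reachable{u}{v}$ to get $c_2[u]\preceq c_2[v]$, conclude $L_i(u)=L_i(v)$ and $R_i(u)=R_i(v)$ because the depth-$(i+1)$ ancestor is shared, and then compose $\reachable{w'}{u}$ with $\reachable{u}{v}$. You are merely more explicit about the two details the paper leaves implicit (that ancestry transfers from $\mathcal{T}$ to $\mathcal{T}_2$ under contraction, and that the shared ancestor is what makes $L_i(u)=L_i(v)$), which is a reasonable thing to spell out.
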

\begin{proof}
  Since $\reachable{u}{v}$, $c_2[u]$ is ancestor to $c_2[v]$ and so
  $L_i(u)=L_i(v)$ and hence
  $\widehat{L}_i(u)\subseteq\widehat{L}_i(v)$.  Similarly,
  $R_i(u)=R_i(v)$ and $\widehat{R}_i(u)\subseteq\widehat{R}_i(v)$.
\end{proof}

\begin{lemma}\label{lem:2n-lr-multiframe}
  Given any vertex $v\in{}V$, $0\leq{}i<d_2[v]$, and $(w,w')\in{}E_i(v)$. Then:
  \begin{align*}
    (w,w')&\in\widehat{L}_i(v)
    &&\implies&
    (w,w')&\in\widehat{L}_{i'}(v)\text{ for all }i^\prime, d_2[w]\le i^\prime <\min\set{d_2[w'],d_2[v]}
    \\
    (w,w')&\in\widehat{R}_i(v)
    &&\implies&
    (w,w')&\in\widehat{R}_{i'}(v)\text{ for all }i^\prime, d_2[w]\le i^\prime <\min\set{d_2[w'],d_2[v]}
  \end{align*}
\end{lemma}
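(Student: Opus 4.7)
The plan is to peel apart the three conditions in the definition of $\widehat{L}_{i'}(v)$ — membership in $E_{i'}(v)$, membership in the global class $\mathcal{L}$, and the reachability $\reachable{w'}{v}$ — and verify each separately for the new index $i'$. Reachability does not depend on $i'$, so it is inherited directly from the hypothesis. The partition $(\mathcal{L},\mathcal{R})$ is defined once and for all on $\bigcup_{x\in\mathcal{T}_2}E_x$, with each edge's side determined purely by a distinguished node of $\mathcal{T}_2$ (the one closest to the root containing the edge under an appropriate corner condition), independently of any depth index; hence once we know the edge still lies in some $E_{i'}(v)$, its side-label is the same as before. The proof for $\widehat{R}$ is identical after swapping $\mathcal{R}$ for $\mathcal{L}$. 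The whole task therefore reduces to showing $(w,w')\in E_{i'}(v)$ for every $i'$ in $[d_2[w],\min\{d_2[w'],d_2[v]\})$.

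Let $x'$ be the ancestor of $c_2[v]$ at depth $i'+1$ and $y'$ its parent. I would then separately check the two requirements for $(w,w')\in E_{x'}$: that $w'\in V[C_{z'}]$ for some descendant $z'$ of $x'$, and that $w\in V[F_{x'}]$. For the first, $\reachable{w'}{v}$ places $c_2[w']$ on the ancestor path from $c_2[v]$ to the root; because $i'+1\le d_2[w']$, the node $x'$ sits weakly above $c_2[w']$ on that path, so one may take $z':=c_2[w']$ as a descendant of $x'$. For the second, $d_2[w]\le i'$ gives $c_2[w]$ as a strict ancestor of $x'$, so $w\in V[C_{c_2[w]}]\subseteq V[S_{y'}]$.

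The geometric heart of the proof is to lift $w\in V[S_{y'}]$ to $w\in V[F_{x'}]$ using planarity. By Lemma~\ref{lem:face-subtree} applied to $S=S_{y'}$, the faces of $S_{y'}$ correspond to the components of $\dual{T}\setminus\dual{E}[S_{y'}]$, and by the definition of s-t-decomposition $F_{x'}$ is exactly the face cycle of $S_{y'}$ corresponding to $\dual{G}_{x'}$. In the planar embedding of $G$, the interior of this face is precisely the region holding every vertex of $C_{x'}$ and of every descendant component. Since $w'$ lives in such a component and $w'\notin V[S_{y'}]$, the edge $(w,w')$ is an arc of the embedding with $w'$ strictly inside this face and $w\in V[S_{y'}]$ on its outside; planarity prevents the arc from meeting the boundary $F_{x'}$ except at an endpoint, so $w$ itself must lie on $F_{x'}$. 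Combined with the first requirement, this gives $(w,w')\in E_{x'}=E_{i'}(v)$.

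The step I expect to be the main obstacle is rigorously justifying that the interior of the face of $S_{y'}$ bounded by $F_{x'}$ really does contain every vertex of every descendant component of $x'$, and that an edge of $G$ with one endpoint in that interior and the other in $V[S_{y'}]$ must have the latter endpoint on $F_{x'}$. Both statements are intuitively clear from the recursive planar construction of the s-t-decomposition, but writing them out carefully requires combining Lemma~\ref{lem:face-subtree} with backward closure and the embedding invariants preserved by the recursion. Once those geometric facts are in place, the remainder is routine bookkeeping on the inequalities $d_2[w]\le i'<d_2[w']$ and $i'+1\le d_2[v]$.
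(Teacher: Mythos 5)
Your proposal is correct and takes essentially the same route as the paper: it observes that reachability $\reachable{w'}{v}$ and the $\mathcal{L}/\mathcal{R}$ label are independent of the depth index, so the whole content reduces to showing $(w,w')\in E_{i'}(v)$ on the stated range. The paper dispatches that remaining step with a single ``Clearly $(w,w')\in E_{i'}$ for all $j\leq i'<k$''; your planarity argument — $w\in V[S_{y'}]$ cannot be strictly inside the face of $S_{y'}$ bounded by $F_{x'}$, while $w'$ is, so the edge forces $w\in V[F_{x'}]$ — is a sound justification of exactly the claim the authors mark as clear, so your worry in the last paragraph is unfounded.
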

\begin{proof}
  Let $j=d_2[w]$ and $k=\min\set{d_2[w'],d_2[v]}$.
  Clearly $(w,w')\in{}E_{i'}$ for all $j\leq{}i'<k$.  Suppose
  $(w,w')\in\widehat{L}_i(v)\subseteq{}L_i(v)$, then
  since $j\leq{}i<k$ the definition give us
  $(w,w')\in{}L_{i'}(v)$ for all $j\leq{}i'<k$.  And since
  $\reachable{w'}{v}$ this implies
  $(w,w')\in\widehat{L}_{i'}(v)$ for all $j\leq{}i'<k$ and the result follows.
  The case for $R$ is symmetric.
\end{proof}

\begin{definition}\label{def:2n-pl-pr}
  For any vertex $v\in{}V$ let
  \begin{align*}
    p_l[v]&:=
    \begin{cases}
      \bot&\text{ if }d_2[v]=0
      \\
      l_{d_2[v]-1}(v)&\text{ otherwise}
    \end{cases}
    \\
    p_r[v]&:=
    \begin{cases}
      \bot&\text{ if }d_2[v]=0
      \\
      r_{d_2[v]-1}(v)&\text{ otherwise}
    \end{cases}
  \end{align*}
  and let $T_l$ and $T_r$ denote the rooted forests over $V$ whose
  parent pointers are $p_l$ and $p_r$ respectively.
\end{definition}

\begin{definition}\label{def:2n-lr-prime}
  For any $v\in{}V\cup\set{\bot}$, and $i\geq{}0$ let
  \begin{align*}
    l'_i(v)&:=
    \begin{cases}
      v&\text{ if $v=\bot\vee{}d_2[v]\leq{}i$}
      \\
      l'_i(p_l[v])&\text{ otherwise}
    \end{cases}
    \\
    r'_i(v)&:=
    \begin{cases}
      v&\text{ if $v=\bot\vee{}d_2[v]\leq{}i$}
      \\
      r'_i(p_r[v])&\text{ otherwise}
    \end{cases}
  \end{align*}
\end{definition}

\begin{lemma}\label{lem:2n-lr-prime-basics}
  Let $v\in{}V$, and $i\geq{}0$ be given, then
  \begin{align*}
    i&=d_2[v]-1
    &&\implies&
    l'_i(v)&=l_i(v)
    &&\wedge&
    r'_i(v)&=r_i(v)
    \\
    i&\leq{}d_2[v]-1
    &&\implies&
    l'_i(v)&\in\source(\widehat{L}_i(v))\cup\set{\bot}
    &&\wedge&
    r'_i(v)&\in\source(\widehat{R}_i(v))\cup\set{\bot}
    \\
    i&>d_2[v]-1
    &&\implies&
    l'_i(v)&=v
    &&\wedge&
    r'_i(v)&=v
  \end{align*}
\end{lemma}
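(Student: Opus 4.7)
The plan is to prove the three implications separately; the $r'_i$ claims follow from the $l'_i$ claims by left--right symmetry, so I only argue for $l'_i$. The third implication ($i>d_2[v]-1$) is immediate from Definition~\ref{def:2n-lr-prime}: the hypothesis $d_2[v]\leq i$ triggers the base case of $l'_i$ and returns $v$.

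For the first implication ($i=d_2[v]-1$), a single step of the recursive clause gives $l'_i(v)=l'_i(p_l[v])=l'_i(l_i(v))$. Writing $x$ for the ancestor of $c_2[v]$ at depth $i+1$, $l_i(v)$ is either $\bot$ or a source of some edge in $\widehat{L}_i(v)\subseteq E_x$, hence a vertex on the parent frame $F_x$. In the latter case the vertices of $F_x$ belong to ancestors of $x$ in $\mathcal{T}_2$ and therefore have $d_2[\cdot]\leq i$. Either way the inner recursion call hits its base case and returns $l_i(v)$, as required.

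For the second implication I induct on $d_2[v]-i\geq 1$; the base case $d_2[v]-i=1$ is exactly the first implication. For the step, set $v':=p_l[v]=l_{d_2[v]-1}(v)$. If $v'=\bot$ then $l'_i(v)=\bot$ and we are done. Otherwise $v'$ is the source of some edge $(v',w')\in\widehat{L}_{d_2[v]-1}(v)$; by the definition of $\widehat{L}$ we have $\reachable{w'}{v}$, hence $\reachable{v'}{v}$. Moreover $v'$ sits on the parent frame of $c_2[v]$, so $d_2[v']\leq d_2[v]-1$, while $w'$ sits in a descendant component, so $d_2[w']\geq d_2[v]$. If $d_2[v']\leq i$ the recursion halts at $v'$ and Lemma~\ref{lem:2n-lr-multiframe} on the range $d_2[v']\leq i<d_2[v]\leq d_2[w']$ upgrades $(v',w')$ into $\widehat{L}_i(v)$, placing $l'_i(v)=v'\in\source(\widehat{L}_i(v))$. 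Otherwise $d_2[v']>i$, and the induction hypothesis at $v'$ (which is strictly shallower) yields $l'_i(v')\in\source(\widehat{L}_i(v'))\cup\set{\bot}$; Lemma~\ref{lem:2n-hat-subset} applied with $\reachable{v'}{v}$ then gives $\widehat{L}_i(v')\subseteq\widehat{L}_i(v)$, so $l'_i(v)=l'_i(v')\in\source(\widehat{L}_i(v))\cup\set{\bot}$. The only subtlety is harvesting $\reachable{v'}{v}$ from the witness edge at exactly the moment Lemmas~\ref{lem:2n-hat-subset} and~\ref{lem:2n-lr-multiframe} require it; the rest is depth bookkeeping based on $v'$ lying on $F_x$ and $w'$ lying in a descendant component.
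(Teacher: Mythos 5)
Your proof is correct and uses the same key lemmas (Lemmas~\ref{lem:2n-hat-subset} and~\ref{lem:2n-lr-multiframe}) as the paper, so the mathematical content matches; the organization differs slightly in that you run an explicit induction on $d_2[v]-i$ by unwinding the \emph{first} step of the $l'_i$ recursion (from $v$ to $p_l[v]$), whereas the paper argues in one shot from the \emph{last} step, taking $u$ to be the child of $l'_i(v)$ in $T_l$ that is an ancestor of $v$ and extracting the witnessing edge at $w=l'_i(v)=l_{d_2[u]-1}(u)$. Your version is arguably the cleaner exposition, and you also make explicit a small point the paper leaves implicit in the $i=d_2[v]-1$ case, namely that $l_i(v)$ lies on the parent frame $F_x$ and hence has $d_2$-depth at most $i$, so the inner recursion terminates immediately.
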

\begin{proof}
  We will show this for $l'$ only, as $r'$ is completely symmetrical.
  If $i>d_2[v]-1$ then $d_2[v]\leq{}i$ and we get $l'_i(v)=v$ directly
  from the definition of $l'$.
  Similarly if $i=d_2[v]-1$ then $l'_i(v) =
  l'_i(p_l[v]) =
  l'_i(l_{d_2[v]-1}(v)) =
  l'_i(l_i(v)) =
  l_i(v) \in
  \source(\widehat{L}_i(v))\cup\set{\bot}$.
  Finally suppose $i<d_2[v]-1$.  If $l'_i(v)=\bot$ we are done,
  so suppose that is not the case.  Let $u$ be the child of
  $l'_i(v)$ in $T_l$ that is ancestor to $v$.  Then
  $l'_i(v) = l'_i(u) = p_l[u] =
  l_{d_2[u]-1}(u)$.  By definition of $l_{d_2[u]-1}(u)$ there
  exists an edge $(w,w')\in\widehat{L}_{d_2[u]-1}$ where
  $w=l_{d_2[u]-1}(u)$ and $d_2[w]\leq{}i<d_2[w']\leq{}d_2[u]$ and by
  setting $(v,i,(w,w')) = (u,d_2[u]-1,(w,w'))$ in
  lemma~\ref{lem:2n-lr-multiframe} we get
  $(w,w')\in\widehat{L}_i(u)$, and therefore
  $l'_i(v)\in\source(\widehat{L}_i(u))$.  But since
  $\reachable{u}{v}$ we have
  $\widehat{L}_i(u)\subseteq\widehat{L}_i(v)$ by Lemma~\ref{lem:2n-hat-subset} and we are
  done.
\end{proof}

\begin{lemma}\label{lem:2n-lr-prime-reduce}
  Let $v\in{}V$ and $0\leq{}i\leq{}j$ then
  \begin{align*}
    l'_i(l'_j(v))&=l'_i(v)
    &&\wedge&
    r'_i(r'_j(v))&=r'_i(v)
  \end{align*}
\end{lemma}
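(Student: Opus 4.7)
The plan is to prove the lemma by induction on $d_2[v]$, treating the operator $l'_i(\cdot)$ (and symmetrically $r'_i(\cdot)$) as a bounded walk up the forest $T_l$ and showing that truncating the walk at two different depths in succession is the same as taking one walk to the deeper bound.

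First I would make the structural observation that $l'_i(v)$ equals the first vertex encountered on the ancestor chain $v, p_l[v], p_l^2[v], \ldots$ (stopping at $\bot$) that either is $\bot$ or satisfies $d_2[u] \leq i$. For this interpretation to make sense, one must check termination: from Definition~\ref{def:2n-pl-pr} we have $p_l[v] = l_{d_2[v]-1}(v)$ whenever $d_2[v] > 0$, and $l_{d_2[v]-1}(v)$ lies on the parent frame $F_{c_2[v]}$, whose vertices belong to the parent of $c_2[v]$ in $\mathcal{T}_2$. Hence $d_2[p_l[v]] \leq d_2[v]-1$, so the $p_l$-walk strictly decreases $d_2$ and terminates.

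Next I would induct on $d_2[v]$ and split into cases using the hypothesis $i \leq j$. If $d_2[v] \leq j$ then $l'_j(v)=v$ by Definition~\ref{def:2n-lr-prime}, so $l'_i(l'_j(v))=l'_i(v)$ trivially. If $d_2[v] > j$ then by definition $l'_j(v) = l'_j(p_l[v])$ and $l'_i(v) = l'_i(p_l[v])$; since $d_2[p_l[v]] < d_2[v]$, the induction hypothesis applied to $p_l[v]$ gives $l'_i(l'_j(p_l[v])) = l'_i(p_l[v])$, and chaining yields
\[
l'_i(l'_j(v)) = l'_i(l'_j(p_l[v])) = l'_i(p_l[v]) = l'_i(v).
\]
The base case $v=\bot$ is immediate since $l'_j(\bot)=\bot$ and $l'_i(\bot)=\bot$. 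Intuitively, since $\{w : d_2[w]\leq i\}\subseteq\{w : d_2[w]\leq j\}$, the walk from $v$ that stops at the first ancestor in the smaller set necessarily passes through the first ancestor in the larger set, which is exactly $l'_j(v)$; continuing the walk from there gives $l'_i(v)$.

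The argument for $r'_i$ is obtained by replacing $p_l$, $T_l$, $l'$, $L$ with $p_r$, $T_r$, $r'$, $R$; the proof goes through unchanged because Definition~\ref{def:2n-pl-pr} and Definition~\ref{def:2n-lr-prime} are symmetric in $l$ and $r$, and the termination argument depends only on $d_2[p_r[v]]<d_2[v]$, which holds for the same reason. I do not expect a major obstacle: the only technical point is justifying termination of the recursion via the monotonicity of $d_2$ along $p_l$ and $p_r$, which falls out of Lemma~\ref{lem:2n-lr-prime-basics} applied at $i = d_2[v]-1$.
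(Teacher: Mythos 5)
Your proposal is correct and captures exactly the same idea as the paper's one-line proof, which simply observes that $l'_j(v)$ lies on the $T_l$-path from $v$ to $l'_i(v)$ and appeals to the recursion; you formalize that observation as an induction on $d_2[v]$, which is a reasonable way to flesh out the paper's ``follows trivially from the recursion.''
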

\begin{proof}
  $l'_j(v)$ is on the path from $v$ to $l'_i(v)$ in $T_l$, so this
  follows trivially from the recursion.  The case for $r'$ is symmetric.
\end{proof}

\begin{lemma}\label{lem:2n-lr-bot}
  Let $v\in{}V$, and $0\leq{}i<d_2[v]-1$, then
  \begin{alignat*}{7}
    l_i(v)&=\bot
    &&\quad\implies\quad&
    l'_i(l_{i+1}(v))&=\bot
    &&\qquad\wedge\qquad&
    r_i(v)&=\bot
    &&\quad\implies\quad&
    r'_i(r_{i+1}(v))&=\bot
  \end{alignat*}
\end{lemma}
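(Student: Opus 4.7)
I only treat the $l$-case; the $r$-case is obtained by swapping clockwise and counterclockwise orientations and is verbatim. The plan is a short case analysis on $u := l_{i+1}(v)$, unfolding the recursive definition of $l'_i$ one step and combining it with Lemmas~\ref{lem:2n-hat-subset} and~\ref{lem:2n-lr-multiframe}.

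First, if $u=\bot$, the base case of Definition~\ref{def:2n-lr-prime} immediately gives $l'_i(\bot)=\bot$. Otherwise, $u\in\source(\widehat{L}_{i+1}(v))$, so there is an edge $(u,u')\in\widehat{L}_{i+1}(v)\subseteq E_{i+1}(v)=E_y$, where $y$ is the ancestor of $c_2[v]$ at depth $i+2$. Since edges in $E_y$ start at vertices of $F_y$, and $F_y$ lies in $S_x$ where $x$ is the parent of $y$ (at depth $i+1$), we have $d_2[u]\le i+1$. Moreover $u'\in V[C_z]$ for some descendant $z$ of $y$, hence $d_2[u']\ge i+2$. I would then prove that $d_2[u]=i+1$: if instead $d_2[u]\le i$, applying Lemma~\ref{lem:2n-lr-multiframe} with $i'=i$ (which lies in $[d_2[u],\min\{d_2[u'],d_2[v]\})$ since $i<d_2[v]$ by hypothesis and $i<i+2\le d_2[u']$) would give $(u,u')\in\widehat{L}_i(v)$, contradicting $l_i(v)=\bot$, i.e.\ $\widehat{L}_i(v)=\emptyset$.

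With $d_2[u]=i+1$, the recursive clause of $l'_i$ applies and $p_l[u]=l_{d_2[u]-1}(u)=l_i(u)$. Since $\reachable{u}{v}$, Lemma~\ref{lem:2n-hat-subset} yields $\widehat{L}_i(u)\subseteq\widehat{L}_i(v)=\emptyset$, so $l_i(u)=\bot$, hence $p_l[u]=\bot$. Unfolding one step of the recursion gives $l'_i(u)=l'_i(p_l[u])=l'_i(\bot)=\bot$, as required.

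The main obstacle is the elimination of the case $d_2[u]\le i$: one has to check that the witnessing edge $(u,u')$ satisfies all the interval conditions of Lemma~\ref{lem:2n-lr-multiframe}, which in turn relies on correctly identifying where the endpoints of down-edges in $E_y$ can live in the $2$-frame tree. Once that bookkeeping is done, the rest is just unfolding definitions.
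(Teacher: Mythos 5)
Your proof is correct and follows essentially the same route as the paper's: split on whether $l_{i+1}(v)=\bot$, establish $d_2[l_{i+1}(v)]=i+1$, and then push $\widehat{L}_i(v)=\emptyset$ through Lemma~\ref{lem:2n-hat-subset} to conclude $l'_i(l_{i+1}(v))=\bot$. The only difference is presentational: you explicitly justify $d_2[l_{i+1}(v)]=i+1$ via the two-sided bookkeeping ($d_2[u]\le i+1$ from the definition of $E_{i+1}(v)$, and $d_2[u]\ge i+1$ via Lemma~\ref{lem:2n-lr-multiframe}), where the paper compresses this to a single ``so''; and you unfold the recursion of $l'_i$ one step directly rather than invoking Lemma~\ref{lem:2n-lr-prime-basics} as an intermediary. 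Both are sound and amount to the same argument.
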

\begin{proof}
  If $l_i(v)=\bot$ then $\widehat{L}_i(v)=\emptyset$, so
  either $l_{i+1}(v)=\bot$ implying
  $l'_i(l_{i+1}(v))=\bot$ by the definition of $l'$, or
  $l_{i+1}(v)\not\in\source(\widehat{L}_i(v))$ so
  $d_2[l_{i+1}(v)]=i+1$ and by Lemma~\ref{lem:2n-lr-prime-basics} and Lemma~\ref{lem:2n-hat-subset}
  $l'_i(l_{i+1}(v)) \in
  \source(\widehat{L}_i(l_{i+1}(v)))\cup\set{\bot}
  \subseteq \source(\widehat{L}_i(v))\cup\set{\bot}=\set{\bot}$
  so again $l'_i(l_{i+1}(v))=\bot$.  The case for $r$ is
  symmetric.
\end{proof}

\begin{lemma}[Crossing lemma]\label{lem:2n-crossing}
  Let $v\in{}V$, and $0\leq{}i<d_2[v]-1$.
  \begin{alignat*}{7}
    l_i(v)&\neq{}l'_i(l_{i+1}(v))
    &&\implies\quad&
    l_i(v)&=l'_i(m)
    &&\:\wedge\:&
    r_i(v)&=r'_i(m)
    &&\:\wedge\:&
    d_2[m]&=i+1
    \\&&&\text{where }m=r_{i+1}(v)\neq\bot\hspace{-30cm}
    \\
    r_i(v)&\neq{}r'_i(r_{i+1}(v))
    &&\implies\quad&
    l_i(v)&=l'_i(m)
    &&\:\wedge\:&
    r_i(v)&=r'_i(m)
    &&\:\wedge\:&
    d_2[m]&=i+1
    \\&&&\text{where }m=l_{i+1}(v)\neq\bot\hspace{-30cm}
  \end{alignat*}
\end{lemma}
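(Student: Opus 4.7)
I will prove only the first implication; the second is symmetric. Let $x$ and $x'$ be the ancestors of $c_2[v]$ at depths $i+1$ and $i+2$, so $F_{x'}$ is a $2$-frame nested inside the $2$-frame $F_x$. The plan is a last-crossing analysis in the annular region between these two frames. First I would use the contrapositive of Lemma~\ref{lem:2n-lr-bot} to conclude $l_i(v) \neq \bot$, pick a witness edge $(w, w^*) \in \widehat{L}_i(v)$ with $w = l_i(v)$, and concatenate it with a dipath $w^* \rightsquigarrow v$ to form a dipath $Q$ from $w$ to $v$. Since $w$ sits on $F_x$ while $v$ lies strictly inside $F_{x'}$, planarity forces $Q$ to enter the region enclosed by $F_{x'}$; I would then take $(u, u^*)$ to be the last edge on $Q$ with $u \in V[F_{x'}]$ and $u^*$ in a descendant of $x'$, so that the suffix from $u^*$ stays inside that region, giving $u^* \rightsquigarrow v$ and $(u, u^*) \in \widehat{L}_{i+1}(v) \cup \widehat{R}_{i+1}(v)$.

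The main case split is on which disegment of $F_{x'}$ contains $u$. If $u$ sits on the CCW disegment, then $u \in \source(\widehat{L}_{i+1}(v))$ lies at or before $l_{i+1}(v)$ along that directed path, so $u \rightsquigarrow l_{i+1}(v)$ and hence $w^* \rightsquigarrow l_{i+1}(v)$. This puts $(w, w^*)$ in $\widehat{L}_i(l_{i+1}(v))$, and since $w = l_i(v)$ is the last element of $\source(\widehat{L}_i(v)) \supseteq \source(\widehat{L}_i(l_{i+1}(v)))$ (by Lemma~\ref{lem:2n-hat-subset}) on the CCW disegment of $F_x$, it is also last in the smaller set, so $l_i(l_{i+1}(v)) = w$; unrolling the recursion in the definition of $l'_i$ then forces $l'_i(l_{i+1}(v)) = l_i(v)$, contradicting the hypothesis. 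The degenerate subcase $d_2[l_{i+1}(v)] \leq i$ is handled by acyclicity: both $l_{i+1}(v)$ and $w$ would then lie on $F_x$'s CCW disegment, and $w \rightsquigarrow l_{i+1}(v)$ combined with $l_{i+1}(v) \leq w$ in CCW order forces equality. Hence $u$ must lie on the CW disegment, producing $m := r_{i+1}(v) \neq \bot$ with $w^* \rightsquigarrow m$, and the same unrolling with $m$ replacing $l_{i+1}(v)$ yields $l'_i(m) = l_i(v)$.

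For $r_i(v) = r'_i(m)$ I would pick a witness for $r_i(v)$ (with $r_i(v) = \bot$ handled by the mirror of Lemma~\ref{lem:2n-lr-bot}) and rerun the last-crossing argument; if the crossing is CW I obtain $w_R^* \rightsquigarrow m$ directly, while if it is CCW, planarity forces the new path to cross $Q$ in the annular region between $F_x$ and $F_{x'}$, producing a shared vertex $q$ with $w_R^* \rightsquigarrow q \rightsquigarrow m$ along $Q$'s suffix, so in either case $r_i(m) = r_i(v)$ and $r'_i(m) = r_i(v)$ follow by the same unrolling. For $d_2[m] = i+1$ I would rule out $d_2[m] \leq i$: in that case $m$ would be inherited on $F_{x'}$ from a frame at depth $\leq i$ and would therefore also lie on the CW disegment of $F_x$, giving $l'_i(m) = m$, and together with the already-established $l'_i(m) = l_i(v)$ this would place the CCW vertex $l_i(v)$ on the CW disegment of $F_x$, violating the alternation-$2$ structure. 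The hardest step will be this final structural argument for $d_2[m] = i+1$: carefully verifying that inherited vertices on $F_{x'}$ sit on the matching disegment of $F_x$ requires tracking the nested-frame geometry from the s-t-decomposition together with how the global partition of Definition~\ref{def:2n-global-LR-partition} classifies the relevant edges at each level.
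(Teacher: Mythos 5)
Your overall plan matches the paper's: use the contrapositive of Lemma~\ref{lem:2n-lr-bot} to get $l_i(v)\neq\bot$, take a witness edge $(w,w^*)\in\widehat{L}_i(v)$ and a dipath to $v$, look at the last crossing of $\widehat{F}_{i+1}(v)$, rule out the crossing being on the $L$-side (else it would force $l'_i(l_{i+1}(v))=l_i(v)$), conclude it is on the $R$-side and therefore the path can be routed through $m=r_{i+1}(v)$, and finally push the $r$-witness path through $m$ as well via a planarity/crossing argument. All of that is the paper's proof in only slightly different words.

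The one place you diverge is the derivation of $d_2[m]=i+1$, and as written it is circular. You want to rule out $d_2[m]\leq{}i$ by combining ``the already-established $l'_i(m)=l_i(v)$'' with $l'_i(m)=m$; but the unrolling that gives $l'_i(m)=l_i(v)$ only works when $d_2[m]=i+1$, because it is exactly the case $i=d_2[m]-1$ of Lemma~\ref{lem:2n-lr-prime-basics} that turns $l_i(m)=l_i(v)$ into $l'_i(m)=l_i(v)$; if instead $d_2[m]\leq{}i$ then Definition~\ref{def:2n-lr-prime} gives $l'_i(m)=m$ directly and there is nothing to unroll. So you are using the conclusion to rule out the case you need to rule out to reach the conclusion. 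You also flag this yourself as ``the hardest step'' requiring careful nested-frame geometry, but no geometry is needed: the paper simply sandwiches depths. You already have $w^*\rightsquigarrow{}m$, and $(w,w^*)\in{}E_i(v)=E_x$ forces $w^*\in{}V[C_z]$ for a descendant $z$ of the depth-$(i+1)$ ancestor $x$, hence $d_2[w^*]\geq{}i+1$; reachability gives $d_2[w^*]\leq{}d_2[m]$; and $m\in{}V[F_{x'}]\subseteq{}V[S_x]$ gives $d_2[m]\leq{}i+1$. Thus $i+1\leq{}d_2[w^*]\leq{}d_2[m]\leq{}i+1$ and you are done, with no appeal to the alternation-$2$ structure or to which disegment inherited vertices sit on. With this replacement the rest of your argument goes through and matches the paper's proof.
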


\begin{SCfigure}
\centering
\includegraphics[width=0.35\linewidth]{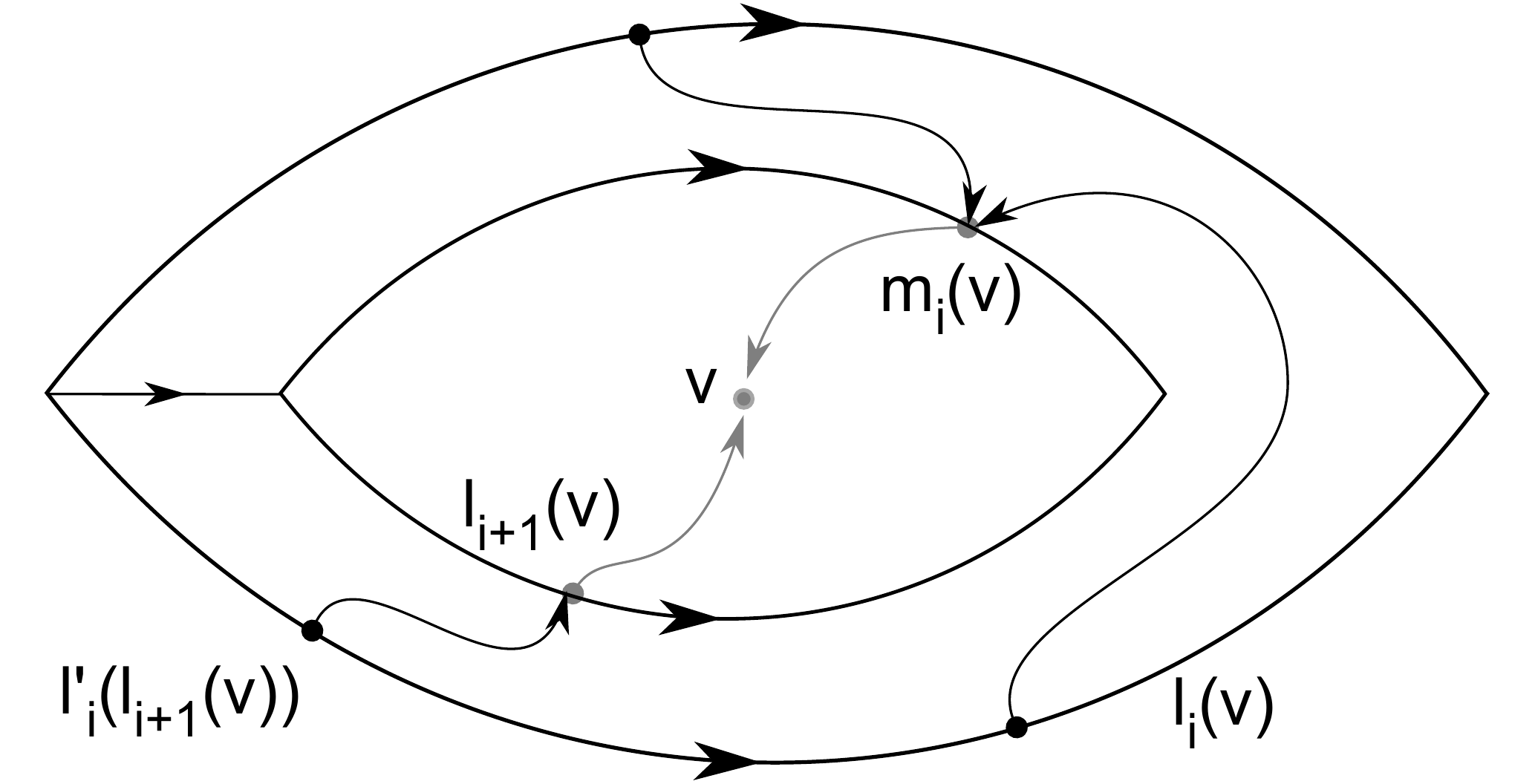}
\caption{The best path from $L_i(v)$ goes via $r_{i+1}(v)$.}
\label{fig:crossing-a}
\end{SCfigure}

\begin{proof}
  Suppose $l_i(v)\neq{}l'_i(l_{i+1}(v))$ (the case
  $r_i(v)\neq{}r'_i(r_{i+1}(v))$ is symmetrical).
  Then $l_i(v)\neq\bot$ by lemma~\ref{lem:2n-lr-bot}.
  Thus there is a last edge $(w,w')\in\widehat{L}_i(v)$ with
  $w=l_i(v)$ and $d_2[w]\leq{}i<d_2[w']$ and a path
  $P=\simplepath{w'}{v}$.

  Now $(w,w')\not\in{}E_{i+1}(v)$
  since otherwise by Definition~\ref{def:2n-i-of-v}
  $(w,w')\in{}L_{i+1}(v)$ and since $\reachable{w'}{v}$ even
  $(w,w')\in\widehat{L}_{i+1}(v)$ implying
  $l_i(v)=l_{i+1}(v)$ and thus
  $l_i(v)=l'_i(l_{i+1}(v))$ by
  lemma~\ref{lem:2n-lr-prime-basics}, contradicting our assumption.

  Since $(w,w')\not\in{}E_{i+1}(v)$, the path $P$ must cross
  $\widehat{F}_{i+1}(v)$.  Let $(u,u')$ be the last edge in
  $P\cap{}\widehat{F}_{i+1}(v)$.  Then $\reachable{w'}{u}$ so
  $d_2[u]\geq{}i+1$ and $(u,u')\not\in{}L_{i+1}(v)$ since
  otherwise $d_2[l_{i+1}(v)]=i+1$ and hence by
  Lemma~\ref{lem:2n-lr-prime-basics}
  $l_i(v)=l'_i(l_{i+1}(v))$, again contradicting
  our assumption.
  Since $\widehat{F}_{i+1}(v)\neq\emptyset$, we therefore have $(u,u')\in{}\widehat{R}_{i+1}(v)$.



  But then we can choose $P$ so it goes through $(m,m')$ where
  $m=r_{i+1}(v)\neq\bot$.  Now
  ${i+1}\leq{}d_2[w']\leq{}d_2[r_{i+1}(v)]\leq{}{i+1}$ so
  $d_2[m]={i+1}$.

  Let $e$ be the last edge in $\widehat{R}_i(v)$ then any path
  $\simplepath{r_i(v)}{v}$ that starts with $e$ crosses
  $P\cup\widehat{R}_{i+1}(v)$, implying that there exists such
  a path that contains $(m,m')$ and thus
  $r_i(v)=r_i(m)$.  Since $d_2[m]={i+1}$, then
  $l_i(v)=l'_i(m)$ and $r_i(v)=r'_i(m)$
  follows from lemma~\ref{lem:2n-lr-prime-basics}.
\end{proof}

\begin{definition}\label{def:2n-m}
  Let $v\in{}V$ and $0\leq{}i<d_2[v]$.
  \begin{align*}
    m_i(v)&:=
    \begin{cases}
      v&\text{ if $i+1=d_2[v]$}
      \\
      l_{i+1}(v)&\text{ if $i+1<d_2[v]\wedge{}r_i(v)\neq{}r'_i(r_{i+1}(v))$}
      \\
      r_{i+1}(v)&\text{ if $i+1<d_2[v]\wedge{}l_i(v)\neq{}l'_i(l_{i+1}(v))$}
      \\
      m_{i+1}(v)&\text{ otherwise}
    \end{cases}
  \end{align*}
\end{definition}

\begin{corollary}\label{cor:2n-m-crossing}
  Let $v\in{}V$ and $0\leq{}i<d_2[v]-1$.  If
  $l_i(v)\neq{}l'_i(l_{i+1}(v))$ or
  $r_i(v)\neq{}r'_i(r_{i+1}(v))$ then
  \begin{align*}
    l_i(v)&=l'_i(m_i(v))
    &&\wedge&
    r_i(v)&=r'_i(m_i(v))
    &&\wedge&
    d_2[m_i(v)]&=i+1
  \end{align*}
\end{corollary}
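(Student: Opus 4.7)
The plan is to derive the corollary as an almost immediate consequence of the Crossing Lemma (Lemma~\ref{lem:2n-crossing}), by inspecting the case analysis in Definition~\ref{def:2n-m}. The hypothesis $0 \le i < d_2[v]-1$ means $i+1 < d_2[v]$, so the first clause of Definition~\ref{def:2n-m} (the one giving $m_i(v)=v$) does not fire, and we are left with the three remaining clauses. The added hypothesis of the corollary states that either $r_i(v)\neq r'_i(r_{i+1}(v))$ or $l_i(v)\neq l'_i(l_{i+1}(v))$, which is exactly the condition ensuring that one of the middle two clauses triggers, and hence the \emph{otherwise} clause does not apply.

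I would then split into two cases matching the two triggered clauses. If $r_i(v)\neq r'_i(r_{i+1}(v))$, the definition assigns $m_i(v) = l_{i+1}(v)$, and the second bullet of Lemma~\ref{lem:2n-crossing} applied with this same $m$ yields $l_i(v) = l'_i(m)$, $r_i(v) = r'_i(m)$, and $d_2[m] = i+1$. Symmetrically, if $l_i(v) \neq l'_i(l_{i+1}(v))$, the definition assigns $m_i(v) = r_{i+1}(v)$, and the first bullet of the Crossing Lemma delivers the same three equalities with $m = r_{i+1}(v)$. Both cases therefore produce the conclusion of the corollary.

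The only subtle point, and the one thing to be careful about, is what happens when \emph{both} inequalities in the hypothesis hold simultaneously: the piecewise definition of $m_i(v)$ uses a top-down case split, so in that situation the middle clause wins and $m_i(v) = l_{i+1}(v)$. This is not an obstacle, however, because the Crossing Lemma's second bullet still applies to that $m$ and gives exactly the desired identities. Thus the corollary holds without any extra argument beyond invoking Lemma~\ref{lem:2n-crossing} on whichever $m\in\{l_{i+1}(v), r_{i+1}(v)\}$ the definition selects, and no further calculation is required.
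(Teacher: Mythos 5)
Your proof is correct and takes essentially the same approach as the paper, which simply declares the corollary a reformulation of Lemma~\ref{lem:2n-crossing}. Your more explicit case analysis of Definition~\ref{def:2n-m}, including the careful observation that when both inequalities hold the second clause takes precedence yet still matches the second bullet of the Crossing Lemma, fills in exactly the verification the paper leaves implicit.
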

\begin{proof}
  This is just a reformulation of lemma~\ref{lem:2n-crossing} in terms
  of $m_i(v)$.
\end{proof}

\begin{lemma}\label{lem:2n-lrm-combined}
  For any vertex $v\in{}V$ and $0\leq{}i<d_2[v]$
  \begin{align*}
    l_i(v)&=l'_i(m_i(v))
    &&\wedge&
    r_i(v)&=r'_i(m_i(v))
  \end{align*}
\end{lemma}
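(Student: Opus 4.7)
My plan is to prove the lemma by downward induction on $i$, starting from $i = d_2[v]-1$ and decreasing to $0$. The induction mirrors Definition~\ref{def:2n-m} for $m_i(v)$: each of its four clauses corresponds to exactly one case of the inductive argument. For the base case $i = d_2[v]-1$, the first clause of Definition~\ref{def:2n-m} gives $m_i(v) = v$, and the case $i = d_2[v]-1$ of Lemma~\ref{lem:2n-lr-prime-basics} directly yields $l'_i(v) = l_i(v)$ and $r'_i(v) = r_i(v)$, which is exactly what is required.

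In the inductive step at $i < d_2[v]-1$, I would split on which clause of the definition of $m_i(v)$ fires. If either crossing condition holds, namely $r_i(v) \neq r'_i(r_{i+1}(v))$ (giving $m_i(v) = l_{i+1}(v)$) or $l_i(v) \neq l'_i(l_{i+1}(v))$ (giving $m_i(v) = r_{i+1}(v)$), then Corollary~\ref{cor:2n-m-crossing}, which is a direct reformulation of the crossing lemma in terms of $m_i(v)$, yields both equalities immediately. Otherwise, neither crossing condition holds, so both $l_i(v) = l'_i(l_{i+1}(v))$ and $r_i(v) = r'_i(r_{i+1}(v))$ are true and $m_i(v) = m_{i+1}(v)$ by the last clause. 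I would then chain the induction hypothesis $l_{i+1}(v) = l'_{i+1}(m_{i+1}(v))$ into the non-crossing equality to obtain $l_i(v) = l'_i(l'_{i+1}(m_{i+1}(v)))$, and apply Lemma~\ref{lem:2n-lr-prime-reduce} (with $i \leq i+1$) to collapse the nested $l'$, giving $l_i(v) = l'_i(m_{i+1}(v)) = l'_i(m_i(v))$; the argument for $r$ is entirely symmetric.

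No substantial obstacle remains at this point. All the structural content has already been placed in the crossing lemma, which controls what happens when the ``best reaching vertex'' jumps between left and right as $i$ changes, and in Lemma~\ref{lem:2n-lr-prime-reduce}, which tells us that successive applications of $l'$ (or $r'$) compose idempotently along a chain of decreasing indices. The proof of Lemma~\ref{lem:2n-lrm-combined} is then essentially case-bookkeeping organised along the recursion of $m_i(v)$, and the only thing to check is that the three cases of the induction match the three nontrivial clauses of Definition~\ref{def:2n-m} precisely, which they do by construction.
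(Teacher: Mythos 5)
Your proof is correct and follows essentially the same route as the paper: a case split along the clauses of Definition~\ref{def:2n-m}, using Lemma~\ref{lem:2n-lr-prime-basics} for the terminal case, Corollary~\ref{cor:2n-m-crossing} for the crossing cases, and Lemma~\ref{lem:2n-lr-prime-reduce} to collapse the nested primes in the ``otherwise'' case. The only cosmetic difference is the induction variable --- you induct downward on $i$, whereas the paper inducts on the number of consecutive ``otherwise'' expansions in $m_i(v)$; the two schemes produce the same case analysis.
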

\begin{proof}
  The proof is by induction on $j$, the number of times the
  ``otherwise'' case is used before reaching one of the other cases
  when expanding the recursive definition of $m_i(v)$.

  For $j=0$, either $i+1=d_2[v]$ and the result follows from
  Lemma~\ref{lem:2n-lr-prime-basics}, or $i+1<d_2[v]$ and
  $l_i(v)\neq{}l'_i(l_{i+1}(v))$ or $r_i(v)\neq{}r'_i(r_{i+1}(v))$.
  In either case we have by Corollary~\ref{cor:2n-m-crossing}, that
  $l_i(v)=l'_i(m_i(v))$ and
  $r_i(v)=r'_i(m_i(v))$.

  For $j>0$ we have $i+1<d_2[v]$ and $l_i(v)=l'_i(l_{i+1}(v))$ and
  $r_i(v)=r'_i(r_{i+1}(v))$ and $m_i(v)=m_{i+1}(v)$.  By induction we
  can assume that
  $l_{i+1}(v)=l'_{i+1}(m_{i+1}(v))$ and
  $r_{i+1}(v)=r'_{i+1}(m_{i+1}(v))$.
  Then by Lemma~\ref{lem:2n-lr-prime-reduce},
  $l'_i(l_{i+1}(v))=l'_i(l'_{i+1}(m_{i+1}(v)))=l'_i(m_{i+1}(v))=l'_i(m_i(v))$,
  showing that $l_i(v)=l'_i(m_i(v))$ as desired.
  The case for $r$ is symmetric.
\end{proof}


\begin{definition}\label{def:2n-pm}
  For any vertex $v\in{}V$, let
  \begin{align*}
    M[v]&:=\set{i\middle|\:0<i<d_2[v]\wedge{}m_{i-1}(v)\neq{}m_i(v)}
    \\
    p_m[v]&:=
    \begin{cases}
      \bot&\text{ if }M[v]=\emptyset
      \\
      m_{\max{}M[v]-1}(v)&\text{ otherwise}
    \end{cases}
  \end{align*}
  And define $T_m$ as the rooted forest over $V$ whose parent pointers
  are $p_m$.
\end{definition}

\begin{theorem}\label{thm:2n-frames}
  There exists a practical RAM data structure that for any good
  st-decomposition of a graph with $n$
  vertices uses $\Oo(n)$ words of $\Oo(\log{}n)$ bits and can answer
  $l_i(v)$ and $r_i(v)$ queries in constant time.
\end{theorem}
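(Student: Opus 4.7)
The plan is to reduce every $l_i(v)$ and $r_i(v)$ query to $O(1)$ weighted level-ancestor-style queries on the three rooted forests $T_l$, $T_r$, and $T_m$ from Definitions~\ref{def:2n-pl-pr} and~\ref{def:2n-pm}, and then implement those queries in linear space and constant time.

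First, I would construct the three forests by computing the parent arrays $p_l[\cdot]$, $p_r[\cdot]$, and $p_m[\cdot]$. Each $p_l[v]$ (resp.\ $p_r[v]$) is fixed by local information on a single parent frame $F_x$ of $\mathcal{T}_2$: it is the last source on the counterclockwise (resp.\ clockwise) disegment of $F_x$ among the edges in $\widehat{L}_{d_2[v]-1}(v)$ (resp.\ $\widehat{R}_{d_2[v]-1}(v)$). Once $p_l$ and $p_r$ are known, $p_m[v]$ is read off from Definition~\ref{def:2n-pm}. Processing $\mathcal{T}_2$ bottom up and sweeping each frame once gives an $O(n)$-time construction of all three arrays. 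In each of the three forests $d_2$ strictly decreases along every parent edge---for $T_l$ and $T_r$ this is immediate from the definition together with Definition~\ref{def:2n-i-of-v} of $\widehat L_i(v)$ and $\widehat R_i(v)$, and for $T_m$ it follows from Corollary~\ref{cor:2n-m-crossing}, which gives $d_2[p_m[v]] = \max M[v] < d_2[v]$---and since $\mathcal{T}_2$ comes from a good s-t-decomposition, $d_2[v] = O(\log n)$ for every $v$; hence each forest has depth $O(\log n)$.

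I would then equip each forest with a $d_2$-indexed level-ancestor oracle that, given $v$ and $i$, returns in $O(1)$ time the deepest ancestor $u$ of $v$ with $d_2[u]\le i$. The $T_m$-oracle uses the analogous threshold $d_2[p_m[u]]\le i$ (treating $\bot$ as $-\infty$); this returns the unique $T_m$-ancestor of $v$ whose ``piece'' (Definition~\ref{def:2n-m}) contains level $i$, which is exactly $m_i(v)$. With these primitives, Lemma~\ref{lem:2n-lrm-combined} gives $l_i(v) = l'_i(m_i(v))$ via one oracle call in $T_m$ followed by one in $T_l$, and similarly $r_i(v) = r'_i(m_i(v))$ via $T_r$, for a total of $O(1)$ time per query.

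The main obstacle is constructing the constant-time oracle itself: $d_2$ may drop by more than one across a parent edge, so the off-the-shelf $O(n)$-space $O(1)$-time level-ancestor structures---which assume the depth function equals tree depth---do not apply verbatim. The resolution is to exploit the $O(\log n)$ bound on the range of $d_2$: the bitmap of $d_2$-values occurring along any root path in a forest fits in a single machine word, so the target depth can be isolated by a mask-and-bit-scan on the practical RAM, and the concrete ancestor with that $d_2$-value can be recovered using an $O(n)$-word representation that shares ancestor arrays along the parent relation, in the spirit of~\cite{Alstrup97optimalon-line}.
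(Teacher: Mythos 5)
Your proposal follows essentially the same route as the paper: observe that $l_i(v)=l'_i(m_i(v))$ and $r_i(v)=r'_i(m_i(v))$ (Lemma~\ref{lem:2n-lrm-combined}) reduce the query to three forest walks in $T_m$, $T_l$, $T_r$; note that $d_2$ strictly decreases on parent edges so each forest has depth $O(\log n)$; store per vertex the $O(\log n)$-bit set of $d_2$-values of its proper ancestors in each forest (the paper calls these $D_l[v]$, $D_r[v]$, and uses $M[v]$ for $T_m$); and combine a bit-mask/rank step with a standard $O(n)$-word constant-time level-ancestor structure on each forest. The paper phrases the last step as ``use the bitmask to compute the target's tree depth, then call an ordinary level-ancestor query,'' which is a cleaner packaging than your ``ancestor-array sharing in the spirit of~\cite{Alstrup97optimalon-line},'' but the content is the same, and your characterization of $m_i(v)$ as the deepest $T_m$-ancestor $u$ with $d_2[p_m[u]]\leq i$ is a correct reformulation of the paper's $d_2[m_i(v)]=\operatorname{succ}(M[v]\cup\{d_2[v]\},i)$.
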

\begin{proof}
  For any vertex $v\in{}V$, let
  \begin{align*}
    D_l[v]&:=\set{i\middle|\:\text{$v$ has a proper ancestor $w$ in $T_l$
        with $d_2[w]=i$}}
    \\
    D_r[v]&:=\set{i\middle|\:\text{$v$ has a proper ancestor $w$ in $T_r$
        with $d_2[w]=i$}}
  \end{align*}
  Now, store levelancestor structures for each of $T_l$,
  $T_r$, and $T_m$, together with $d_2[v]$, 
  $D_l[v]$, $D_r[v]$, and $M[v]$ for each vertex.
  Since the height of the st-decomposition is $\Oo(\log{}n)$ each of
  $D_l[v]$, $D_r[v]$, and $M[v]$ can be
  represented in a single $\Oo(\log{}n)$-bit word.

  This representation allows us to find
  $d_2[m_i(v)]=\operatorname{succ}(M[v]\cup\set{d_2[v]},i)$ in
  constant time, as well as computing the depth in $T_m$ of $m_i(v)$.
  Then using the levelancestor structure for $T_m$ we can compute
  $m_i(v)$ in constant time.

  Similarly, this representation of the $D_l[v]$ set lets us compute
  the depth in $T_l$ of $l'_i(v)$ in constant time, and with the
  levelancestor structure that lets us compute $l'_i(v)$ in constant
  time.  A symmetric argument shows that we can compute $r'_i(v)$ in
  constant time.

  Finally, lemma~\ref{lem:2n-lrm-combined} says we can compute $l_i(v)$
  and $r_i(v)$ in constant time given constant-time functions for
  $l'$, $r'$, and $m$.
\end{proof}

\subsection{4-frames}\label{sec:4-frames}

\begin{definition}
  Let $x$ be a node in an s-t-decomposition such that $F_x$ is a
  $4$-frame, and let $y$ be its parent.  Let $s^0_x$ and $s^1_x$ be
  the source corners on $F_x$ and let $t^0_x$ and $t^1_x$ be the
  target corners on $F_x$, numbered such that their clockwise cyclic
  order on $F_x$ is $s^0_x,t^0_x,s^1_x,t^1_x$, and such that if $F_y$
  is a $4$-frame there is an $\alpha\in\set{0,1}$ so $t^\alpha_x=t^\alpha_y$.
\end{definition}


\begin{definition}\label{def:4-global-LR-partition}
  Let $\mathcal{E}_4$ be the set of edges $(u,v)$ such if $x$ is the
  node in the s-t-decomposition that contains $v$, then $(u,v)\in{}E_x$ and $F_x$ is a
  $4$-frame. 
  Let $(\mathcal{L}^0,\mathcal{R}^0,\mathcal{L}^1,\mathcal{R}^1)$ be
  the partition of $\mathcal{E}_4$ defined as follows: For each
  $(u,v)\in\mathcal{E}_4$ let $x$ be the node such that $v\in{}C_x$,
  and let $y$ be the node (if it exists) closest to the root of
  $\mathcal{T}$ such that
  \begin{itemize}
  \item For any $z$ that is ancestor to $x$ and descendent to $y$, $F_z$ is a $4$-frame.
  \item $(u,v)\in{}E_y$.
  \item $u$ is not a target vertex of $F_y$.
  \end{itemize}
  If $y$ exists, then $(u,v)$ is incident to a corner $c$ on $F_y$.
  If there is an $\alpha\in\set{0,1}$ such that $c$ is on the
  clockwise disegment of $F_y$ between $s^\alpha_y$ and $t^\alpha_y$
  we assign $(u,v)$ to $\mathcal{R}^\alpha$.
  Otherwise there must be an
  $\alpha\in\set{0,1}$ such that $c$ is on the counterclockwise disegment of
  $F_y$ between $s^{1-\alpha}_y$ and $t^\alpha_y$, and we assign $(u,v)$ to
  $\mathcal{L}^\alpha$.
  If no such $y$ exists, $(u,v)$ must be incident to $t^\alpha_x$ for
  some $\alpha\in\set{0,1}$ and we (arbitrarily) assign $(u,v)$ to
  $\mathcal{L}^\alpha$.
\end{definition}

\begin{definition}
  Let $\mathcal{T}$ be an st-decomposition of $G=(V,E)$.  For any
  vertex $v\in{}V$ define:
  \begin{align*}
    c[v]&:=\text{The node }x\text{ in }\mathcal{T}\text{ such that }v\in{}V[C_x]
    \\
    d[v]&:=\text{The depth of }c[v]\text{ in }\mathcal{T}
    \\
    J_2[v]&:=\set{\operatorname{depth}(x)\middle|\:\text{$x$ is a non-root ancestor to $c[v]$ in $\mathcal{T}$ and $F_x$ is a $2$-frame}}
    \\
    j_2[v]&:=\operatorname{max}(J_2[v]))
  \end{align*}
 \end{definition} 
  
The number $j_2[v]$ is especially useful for $4$-frame nodes. On the path from the root to the component of $v$ in the s-t-decomposition tree, there will be a last component whose frame is a $2$-frame. We call the depth of the next component on the path $j_2[v]$. If $c[v]$ has a $4$-frame, then for the rest of the path, that is, depth $i$ with $j_2[v]\leq i <d[v]$, we will have $4$-frames nested in $4$-frames, which gives a lot of useful structure.

\begin{definition}\label{def:4-i-of-v}
  For any $j_2[v]\leq{}i<d[v]$ and $\alpha\in\set{0,1}$, let $x$ be the
  ancestor of $c[v]$ at depth $i+1$ and define:
  \begin{align*}
    E_i(v)&:=E_x
    \\
    L^\alpha_i(v)&:=E_x\cap\mathcal{L}^\alpha
    \\
    R^\alpha_i(v)&:=E_x\cap\mathcal{R}^\alpha
    \\
    \widehat{L}^\alpha_i(v)&:=\set{(w,w')\in{}L^\alpha_i(v)\middle|\:
      \reachable{w'}{v}
    }
    \\
    \widehat{R}^\alpha_i(v)&:=\set{(w,w')\in{}R^\alpha_i(v)\middle|\:
      \reachable{w'}{v}
    }
    \\
    \widehat{F}_i(v)&:=\widehat{L}^0_i(v)\cup\widehat{R}^0_i(v)\cup\widehat{L}^1_i(v)\cup\widehat{R}^1_i(v)
    \\
    l^\alpha_i(v)&:=
    \begin{cases}
      \bot&\text{ if }\widehat{L}^\alpha_i(v)=\emptyset
      \\
      \text{the last vertex in $\source(\widehat{L}^\alpha_i(v))$ on the counterclockwise dipath of $F_x$}&\text{ otherwise}      
    \end{cases}
    \\
    r^\alpha_i(v)&:=
    \begin{cases}
      \bot&\text{ if }\widehat{R}^\alpha_i(v)=\emptyset
      \\
      \text{the last vertex in $\source(\widehat{R}^\alpha_i(v))$ on the clockwise dipath of $F_x$}&\text{ otherwise}      
    \end{cases}
    \\
    s^\alpha_i(v)&:=\text{The vertex associated with }s^\alpha_x
    \\
    t^\alpha_i(v)&:=\text{The vertex associated with }t^\alpha_x
  \end{align*}
  Additionally, let $L^\alpha_i(v)$ and $\widehat{L}^\alpha_i(v)$ be totally ordered
  by the position of the starting vertices on the counterclockwise
  disegment of $F_x$ and the clockwise order around each starting
  vertex.  Similarly let $R^\alpha_i(v)$ and $\widehat{R}^\alpha_i(v)$ be totally
  ordered by the position of the starting vertices on the
  clockwise disegment of $F_x$ and the counterclockwise order around
  each starting vertex.
\end{definition}

We know from Section~\ref{sec:2-frames} that we can find the relevant
vertices on each $2$-frame surrounding $v$.  The goal in this section
is a data structure for efficiently computing
$l^\alpha_i(v)$ and
$r^\alpha_i(v)$ for $j_2[v]\leq{}i<d[v]$.

\begin{lemma}\label{lem:4-hat-nonempty}
  For any vertex $v\in{}V$ and $j_2[v]\leq{}i<d[v]$:
  $\widehat{F}_i(v)\neq\emptyset$
\end{lemma}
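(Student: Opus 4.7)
My plan is to mirror the proof of Lemma~\ref{lem:2n-hat-nonempty} essentially verbatim, adapting it to the $4$-frame partition. First I would let $x$ be the ancestor of $c[v]$ at depth $i+1$. Since $j_2[v]\leq i$, we have $i+1>j_2[v]$, so $x$ sits strictly below the deepest $2$-frame ancestor of $c[v]$ in the st-decomposition, which forces $F_x$ to be a $4$-frame. This is exactly the reason the range $j_2[v]\leq i<d[v]$ is the natural one: it is the range of levels at which $v$ is encased in nested $4$-frames.

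Next I would invoke the single-source property of $G$ to fix any dipath $P$ from $s$ to $v$. Since $v\in V[C_y]$ for some descendant $y$ of $x$ (possibly $x$ itself) and $s\notin V[S_x]$ in the sense required (because $x$ is a non-root and $F_x$ separates $v$ from $s$), $P$ must traverse a vertex of $V[F_x]$. Let $w$ be the last such vertex on $P$, and let $(w,w')$ be the edge of $P$ immediately following $w$. By the choice of $w$, we have $w'\in V[C_z]$ for some descendant $z$ of $x$, so $(w,w')\in E_x$; also $w'\rightsquigarrow v$ through the remainder of $P$.

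The remaining step is to argue that $(w,w')$ lands in one of $\widehat L^0_i(v),\widehat R^0_i(v),\widehat L^1_i(v),\widehat R^1_i(v)$. Since $F_x$ is a $4$-frame, $(w,w')\in\mathcal{E}_4$ by the definition of that set. Definition~\ref{def:4-global-LR-partition} assigns every edge of $\mathcal{E}_4$ to exactly one of $\mathcal{L}^0,\mathcal{R}^0,\mathcal{L}^1,\mathcal{R}^1$: either some ancestor witness $y$ exists (and $(w,w')$ is placed according to which disegment of $F_y$ contains its incident corner), or the catch-all clause places it in some $\mathcal{L}^\alpha$. Intersecting with $E_x$ and combining with $w'\rightsquigarrow v$ yields membership in $\widehat L^\alpha_i(v)\cup\widehat R^\alpha_i(v)\subseteq\widehat F_i(v)$, so $\widehat F_i(v)\neq\emptyset$.

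The only real obstacle is bookkeeping: one must check that the four-way partition of Definition~\ref{def:4-global-LR-partition} is genuinely exhaustive on $\mathcal{E}_4$ and that the witness $y$ (when it exists) is indeed an ancestor whose frame meaningfully contains the incident corner of $(w,w')$. Once that is granted, the argument is essentially a one-to-one translation of the proof of Lemma~\ref{lem:2n-hat-nonempty}, with the single output $\widehat L_i(v)\cup\widehat R_i(v)$ replaced by the four-bucket union.
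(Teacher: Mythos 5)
Your argument mirrors the paper's proof of this lemma (which is itself a near-verbatim adaptation of Lemma~\ref{lem:2n-hat-nonempty}): let $x$ be the depth-$(i+1)$ ancestor, take a dipath $P$ from $s$ to $v$, and observe that the edge leaving the last $F_x$-vertex on $P$ lies in $E_x$ with head reaching $v$, hence falls into one of the four hat-sets. Two small imprecisions, neither fatal: $s$ does lie in $V[S_x]$ (since $S_x$ is backward-closed and therefore contains $s$) — the fact you actually need is that $s\in C_{\text{root}}$, outside the region enclosed by $F_x$; and membership of $(w,w')$ in $\mathcal{E}_4$ is governed by whether $F_{c[w']}$ (not $F_x$) is a $4$-frame, which follows by the same observation since $c[w']$ is a descendant of $x$ and hence sits at depth $\ge i+1>j_2[v]$.
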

\begin{proof}
  Let $x$ be the ancestor of $c[v]$ at depth $i+1$.  Since $G$ is a
  single-source graph, there is a path from $s$ to $v$.  This path
  must contain a vertex in $V[F_x]$. 
  But then the edge following the last such
  vertex on the path must be in
  $\widehat{L}^0_i(v)\cup{}\widehat{R}^0_i(v)\cup{}\widehat{L}^1_i(v)\cup{}\widehat{R}^1_i(v)$
  which is therefore nonempty.
\end{proof}

\begin{lemma}\label{lem:4-hat-subset}
  For any $u,v\in{}V$, $j_2[v]\leq{}i<d[u]$, and $\alpha\in\set{0,1}$: If $\reachable{u}{v}$ then
  $\widehat{L}^\alpha_i(u)\subseteq\widehat{L}^\alpha_i(v)$ and $\widehat{R}^\alpha_i(u)\subseteq\widehat{R}^alpha_i(v)$.
\end{lemma}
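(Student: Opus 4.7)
The plan is to mirror essentially verbatim the proof of Lemma~\ref{lem:2n-hat-subset}, after verifying that the index range $j_2[v]\le i<d[u]$ places us where the relevant quantities are defined on both sides.

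First, I would invoke the structural lemma stated just after Theorem~\ref{thm:st-tree-exists}: whenever $\reachable{u}{v}$, the node $c[u]$ is an ancestor of $c[v]$ in $\mathcal T$. Consequently $d[u]\le d[v]$, and since the non-root 2-frame ancestors of $c[u]$ are also ancestors of $c[v]$, we have $J_2[u]\subseteq J_2[v]$ and hence $j_2[u]\le j_2[v]$. Together with the hypothesis, this gives $j_2[u]\le j_2[v]\le i<d[u]\le d[v]$, so $l^\alpha_i$, $r^\alpha_i$ and all the hat-sets appearing in the statement are well-defined at both $u$ and $v$.

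Second, since $c[u]$ is an ancestor of $c[v]$ and $i<d[u]$, the ancestor of $c[u]$ at depth $i+1$ coincides with the ancestor of $c[v]$ at depth $i+1$. Call this common node $x$. By Definition~\ref{def:4-i-of-v} we then have $E_i(u)=E_x=E_i(v)$. The key observation is that the partition $(\mathcal L^0,\mathcal R^0,\mathcal L^1,\mathcal R^1)$ of $\mathcal E_4$ in Definition~\ref{def:4-global-LR-partition} is assigned globally: the class to which an edge belongs depends only on the edge and the decomposition, not on any query vertex. Intersecting $E_i(u)=E_i(v)$ with each class therefore yields $L^\alpha_i(u)=L^\alpha_i(v)$ and $R^\alpha_i(u)=R^\alpha_i(v)$ for each $\alpha\in\set{0,1}$.

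Third, fix $(w,w')\in\widehat{L}^\alpha_i(u)$. By definition $(w,w')\in L^\alpha_i(u)=L^\alpha_i(v)$ and $\reachable{w'}{u}$. Combined with $\reachable{u}{v}$, transitivity of reachability gives $\reachable{w'}{v}$, so $(w,w')\in\widehat{L}^\alpha_i(v)$. The argument for $\widehat{R}^\alpha_i$ is identical. The only step requiring even a moment's thought is the globality of the $4$-frame partition, making the sets $L^\alpha_i$ and $R^\alpha_i$ depend only on the ancestor $x$ and not on the query vertex; once this is in place the inclusion is immediate, and I expect no real obstacle.
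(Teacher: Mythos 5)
Your proof is correct and follows the same route as the paper's, which simply notes that $\reachable{u}{v}$ implies $c[u]\preceq c[v]$, hence $L^\alpha_i(u)=L^\alpha_i(v)$ (since the partition of Definition~\ref{def:4-global-LR-partition} is global), and the hat-inclusions then follow from transitivity of reachability. The extra verifications you add — that $j_2[u]\le j_2[v]$ and $d[u]\le d[v]$ so that all the quantities are defined on both sides, and that the depth-$(i+1)$ ancestors coincide — are implicit in the paper's terse proof, so you have merely made explicit what the authors left unstated.
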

\begin{proof}
  Since $\reachable{u}{v}$, $c[u]$ is ancestor to $c[v]$ and so
  $L^\alpha_i(u)=L^\alpha_i(v)$ and hence
  $\widehat{L}^\alpha_i(u)\subseteq\widehat{L}^\alpha_i(v)$.  Similarly,
  $R^\alpha_i(u)=R^\alpha_i(v)$ and $\widehat{R}^\alpha_i(u)\subseteq\widehat{R}^\alpha_i(v)$.
\end{proof}



\begin{lemma}\label{lem:4-lr-multiframe}
  Given any vertex $v\in{}V$, $j_2[v]\leq{}i<d[v]$,
  $\alpha\in\set{0,1}$, and $(w,w')\in{}E_i(v)$.  Then:
  \begin{align*}
    (w,w')&\in\widehat{L}^\alpha_i(v)
    &&\implies&
    (w,w')&\in\widehat{L}^\alpha_{i'}(v)\text{ for all }i^\prime, \max\set{d[w],j_2[v]}\le i^\prime <\min\set{d[w'],d[v]}
    \\
    (w,w')&\in\widehat{R}^\alpha_i(v)
    &&\implies&
    (w,w')&\in\widehat{R}^\alpha_{i'}(v)\text{ for all }i^\prime, \max\set{d[w],j_2[v]}\le i^\prime <\min\set{d[w'],d[v]}
  \end{align*}
\end{lemma}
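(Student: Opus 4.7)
My plan for proving Lemma~\ref{lem:4-lr-multiframe} is to mirror the short proof of Lemma~\ref{lem:2n-lr-multiframe}, adapted to the $4$-frame setting. I will fix $v$, $i$, $\alpha$, and an edge $(w,w') \in \widehat{L}^\alpha_i(v)$, and set $j = \max\{d[w], j_2[v]\}$ and $k = \min\{d[w'], d[v]\}$. The goal is then to show $(w,w') \in \widehat{L}^\alpha_{i'}(v)$ for every $j \leq i' < k$; the $\widehat{R}^\alpha$ case will be entirely symmetric and I will just remark on it.

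The observation I will exploit is that the partition $(\mathcal{L}^0, \mathcal{R}^0, \mathcal{L}^1, \mathcal{R}^1)$ from Definition~\ref{def:4-global-LR-partition} labels each edge once and for all, independently of the query level. Consequently, if $(w,w') \in \mathcal{L}^\alpha$ at level $i$ then this membership persists at every other level, and the only level-dependent condition I need to re-verify is that $(w,w') \in E_{i'}(v) = E_{x'}$, where $x'$ is the ancestor of $c[v]$ at depth $i'+1$. I will argue that $d[w] \leq i' < i'+1$ makes $c[w]$ a proper ancestor of $x'$, while $d[w'] > i'$ combined with $\reachable{w'}{v}$ (so $c[w']$ lies on the root-to-$c[v]$ path) places $c[w']$ at $x'$ or deeper; hence $w' \in V[C_z]$ for some descendant $z$ of $x'$. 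This qualifies $(w,w')$ for $E_{x'}$, and therefore for $L^\alpha_{i'}(v)$, and the hypothesis $\reachable{w'}{v}$ then promotes this to $(w,w') \in \widehat{L}^\alpha_{i'}(v)$.

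I do not expect a serious obstacle; the argument is really bookkeeping against the definitions, just as in the $2$-frame case. The one subtlety worth tracking is why the lower bound is $\max\{d[w], j_2[v]\}$ rather than simply $d[w]$ as in Lemma~\ref{lem:2n-lr-multiframe}: the sets $L^\alpha_{i'}(v)$ and $\widehat{L}^\alpha_{i'}(v)$ are only defined for $j_2[v] \leq i' < d[v]$, i.e., within the nested-$4$-frame regime above $c[v]$, so I must have $i' \geq j_2[v]$ merely for the target statement to be well-formed.
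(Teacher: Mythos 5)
Your proposal matches the paper's proof essentially step for step: both set $j=\max\{d[w],j_2[v]\}$ and $k=\min\{d[w'],d[v]\}$, observe $(w,w')\in E_{i'}$ throughout that range, use the level-independence of the partition $(\mathcal{L}^0,\mathcal{R}^0,\mathcal{L}^1,\mathcal{R}^1)$ to get $(w,w')\in L^\alpha_{i'}(v)$, and then upgrade via $\reachable{w'}{v}$ to $(w,w')\in\widehat{L}^\alpha_{i'}(v)$, with $R$ symmetric. Your extra remark explaining why the lower bound is $\max\{d[w],j_2[v]\}$ rather than $d[w]$ (well-definedness of the $4$-frame quantities) is correct and is the one place the $4$-frame lemma genuinely differs from Lemma~\ref{lem:2n-lr-multiframe}.
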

\begin{proof}
  Let $j=\max\set{d[w],j_2[v]}$ and $k=\min\set{d[w'],d[v]}$.
  Clearly $(w,w')\in{}E_{i'}$ for all $j\leq{}i'<k$.  Suppose
  $(w,w')\in\widehat{L}^\alpha_i(v)\subseteq{}L^\alpha_i(v)$, then
  since $j\leq{}i<k$ the definition give us
  $(w,w')\in{}L^\alpha_{i'}(v)$ for all $j\leq{}i'<k$.  And since
  $\reachable{w'}{v}$ this implies
  $(w,w')\in\widehat{L}^\alpha_{i'}(v)$ for all $j\leq{}i'<k$ and the result follows.
  The case for $R$ is symmetric.
\end{proof}


\begin{definition}\label{def:4-pl-pr}
  For any vertex $v\in{}V$ and $\alpha\in\set{0,1}$ let
  \begin{align*}
    p^\alpha_l[v]&:=
    \begin{cases}
      \bot&\text{ if }d[v]=0\vee\text{ $F_{d[v]-1}(v)$ is a $2$-frame}
      \\
      l^\alpha_{d[v]-1}(v)&\text{ otherwise}
    \end{cases}
    \\
    p^\alpha_r[v]&:=
    \begin{cases}
      \bot&\text{ if }d[v]=0\vee\text{ $F_{d[v]-1}(v)$ is a $2$-frame}
      \\
      r^\alpha_{d[v]-1}(v)&\text{ otherwise}
    \end{cases}
  \end{align*}
  and let $T^\alpha_l$ and $T^\alpha_r$ denote the rooted forests over $V$ whose
  parent pointers are $p^\alpha_l$ and $p^\alpha_r$ respectively.
\end{definition}

\begin{definition}\label{def:4-lr-prime}
  For any $v\in{}V\cup\set{\bot}$, $\alpha\in\set{0,1}$, and $i\geq{}j_2[v]$ let
  \begin{align*}
    l'^\alpha_i(v)&:=
    \begin{cases}
      v&\text{ if $v=\bot\vee{}d[v]\leq{}i$}
      \\
      l'^\alpha_i(p^\alpha_l[v])&\text{ otherwise}
    \end{cases}
    \\
    r'^\alpha_i(v)&:=
    \begin{cases}
      v&\text{ if $v=\bot\vee{}d[v]\leq{}i$}
      \\
      r'^\alpha_i(p^\alpha_r[v])&\text{ otherwise}
    \end{cases}
  \end{align*}
\end{definition}

\begin{lemma}\label{lem:4-lr-prime-basics}
  Let $v\in{}V$, $\alpha\in\set{0,1}$, and $i\geq{}j_2[v]$ be given, then
  \begin{align*}
    i&=d[v]-1
    &&\implies&
    l'^\alpha_i(v)&=l^\alpha_i(v)
    &&\wedge&
    r'^\alpha_i(v)&=r^\alpha_i(v)
    \\
    i&\leq{}d[v]-1
    &&\implies&
    l'^\alpha_i(v)&\in\source(\widehat{L}^\alpha_i(v))\cup\set{\bot}
    &&\wedge&
    r'^\alpha_i(v)&\in\source(\widehat{R}^\alpha_i(v))\cup\set{\bot}
    \\
    i&>d[v]-1
    &&\implies&
    l'^\alpha_i(v)&=v
    &&\wedge&
    r'^\alpha_i(v)&=v
  \end{align*}
\end{lemma}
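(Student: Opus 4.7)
The plan is to mirror the proof of Lemma~\ref{lem:2n-lr-prime-basics} essentially verbatim, carrying the extra superscript $\alpha\in\set{0,1}$ through every step. Since $l'^\alpha$ and $r'^\alpha$ are defined symmetrically via $p^\alpha_l$ and $p^\alpha_r$, I will argue for $l'^\alpha$ only, and split on the three ranges of $i$ compared with $d[v]-1$.

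The two extreme cases follow directly from Definition~\ref{def:4-lr-prime}. If $i>d[v]-1$, then $d[v]\leq i$, and the base case gives $l'^\alpha_i(v)=v$. If $i=d[v]-1$, the hypothesis $i\geq j_2[v]$ forces $F_{c[v]}$ to be a $4$-frame (otherwise $d[v]\in J_2[v]$ so $j_2[v]\geq d[v]$), hence $p^\alpha_l[v]=l^\alpha_{d[v]-1}(v)$. One unfolding gives $l'^\alpha_i(v)=l'^\alpha_i(l^\alpha_i(v))$, and since the source of any edge in $\widehat{L}^\alpha_i(v)$ lies on $F_{c[v]}$ inside an ancestral component, $d[l^\alpha_i(v)]\leq i$, so the recursion terminates in one more step returning $l^\alpha_i(v)$. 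Membership in $\source(\widehat{L}^\alpha_i(v))\cup\set{\bot}$ is then immediate from the definition of $l^\alpha_i$.

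The main case is $i<d[v]-1$, handled by induction on $d[v]-i$. If $l'^\alpha_i(v)=\bot$ the claim is vacuous, so suppose not, and let $u$ be the child of $l'^\alpha_i(v)$ in $T^\alpha_l$ on the path from $v$ to $l'^\alpha_i(v)$. Then $d[u]>i$ (else the recursion would have stopped earlier) and $l'^\alpha_i(v)=p^\alpha_l[u]=l^\alpha_{d[u]-1}(u)$ equals the source of some edge $(w,w')\in\widehat{L}^\alpha_{d[u]-1}(u)$ with $w=l'^\alpha_i(v)$. The key step is to apply Lemma~\ref{lem:4-lr-multiframe} to demote this edge to index $i$; its hypothesis $\max\set{d[w],j_2[u]}\leq i<\min\set{d[w'],d[u]}$ decomposes into four checks: $d[w]\leq i$ since the recursion terminated at $w$; $i<d[u]$ from above; $d[w']\geq d[u]$ since $w'$ lies in a descendant component of $c[u]$; and $j_2[u]\leq i$, which is the only nontrivial point.

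This last inequality is the main obstacle and the one place where the $4$-frame proof is not an automatic copy of the $2$-frame proof. It is handled by observing that since parent pointers in $T^\alpha_l$ decrease depth and correspond to reachability, $c[u]$ is an ancestor of $c[v]$ in $\mathcal{T}$, so every non-root $2$-frame ancestor of $c[u]$ is also one of $c[v]$; thus $J_2[u]\subseteq J_2[v]$ and $j_2[u]\leq j_2[v]\leq i$. Once Lemma~\ref{lem:4-lr-multiframe} fires we get $(w,w')\in\widehat{L}^\alpha_i(u)$, and since $\reachable{u}{v}$, Lemma~\ref{lem:4-hat-subset} upgrades this to $(w,w')\in\widehat{L}^\alpha_i(v)$, yielding $l'^\alpha_i(v)=w\in\source(\widehat{L}^\alpha_i(v))$ and closing the induction.
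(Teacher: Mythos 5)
Your proof follows the same three-case structure as the paper's and reaches the same conclusions via the same chain of reductions (unfolding $l'^\alpha$, invoking Lemma~\ref{lem:4-lr-multiframe} and then Lemma~\ref{lem:4-hat-subset}). In fact you are slightly more careful than the paper: you explicitly check that $i\geq j_2[v]$ rules out $p^\alpha_l[v]=\bot$ in the $i=d[v]-1$ case, and you verify the $j_2[u]\leq{}i$ side of the multiframe lemma's hypothesis (via $c[u]\preceq{}c[v]\Rightarrow{}J_2[u]\subseteq{}J_2[v]$), a condition the paper's proof uses implicitly but never states; the ``induction on $d[v]-i$'' wrapper is harmless but unnecessary, as the argument is a direct one-step unfolding just as in the paper.
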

\begin{proof}
  We will show this for $l'$ only, as $r'$ is completely symmetrical.
  If $i>d[v]-1$ then $d[v]\leq{}i$ and we get $l'^\alpha_i(v)=v$ directly
  from the definition of $l'$.
  Similarly if $i=d[v]-1$ then $l'^\alpha_i(v) =
  l'^\alpha_i(p^\alpha_l[v]) =
  l'^\alpha_i(l^\alpha_{d[v]-1}(v)) =
  l'^\alpha_i(l^\alpha_i(v)) =
  l^\alpha_i(v) \in
  \source(\widehat{L}^\alpha_i(v))\cup\set{\bot}$.
  Finally suppose $i<d[v]-1$.  If $l'^\alpha_i(v)=\bot$ we are done,
  so suppose that is not the case.  Let $u$ be the child of
  $l'^\alpha_i(v)$ in $T_l$ that is ancestor to $v$.  Then
  $l'^\alpha_i(v) = l'^\alpha_i(u) = p^\alpha_l[u] =
  l^\alpha_{d[u]-1}(u)$.  By definition of $l^\alpha_{d[u]-1}(u)$ there
  exists an edge $(w,w')\in\widehat{L}^\alpha_{d[u]-1}$ where
  $w=l^\alpha_{d[u]-1}(u)$ and $d[w]\leq{}i<d[w']\leq{}d[u]$ and by
  setting $(v,i,(w,w')) = (u,d[u]-1,(w,w'))$ in
  lemma~\ref{lem:4-lr-multiframe} we get
  $(w,w')\in\widehat{L}^\alpha_i(u)$, and therefore
  $l'^\alpha_i(v)\in\source(\widehat{L}^\alpha_i(u))$.  But since
  $\reachable{u}{v}$ we have
  $\widehat{L}^\alpha_i(u)\subseteq\widehat{L}^\alpha_i(v)$ by Lemma~\ref{lem:4-hat-subset} and we are
  done.
\end{proof}

\begin{lemma}\label{lem:4-lr-prime-reduce}
  Let $v\in{}V$, $\alpha\in\set{0,1}$, and $j_2[v]\leq{}i\leq{}j$ then
  \begin{align*}
    l'^\alpha_i(l'^\alpha_j(v))&=l'^\alpha_i(v)
    &&\wedge&
    r'^\alpha_i(r'^\alpha_j(v))&=r'^\alpha_i(v)
  \end{align*}
\end{lemma}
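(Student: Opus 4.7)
The plan is to argue by induction on $d[v]$, mirroring the proof of the $2$-frame analogue Lemma~\ref{lem:2n-lr-prime-reduce}. The intuition is that $l'^\alpha_j(v)$ is obtained by climbing the rooted forest $T^\alpha_l$ from $v$ until the first ancestor whose $d$-value is at most $j$ (or until we fall off to $\bot$); since $i\leq j$, applying $l'^\alpha_i$ to that ancestor simply continues the climb until the $d$-value drops to at most $i$, which is the same as climbing directly from $v$ to $d$-value at most $i$, i.e.\ $l'^\alpha_i(v)$.

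For the base cases $v=\bot$ or $d[v]\leq j$, Definition~\ref{def:4-lr-prime} gives $l'^\alpha_j(v)=v$ directly, so the $l$-equality is immediate; the $r$-equality is identical. In the inductive step we have $d[v]>j\geq i\geq j_2[v]$, and since $j_2[v]<d[v]$ the component $c[v]$ is not itself a $2$-frame node, so its parent frame $F_{c[v]}$ is a $4$-frame and $p^\alpha_l[v]$ is well-defined (possibly $\bot$). If $p^\alpha_l[v]=\bot$, unfolding once gives $l'^\alpha_j(v)=\bot=l'^\alpha_i(v)$ and the equality reads $\bot=\bot$. Otherwise unfolding once (using $i<d[v]$ for the second equality) yields
\[
  l'^\alpha_j(v)=l'^\alpha_j(p^\alpha_l[v])\qquad\text{and}\qquad l'^\alpha_i(v)=l'^\alpha_i(p^\alpha_l[v]),
\]
and applying the inductive hypothesis at $p^\alpha_l[v]$ gives $l'^\alpha_i(l'^\alpha_j(p^\alpha_l[v]))=l'^\alpha_i(p^\alpha_l[v])$; chaining the three equalities closes the $l$-case. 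The $r$-case is completely symmetric, swapping $l\leftrightarrow r$, $p^\alpha_l\leftrightarrow p^\alpha_r$, and $T^\alpha_l\leftrightarrow T^\alpha_r$.

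The inductive hypothesis applies at $p^\alpha_l[v]$ because that vertex belongs to a strictly shallower component than $c[v]$ (it lies on $F_{c[v]}$, whose vertices all come from proper ancestors of $c[v]$), so $d[p^\alpha_l[v]]<d[v]$; and $j_2[p^\alpha_l[v]]\leq j_2[v]\leq i$, since any non-root $2$-frame ancestor of a proper ancestor of $c[v]$ is also a non-root $2$-frame ancestor of $c[v]$. I do not expect any real obstacle: the argument is essentially a pedantic unfolding of the recursive definitions and a direct translation of the $2$-frame proof with the added $\alpha$-superscript and $j_2$-bookkeeping.
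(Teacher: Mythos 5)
Your proof is correct and captures the same underlying idea as the paper: $l'^\alpha_j(v)$ lies on the $T^\alpha_l$-path from $v$ to $l'^\alpha_i(v)$, so applying $l'^\alpha_i$ continues the same climb. The paper states this as a one-line observation; your explicit induction on $d[v]$ (with the $\bot$ case and the $j_2$-bookkeeping at $p^\alpha_l[v]$) is a formalization of exactly that observation, not a different route.
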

\begin{proof}
  $l'^\alpha_j(v)$ is on the path from $v$ to $l'^\alpha_i(v)$ in $T_l$, so this
  follows trivially from the recursion.  The case for $r'$ is symmetric.
\end{proof}

\begin{lemma}\label{lem:4-lr-bot}
  Let $v\in{}V$, $\alpha\in\set{0,1}$, and $j_2[v]\leq{}i<d[v]-1$, then
  \begin{alignat*}{7}
    l^\alpha_i(v)&=\bot
    &&\quad\implies\quad&
    l'^\alpha_i(l^\alpha_{i+1}(v))&=\bot
    &&\qquad\wedge\qquad&
    r^\alpha_i(v)&=\bot
    &&\quad\implies\quad&
    r'^\alpha_i(r^\alpha_{i+1}(v))&=\bot
  \end{alignat*}
\end{lemma}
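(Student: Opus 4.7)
The plan is to mirror the proof of Lemma~\ref{lem:2n-lr-bot} in the 4-frame setting, carrying the superscript $\alpha$ through each step and replacing every appeal to a 2-frame lemma by its 4-frame counterpart (Lemma~\ref{lem:4-lr-multiframe}, Lemma~\ref{lem:4-lr-prime-basics}, Lemma~\ref{lem:4-hat-subset}). By the built-in symmetry between $L^\alpha$ and $R^\alpha$, it suffices to establish the implication for $l$.

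I would start from the assumption $l^\alpha_i(v)=\bot$, which by Definition~\ref{def:4-i-of-v} is equivalent to $\widehat{L}^\alpha_i(v)=\emptyset$, and set $w:=l^\alpha_{i+1}(v)$. If $w=\bot$, the base case of Definition~\ref{def:4-lr-prime} immediately gives $l'^\alpha_i(w)=\bot$. So the interesting case is $w\neq\bot$, where the goal is still to force $l'^\alpha_i(w)=\bot$.

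The key substep is to show that in this case $d[w]=i+1$ necessarily. Membership $w\in\source(\widehat{L}^\alpha_{i+1}(v))$ already forces $d[w]\leq i+1$, so it is enough to rule out $d[w]\leq i$. If $d[w]\leq i$ I would pick any edge $(w,w')\in\widehat{L}^\alpha_{i+1}(v)$ issuing from $w$; since $w'\in V[C_z]$ for some descendant $z$ of the depth-$(i+2)$ ancestor of $c[v]$, we have $d[w']\geq i+2$, and then Lemma~\ref{lem:4-lr-multiframe} with $i'=i$ places $(w,w')$ in $\widehat{L}^\alpha_i(v)$, contradicting emptiness. Once $d[w]=i+1$ is established, Lemma~\ref{lem:4-lr-prime-basics} gives $l'^\alpha_i(w)\in\source(\widehat{L}^\alpha_i(w))\cup\{\bot\}$, and since $w$ reaches $v$ (as the source of an edge in $\widehat{L}^\alpha_{i+1}(v)$), Lemma~\ref{lem:4-hat-subset} forces $\widehat{L}^\alpha_i(w)\subseteq\widehat{L}^\alpha_i(v)=\emptyset$, so $l'^\alpha_i(w)=\bot$ as desired.

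The only place I expect any friction is in the application of Lemma~\ref{lem:4-lr-multiframe}, whose hypotheses demand $\max\{d[w],j_2[v]\}\leq i'<\min\{d[w'],d[v]\}$ for the index $i'=i$. The lower bound $j_2[v]\leq i$ comes from the standing hypothesis of the lemma and $d[w]\leq i$ is precisely the case under contradiction; the upper bound follows from $i<d[v]-1<d[v]$ and $i<i+1<i+2\leq d[w']$. With those inequalities in hand the transcription from the 2-frame proof goes through unchanged, and no additional structural arguments about the 4-frame are required.
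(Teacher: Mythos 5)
Your proof is correct and takes essentially the same route as the paper's; you simply make explicit the step the paper leaves implicit, namely that $w=l^\alpha_{i+1}(v)\neq\bot$ must have $d[w]=i+1$ (the paper asserts this directly, while you supply the short contradiction via Lemma~\ref{lem:4-lr-multiframe}), and you correctly invoke Lemma~\ref{lem:4-hat-subset} for the final inclusion, which the paper's 4-frame version omits citing although its 2-frame analogue does.
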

\begin{proof}
  If $l^\alpha_i(v)=\bot$ then $\widehat{L}^\alpha_i(v)=\emptyset$, so
  either $l^\alpha_{i+1}(v)=\bot$ implying
  $l'^\alpha_i(l^\alpha_{i+1}(v))=\bot$ by the definition of $l'$, or
  $l^\alpha_{i+1}(v)\not\in\source(\widehat{L}^\alpha_i(v))$ so
  $d[l^\alpha_{i+1}(v)]=i+1$ and by Lemma~\ref{lem:4-lr-prime-basics}
  $l'^\alpha_i(l^\alpha_{i+1}(v)) \in
  \source(\widehat{L}^\alpha_i(l^\alpha_{i+1}(v)))\cup\set{\bot}
  \subseteq \source(\widehat{L}^\alpha_i(v))\cup\set{\bot}=\set{\bot}$
  so again $l'^\alpha_i(l^\alpha_{i+1}(v))=\bot$.  The case for $r$ is
  symmetric.
\end{proof}


\begin{lemma}[Crossing lemma]\label{lem:4-crossing}
  Let $v\in{}V$, $\alpha\in\set{0,1}$, and $j_2[v]\leq{}i<d[v]-1$.
  \begin{alignat*}{7}
    l^\alpha_i(v)&\neq{}l'^\alpha_i(l^\alpha_{i+1}(v))
    &&\implies\quad&
    l^\alpha_i(v)&=l'^\alpha_i(m)
    &&\:\wedge\:&
    r^\alpha_i(v)&=r'^\alpha_i(m)
    &&\:\wedge\:&
    d[m]&=i+1
    \\&&&\text{where }m=r^\alpha_{i+1}(v)\neq\bot\hspace{-30cm}
    \\
    r^\alpha_i(v)&\neq{}r'^\alpha_i(r^\alpha_{i+1}(v))
    &&\implies\quad&
    l^\alpha_i(v)&=l'^\alpha_i(m)
    &&\:\wedge\:&
    r^\alpha_i(v)&=r'^\alpha_i(m)
    &&\:\wedge\:&
    d[m]&=i+1
    \\&&&\text{where }m=l^\alpha_{i+1}(v)\neq\bot\hspace{-30cm}
  \end{alignat*}
\end{lemma}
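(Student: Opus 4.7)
The plan is to mirror the proof of Lemma~\ref{lem:2n-crossing} for 2-frames, threading the superscript $\alpha$ through each step and replacing the 2-frame planarity argument with one adapted to nested 4-frames. I handle the case $l^\alpha_i(v)\neq l'^\alpha_i(l^\alpha_{i+1}(v))$; the other case is symmetric. Let $x$ be the ancestor of $c[v]$ at depth $i+1$ and $x'$ its child that is ancestor of $c[v]$ at depth $i+2$. Since $i\geq j_2[v]$, both $F_x$ and $F_{x'}$ are $4$-frames, and by the numbering convention we may assume $t^\alpha_x=t^\alpha_{x'}$.

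First I would invoke Lemma~\ref{lem:4-lr-bot} to get $l^\alpha_i(v)\neq\bot$, and pick a last edge $(w,w')\in\widehat{L}^\alpha_i(v)$ with $w=l^\alpha_i(v)$ together with a path $P=\simplepath{w'}{v}$. Then I argue that $(w,w')\notin E_{i+1}(v)$: otherwise Definition~\ref{def:4-i-of-v} would give $(w,w')\in L^\alpha_{i+1}(v)$, and reachability to $v$ upgrades this to $(w,w')\in\widehat{L}^\alpha_{i+1}(v)$; Lemma~\ref{lem:4-lr-prime-basics} would then yield $l^\alpha_i(v)=l'^\alpha_i(l^\alpha_{i+1}(v))$, contradicting the hypothesis. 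Consequently $P$ must cross $\widehat{F}_{i+1}(v)$ (nonempty by Lemma~\ref{lem:4-hat-nonempty}); let $(u,u')$ be the last such crossing edge, so $\reachable{w'}{u}$ and $d[u]\geq i+1$. The same hypothesis also rules out $(u,u')\in\widehat{L}^\alpha_{i+1}(v)$, for that would force $d[l^\alpha_{i+1}(v)]=i+1$ and again give $l^\alpha_i(v)=l'^\alpha_i(l^\alpha_{i+1}(v))$.

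The main obstacle is the planarity step: showing $(u,u')\in\widehat{R}^\alpha_{i+1}(v)$ rather than $\widehat{L}^{1-\alpha}_{i+1}(v)\cup\widehat{R}^{1-\alpha}_{i+1}(v)$. The point is that $F_{x'}$ is drawn inside $F_x$ sharing the segment containing $t^\alpha_x=t^\alpha_{x'}$, and the only disegments of $F_{x'}$ incident to $t^\alpha_{x'}$ are its $L^\alpha$- and $R^\alpha$-disegments; the $(1-\alpha)$-side of $F_{x'}$ lies on the far side of $F_{x'}$ from $t^\alpha_x$. Since $(w,w')$ is incident to a corner on the $L^\alpha$-disegment of $F_x$, which ends at $t^\alpha_x$, the vertex $w'$ lies in the connected region of the interior of $F_x$ bounded by the shared segment and the $\alpha$-sides of $F_{x'}$. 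Any $\simplepath{w'}{v}$ must therefore enter the interior of $F_{x'}$ through a corner on either the $L^\alpha$- or $R^\alpha$-disegment of $F_{x'}$; combined with the exclusion above, this forces $(u,u')\in\widehat{R}^\alpha_{i+1}(v)$.

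Once the side of $(u,u')$ is pinned down, the remainder is a direct adaptation of the 2-frame argument: we reroute $P$ through $(m,m')$ where $m=r^\alpha_{i+1}(v)\neq\bot$, which is possible because $r^\alpha_{i+1}(v)$ is the last source vertex in $\widehat{R}^\alpha_{i+1}(v)$ along the clockwise dipath of $F_{x'}$ and planarity lets us swap tails with the $\widehat{R}^\alpha_{i+1}(v)$-edges and the original path $P$. The depth equality $d[m]=i+1$ follows from $i+1\leq d[w']\leq d[r^\alpha_{i+1}(v)]\leq i+1$. Repeating the same rerouting argument starting from the last edge of $\widehat{R}^\alpha_i(v)$ and crossing $P\cup\widehat{R}^\alpha_{i+1}(v)$ shows that some $\simplepath{r^\alpha_i(v)}{v}$ passes through $(m,m')$, hence $r^\alpha_i(v)=r^\alpha_i(m)$. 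Applying Lemma~\ref{lem:4-lr-prime-basics} with $d[m]=i+1$ then converts these equalities into $l^\alpha_i(v)=l'^\alpha_i(m)$ and $r^\alpha_i(v)=r'^\alpha_i(m)$, completing the proof.
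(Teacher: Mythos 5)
Your proof follows the paper's template for steps (1)--(5) correctly, but the ``planarity step'' contains a genuine error that undermines the whole argument. You write: ``by the numbering convention we may assume $t^\alpha_x=t^\alpha_{x'}$.'' This is wrong, and in fact the paper proves the \emph{opposite}. The numbering convention only guarantees that \emph{some} $\beta\in\{0,1\}$ satisfies $t^\beta_x=t^\beta_{x'}$; it does not say $\beta$ equals the specific $\alpha$ fixed by the lemma. The paper shows that, precisely in the situation under consideration (where $(w,w')\notin E_{i+1}(v)$), one must have $t^\alpha_i(v)\neq t^\alpha_{i+1}(v)$: if the $\alpha$-targets agreed, then the shared-boundary structure near that corner would force $(w,w')\in L^\alpha_{i+1}(v)\cup\{\bot\}$, contradicting $(w,w')\notin E_{i+1}(v)$. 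Hence by the convention it is the \emph{other} target that is shared, $t^{1-\alpha}_i(v)=t^{1-\alpha}_{i+1}(v)$.

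Your geometric justification that ``$w'$ lies in the connected region bounded by the shared segment and the $\alpha$-sides of $F_{x'}$'' is built on the false orientation, so it does not establish $(u,u')\in\widehat{R}^\alpha_{i+1}(v)$. The paper's route to this conclusion is purely combinatorial: from $t^{1-\alpha}_i(v)=t^{1-\alpha}_{i+1}(v)$ one gets $L^{1-\alpha}_{i+1}(v)\subseteq L^{1-\alpha}_i(v)$ and $R^{1-\alpha}_{i+1}(v)\subseteq R^{1-\alpha}_i(v)$, which forces every source vertex of these sets to lie on $F_x$ and hence to have depth $\leq i$. Since $d[u]\geq i+1$, the edge $(u,u')$ cannot lie in either $(1-\alpha)$-set, and since $L^\alpha_{i+1}(v)$ was already excluded, $(u,u')\in\widehat{R}^\alpha_{i+1}(v)$ is the only option. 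To repair your proof you should delete the assumption $t^\alpha_x=t^\alpha_{x'}$, instead derive $t^\alpha_i(v)\neq t^\alpha_{i+1}(v)$ from the already-established fact $(w,w')\notin E_{i+1}(v)$, and then replace the informal ``far side'' picture with the depth argument above (or an equivalent planar one that uses the correct shared target $t^{1-\alpha}$). The remainder of your proof (rerouting $P$ through $m=r^\alpha_{i+1}(v)$, the depth computation $d[m]=i+1$, and the final appeal to Lemma~\ref{lem:4-lr-prime-basics}) matches the paper and is fine once the crossing-edge classification is fixed.
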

\begin{figure}
\centering
\includegraphics[width=0.7\linewidth]{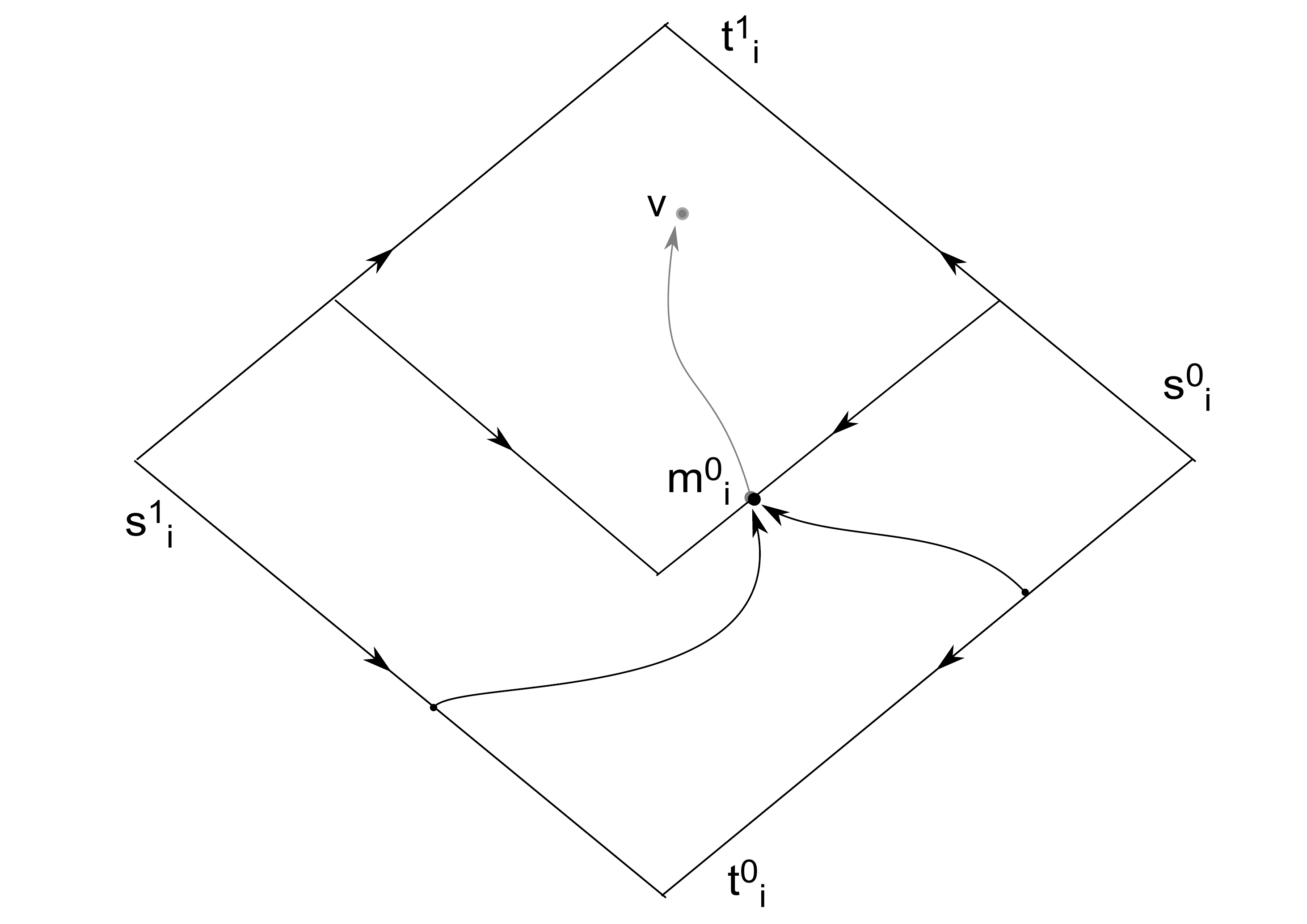}
\caption{Sometimes the best path from $L^0_i(v)$ to $v$ must go
  through $R^0_{i+1}(v)$.}
\label{fig:4frame-crossing}
\end{figure}
\begin{proof}
  Suppose $l^\alpha_i(v)\neq{}l'^\alpha_i(l^\alpha_{i+1}(v))$ (the case
  $r^\alpha_i(v)\neq{}r'^\alpha_i(r^\alpha_{i+1}(v))$ is symmetrical).
  Then $l^\alpha_i(v)\neq\bot$ by lemma~\ref{lem:4-lr-bot}.
  Thus there is a last edge $(w,w')\in\widehat{L}^\alpha_i(v)$ with
  $w=l^\alpha_i(v)$ and $d[w]\leq{}i<d[w']$ and a path
  $P=\simplepath{w'}{v}$.

  Now $(w,w')\not\in{}E_{i+1}(v)$
  since otherwise by Definition~\ref{def:4-i-of-v}
  $(w,w')\in{}L^\alpha_{i+1}(v)$ and since $\reachable{w'}{v}$ even
  $(w,w')\in\widehat{L}^\alpha_{i+1}(v)$ implying
  $l^\alpha_i(v)=l^\alpha_{i+1}(v)$ and thus
  $l^\alpha_i(v)=l'^\alpha_i(l^\alpha_{i+1}(v))$ by
  lemma~\ref{lem:4-lr-prime-basics}, contradicting our assumption.

  Since $(w,w')\not\in{}E_{i+1}(v)$, the path $P$ must cross
  $\widehat{F}_{i+1}(v)$.  Let $(u,u')$ be the last edge in
  $P\cap{}\widehat{F}_{i+1}(v)$.  Then $\reachable{w'}{u}$ so
  $d[u]\geq{}i+1$ and $(u,u')\not\in{}L^\alpha_{i+1}(v)$ since
  otherwise $d[l^\alpha_{i+1}(v)]=i+1$ and hence by
  Lemma~\ref{lem:4-lr-prime-basics}
  $l^\alpha_i(v)=l'^\alpha_i(l^\alpha_{i+1}(v))$, again contradicting
  our assumption.

  Also, $t^\alpha_i(v)\neq{}t^\alpha_{i+1}(v)$ because
  $t^\alpha_i(v)=t^\alpha_{i+1}(v)$ would imply 
  $(w,w')\in{}L^\alpha_{i+1}(v)\cup\set\bot$
  which we have just shown is not the case.

  Since $t^\alpha_i(v)\neq{}t^\alpha_{i+1}(v)$, then by definition
  $t^{1-\alpha}_i(v)=t^{1-\alpha}_{i+1}(v)$ and hence 
  $L^{1-\alpha}_{i+1}(v)\subseteq{}L^{1-\alpha}_{i}(v)$ and
  $R^{1-\alpha}_{i+1}(v)\subseteq{}R^{1-\alpha}_{i}(v)$, implying
  $d[w'']\leq{}i$ for all
  $w''\in{}L^{1-\alpha}_{i+1}(v)\cup{}R^{1-\alpha}_{i+1}(v)$.
  Thus,
  $(u,u')\not\in{}L^{1-\alpha}_{i+1}(v)\cup{}R^{1-\alpha}_{i+1}(v)$ since
  $d[u]>i$, and we can conclude that $(u,u')\in{}\widehat{R}^\alpha_{i+1}(v)$.

  But then we can choose $P$ so it goes through $(m,m')$ where
  $m=r^\alpha_{i+1}(v)\neq\bot$.  Now
  ${i+1}\leq{}d[w']\leq{}d[r^\alpha_{i+1}(v)]\leq{}{i+1}$ so
  $d[m]={i+1}$.

  Let $e$ be the last edge in $\widehat{R}^\alpha_i(v)$ then any path
  $\simplepath{r^\alpha_i(v)}{v}$ that starts with $e$ crosses
  $P\cup\widehat{R}^\alpha_{i+1}(v)$, implying that there exists such
  a path that contains $(m,m')$ and thus
  $r^\alpha_i(v)=r^\alpha_i(m)$.  Since $d[m]={i+1}$, then
  $l^\alpha_i(v)=l'^\alpha_i(m)$ and $r^\alpha_i(v)=r'^\alpha_i(m)$
  follows from lemma~\ref{lem:4-lr-prime-basics}.
\end{proof}

\begin{definition}\label{def:4-m}
  Let $v\in{}V$, $\alpha\in\set{0,1}$, and $0\leq{}i<d[v]$.
  \begin{align*}
    m^\alpha_i(v)&:=
    \begin{cases}
      v&\text{ if $i+1=d[v]$}
      \\
      l^\alpha_{i+1}(v)&\text{ if $i+1<d[v]\wedge{}r^\alpha_i(v)\neq{}r'^\alpha_i(r^\alpha_{i+1}(v))$}
      \\
      r^\alpha_{i+1}(v)&\text{ if $i+1<d[v]\wedge{}l^\alpha_i(v)\neq{}l'^\alpha_i(l^\alpha_{i+1}(v))$}
      \\
      m^\alpha_{i+1}(v)&\text{ otherwise}
    \end{cases}
  \end{align*}
\end{definition}

\begin{corollary}\label{cor:4-m-crossing}
  Let $v\in{}V$, $\alpha\in\set{0,1}$, and $j_2[v]\leq{}i<d[v]-1$.  If
  $l^\alpha_i(v)\neq{}l'^\alpha_i(l^\alpha_{i+1}(v))$ or
  $r^\alpha_i(v)\neq{}r'^\alpha_i(r^\alpha_{i+1}(v))$ then
  \begin{align*}
    l^\alpha_i(v)&=l'^\alpha_i(m^\alpha_i(v))
    &&\wedge&
    r^\alpha_i(v)&=r'^\alpha_i(m^\alpha_i(v))
    &&\wedge&
    d[m^\alpha_i(v)]&=i+1
  \end{align*}
\end{corollary}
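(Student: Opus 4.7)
My plan is to derive Corollary~\ref{cor:4-m-crossing} as a direct reformulation of Lemma~\ref{lem:4-crossing} in the language of $m^\alpha_i(v)$, in complete parallel to how Corollary~\ref{cor:2n-m-crossing} was obtained from Lemma~\ref{lem:2n-crossing}. The hypothesis $j_2[v]\leq i<d[v]-1$ gives $i+1<d[v]$, so the first branch of Definition~\ref{def:4-m} (returning $v$) is ruled out, and $m^\alpha_i(v)$ is selected from among $l^\alpha_{i+1}(v)$, $r^\alpha_{i+1}(v)$, and $m^\alpha_{i+1}(v)$ depending on which of the two inequalities in the corollary's hypothesis hold.

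I would then dispatch the two subcases. In the subcase $l^\alpha_i(v)\neq l'^\alpha_i(l^\alpha_{i+1}(v))$ alone, the third branch of Definition~\ref{def:4-m} sets $m^\alpha_i(v)=r^\alpha_{i+1}(v)$, and invoking the first implication of Lemma~\ref{lem:4-crossing} with $m=r^\alpha_{i+1}(v)\neq\bot$ gives exactly the three equalities $l^\alpha_i(v)=l'^\alpha_i(m^\alpha_i(v))$, $r^\alpha_i(v)=r'^\alpha_i(m^\alpha_i(v))$, and $d[m^\alpha_i(v)]=i+1$. The symmetric subcase $r^\alpha_i(v)\neq r'^\alpha_i(r^\alpha_{i+1}(v))$ is handled identically: the second branch of Definition~\ref{def:4-m} picks $m^\alpha_i(v)=l^\alpha_{i+1}(v)$, and the second implication of Lemma~\ref{lem:4-crossing} with this choice of $m$ closes the argument.

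The only delicate point I would flag is that both inequalities may hold simultaneously. Since the branches of Definition~\ref{def:4-m} are evaluated top to bottom, the value $m^\alpha_i(v)=l^\alpha_{i+1}(v)$ is then selected, and the second implication of Lemma~\ref{lem:4-crossing} still produces the desired equalities with this choice of $m$. Since the corollary is essentially a bookkeeping restatement of the crossing lemma, I do not anticipate any real technical obstacle here; all the geometric content has already been discharged inside the proof of Lemma~\ref{lem:4-crossing}, and the remaining step is to repackage it in the notation of $m^\alpha_i(v)$.
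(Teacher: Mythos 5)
Your proposal is correct and matches the paper's intent: the paper's own proof is the one-liner ``This is just a reformulation of Lemma~\ref{lem:4-crossing} in terms of $m^\alpha_i(v)$,'' and you have simply unfolded that reformulation by case analysis on Definition~\ref{def:4-m}. Your handling of the overlapping case (both inequalities holding, so the second branch wins and the second implication of Lemma~\ref{lem:4-crossing} applies with $m=l^\alpha_{i+1}(v)$) is exactly the bookkeeping the paper leaves implicit.
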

\begin{proof}
  This is just a reformulation of lemma~\ref{lem:4-crossing} in terms
  of $m^\alpha_i(v)$.
\end{proof}

\begin{lemma}\label{lem:4-lrm-combined}
  For any vertex $v\in{}V$, $\alpha\in\set{0,1}$, and $j_2[v]\leq{}i<d[v]$
  \begin{align*}
    l^\alpha_i(v)&=l'^\alpha_i(m^\alpha_i(v))
    &&\wedge&
    r^\alpha_i(v)&=r'^\alpha_i(m^\alpha_i(v))
  \end{align*}
\end{lemma}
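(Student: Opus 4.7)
My plan is to imitate almost verbatim the proof of Lemma~\ref{lem:2n-lrm-combined}, now in the $4$-frame setting, doing induction on $j$, the number of times the ``otherwise'' branch of Definition~\ref{def:4-m} is taken before hitting one of the three terminating cases in the recursive unfolding of $m^\alpha_i(v)$. The point of using $j$ as the induction parameter rather than $i$ itself is that $m^\alpha_i(v)$ is defined recursively in terms of $m^\alpha_{i+1}(v)$, so the natural induction is on how many steps of this recursion occur.

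For the base case $j=0$, one of the first three cases of Definition~\ref{def:4-m} applies. If $i+1 = d[v]$, then $m^\alpha_i(v) = v$, and the claim follows directly from the $i=d[v]-1$ line of Lemma~\ref{lem:4-lr-prime-basics}. Otherwise $i+1 < d[v]$ together with either $l^\alpha_i(v) \neq l'^\alpha_i(l^\alpha_{i+1}(v))$ or $r^\alpha_i(v) \neq r'^\alpha_i(r^\alpha_{i+1}(v))$, and then Corollary~\ref{cor:4-m-crossing} gives us $l^\alpha_i(v) = l'^\alpha_i(m^\alpha_i(v))$ and $r^\alpha_i(v) = r'^\alpha_i(m^\alpha_i(v))$ immediately.

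For the inductive step $j>0$, by the construction of $m^\alpha_i(v)$ we must have $i+1 < d[v]$ together with $l^\alpha_i(v) = l'^\alpha_i(l^\alpha_{i+1}(v))$, $r^\alpha_i(v) = r'^\alpha_i(r^\alpha_{i+1}(v))$, and $m^\alpha_i(v) = m^\alpha_{i+1}(v)$. The inductive hypothesis at level $i+1$ yields $l^\alpha_{i+1}(v) = l'^\alpha_{i+1}(m^\alpha_{i+1}(v))$ and $r^\alpha_{i+1}(v) = r'^\alpha_{i+1}(m^\alpha_{i+1}(v))$. Combining this with Lemma~\ref{lem:4-lr-prime-reduce} applied to $i \leq i+1$, we compute
\[
l'^\alpha_i(l^\alpha_{i+1}(v)) = l'^\alpha_i(l'^\alpha_{i+1}(m^\alpha_{i+1}(v))) = l'^\alpha_i(m^\alpha_{i+1}(v)) = l'^\alpha_i(m^\alpha_i(v)),
\]
so $l^\alpha_i(v) = l'^\alpha_i(m^\alpha_i(v))$ as required, and the argument for $r$ is entirely symmetric.

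The only mild concern is making sure the $\alpha$-indexing never leaves the regime in which the definitions and preceding lemmas are valid, i.e.\ $j_2[v]\leq i$ throughout; but since we only decrement $v$ to values $m^\alpha_\cdot(v)$ via the tree structure and use $l'^\alpha$, $r'^\alpha$ at indices $\geq j_2[v]$, this is automatic from the fact that $j_2[v] = j_2[m^\alpha_i(v)]$ holds along the chain (all relevant intermediate vertices lie in components whose ancestral $4$-frame chain shares the same last $2$-frame ancestor as $v$). Consequently the proof reduces to reapplying Lemma~\ref{lem:4-lr-prime-basics}, Corollary~\ref{cor:4-m-crossing}, and Lemma~\ref{lem:4-lr-prime-reduce} exactly as in the $2$-frame case.
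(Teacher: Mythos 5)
Your proof is correct and follows exactly the same structure as the paper's own argument: induction on the number of ``otherwise'' steps in the expansion of $m^\alpha_i(v)$, with the base case handled by Lemma~\ref{lem:4-lr-prime-basics} and Corollary~\ref{cor:4-m-crossing}, and the inductive step by Lemma~\ref{lem:4-lr-prime-reduce}. The closing paragraph about $j_2$ remaining bounded along the chain is a sensible sanity check that the paper leaves implicit (and is justified because $c[m^\alpha_{i+1}(v)]$ is an ancestor of $c[v]$ of depth $>j_2[v]$, so its deepest $2$-frame ancestor coincides with that of $c[v]$), but it does not change the substance of the argument.
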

\begin{proof}
  The proof is by induction on $j$, the number of times the
  ``otherwise'' case is used before reaching one of the other cases
  when expanding the recursive definition of $m_i(v)$.

  For $j=0$, either $i+1=d[v]$ and the result follows from
  Lemma~\ref{lem:4-lr-prime-basics}, or $i+1<d[v]$ and
  $l_i(v)\neq{}l'_i(l_{i+1}(v))$ or $r_i(v)\neq{}r'_i(r_{i+1}(v))$.
  In either case we have by Corollary~\ref{cor:4-m-crossing}, that
  $l^\alpha_i(v)=l'^\alpha_i(m^\alpha_i(v))$ and
  $r^\alpha_i(v)=r'^\alpha_i(m^\alpha_i(v))$.

  For $j>0$ we have $i+1<d[v]$ and $l_i(v)=l'_i(l_{i+1}(v))$ and
  $r_i(v)=r'_i(r_{i+1}(v))$ and $m_i(v)=m_{i+1}(v)$.  By induction we
  can assume that
  $l^\alpha_{i+1}(v)=l'^\alpha_{i+1}(m^\alpha_{i+1}(v))$ and
  $r^\alpha_{i+1}(v)=r'^\alpha_{i+1}(m^\alpha_{i+1}(v))$.
  Then by Lemma~\ref{lem:4-lr-prime-reduce},
  $l'^\alpha_i(l^\alpha_{i+1}(v))=l'^\alpha_i(l'^\alpha_{i+1}(m^\alpha_{i+1}(v)))=l'^\alpha_i(m^\alpha_{i+1}(v))=l'^\alpha_i(m^\alpha_i(v))$,
  showing that $l^\alpha_i(v)=l'^\alpha_i(m^\alpha_i(v))$ as desired.
  The case for $r$ is symmetric.
\end{proof}


\begin{definition}\label{def:4-pm}
  For any vertex $v\in{}V$, and $\alpha\in\set{0,1}$ let
  \begin{align*}
    M^\alpha[v]&:=\set{i\middle|\:j_2[v]<i<d[v]\wedge{}m^\alpha_{i-1}(v)\neq{}m^\alpha_i(v)}
    \\
    p^\alpha_m[v]&:=
    \begin{cases}
      \bot&\text{ if }M^\alpha[v]=\emptyset
      \\
      m^\alpha_{\max{}M^\alpha[v]-1}(v)&\text{ otherwise}
    \end{cases}
  \end{align*}
  And define $T^\alpha_m$ as the rooted forest over $V$ whose parent pointers
  are $p^\alpha_m$.
\end{definition}

\begin{theorem}\label{thm:4-frames}
  There exists a practical RAM data structure that for any good
  st-decomposition of a graph with $n$
  vertices uses $\Oo(n)$ words of $\Oo(\log{}n)$ bits and can answer
  $l^\alpha_i(v)$ and $r^\alpha_i(v)$ queries in constant time.
\end{theorem}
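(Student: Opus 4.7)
The plan is to mirror the data structure of Theorem~\ref{thm:2n-frames} for the $4$-frame setting, using Lemma~\ref{lem:4-lrm-combined} as the analogue of Lemma~\ref{lem:2n-lrm-combined} to reduce each $l^\alpha_i(v)$ or $r^\alpha_i(v)$ query to a constant number of level ancestor lookups in auxiliary trees. Concretely, for each $\alpha \in \{0,1\}$ I would build level ancestor structures on the three rooted forests $T^\alpha_l$, $T^\alpha_r$, and $T^\alpha_m$ defined in Definitions~\ref{def:4-pl-pr} and~\ref{def:4-pm}, and annotate every vertex $v$ with the depth $d[v]$, the value $j_2[v]$, and three ancestor-depth sets
\begin{align*}
    D^\alpha_l[v] &:= \set{i\:\middle|\:\text{$v$ has a proper ancestor $w$ in $T^\alpha_l$ with $d[w]=i$}} \\
    D^\alpha_r[v] &:= \set{i\:\middle|\:\text{$v$ has a proper ancestor $w$ in $T^\alpha_r$ with $d[w]=i$}} \\
    M^\alpha[v] &\text{ (from Definition~\ref{def:4-pm}).}
\end{align*}

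Since the st-decomposition is good, its height is $\Oo(\log n)$, so each of these sets is a subset of $\{0,1,\ldots,\Oo(\log n)\}$ and fits in a single $\Oo(\log n)$-bit word as a bitvector. The total space is $\Oo(n)$ words, and the preprocessing of linear-size level ancestor structures is standard on the practical RAM. To evaluate $l^\alpha_i(v)$ on a query, I would first compute $d[m^\alpha_i(v)] = \operatorname{succ}(M^\alpha[v] \cup \{d[v]\}, i)$ in constant time via word operations on the stored bitvector, and then use the level ancestor structure on $T^\alpha_m$ to retrieve $m := m^\alpha_i(v)$; the level in $T^\alpha_m$ at which $m$ sits is obtained by popcount on $M^\alpha[v]$ restricted to $\{0,\ldots,d[m]-1\}$. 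Next, $l'^\alpha_i(m)$ is found by the same trick on $D^\alpha_l[m]$ and the level ancestor structure of $T^\alpha_l$, and symmetrically $r'^\alpha_i(m)$ on $D^\alpha_r[m]$ and $T^\alpha_r$. Finally, Lemma~\ref{lem:4-lrm-combined} gives $l^\alpha_i(v) = l'^\alpha_i(m^\alpha_i(v))$ and $r^\alpha_i(v) = r'^\alpha_i(m^\alpha_i(v))$, so a constant number of word operations and level ancestor queries suffice.

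The main potential obstacle is verifying that the $\alpha$-indexed variants of the helper objects behave exactly as their $2$-frame counterparts over the restricted range $j_2[v] \leq i < d[v]$: level ancestor queries may attempt to ascend past the last $2$-frame, so I would guard the evaluation by the stored $j_2[v]$ and fall back to the $2$-frame data structure of Theorem~\ref{thm:2n-frames} when $i < j_2[v]$. All the supporting lemmas (\ref{lem:4-hat-nonempty}--\ref{lem:4-lrm-combined}) have already been established for this range, so no further structural argument is required; the remaining work is purely the bookkeeping of the six trees and three bitvectors per vertex, and the observation that the whole construction is dominated by building level ancestor structures on linear-size forests, which can be done in linear time.
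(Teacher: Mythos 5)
Your proposal is correct and matches the paper's proof essentially step for step: same six forests $T^\alpha_l$, $T^\alpha_r$, $T^\alpha_m$ with level ancestor structures, same per-vertex bitvectors $D^\alpha_l$, $D^\alpha_r$, $M^\alpha$ plus $d[v]$ and $j_2[v]$, same use of $\operatorname{succ}$ on $O(\log n)$-bit words followed by level ancestor lookups, and the same final appeal to Lemma~\ref{lem:4-lrm-combined}. The only cosmetic difference is that you spell out the rank-style popcount step (which is implicit in the paper and realizable in the practical RAM via table lookup on short words) and you phrase the $i < j_2[v]$ boundary as an explicit fallback to the $2$-frame structure, whereas the paper simply stores $j_2[v]$ and $J_2[v]$ and restricts the query domain; neither difference affects correctness.
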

\begin{proof}
  For any vertex $v\in{}V$, and $\alpha\in\set{0,1}$ let
  \begin{align*}
    D^\alpha_l[v]&:=\set{i\middle|\:\text{$v$ has a proper ancestor $w$ in $T^\alpha_l$
        with $d[w]=i$}}
    \\
    D^\alpha_r[v]&:=\set{i\middle|\:\text{$v$ has a proper ancestor $w$ in $T^\alpha_r$
        with $d[w]=i$}}
  \end{align*}
  Now, store levelancestor structures for each of $T^\alpha_l$,
  $T^\alpha_r$, and $T^\alpha_m$, together with $d[v]$, $j_2[v]$,
  $J_2[v]$, $D^\alpha_l[v]$, $D^\alpha_r[v]$, and $M^\alpha[v]$ for
  each vertex.  Since the height of the st-decomposition is
  $\Oo(\log{}n)$ each of $J_2[v]$, $D^\alpha_l[v]$, $D^\alpha_r[v]$, and
  $M^\alpha[v]$ can be represented in a single $\Oo(\log{}n)$-bit
  word.

  This representation allows us to find
  $d[m^\alpha_i(v)]=\operatorname{succ}(M^\alpha[v]\cup\set{d[v]},i)$ in
  constant time, as well as computing the depth in $T^\alpha_m$ of $m^\alpha_i(v)$.
  Then using the levelancestor structure for $T^\alpha_m$ we can compute
  $m^\alpha_i(v)$ in constant time.

  Similarly, this representation of the $D^\alpha_l[v]$ set lets us compute
  the depth in $T^\alpha_l$ of $l'^\alpha_i(v)$ in constant time, and with the
  levelancestor structure that lets us compute $l'^\alpha_i(v)$ in constant
  time.  A symmetric argument shows that we can compute $r'^\alpha_i(v)$ in
  constant time.

  Finally, lemma~\ref{lem:4-lrm-combined} says we can compute $l^\alpha_i(v)$
  and $r^\alpha_i(v)$ in constant time given constant-time functions for
  $l'$, $r'$, and $m$.
\end{proof}

\section{Acyclic planar In- out- graphs}\label{sec:inout}

For an in-out-graph $G$ we have a source, $s$, that can reach 
all vertices of outdegree $0$. Given such a source, $s$, we may assign all vertices a colour: A vertex is green if it can be reached from $s$, and red otherwise. We may also colour the directed edges: $(u,v)$ has the same colour as its endpoints, or is a blue edge in the special case where $u$ is red and $v$ is green. Our idea is to keep the colouring and flip all non-green edges, thus obtaining a single source graph $H$ with source $s$. (Any vertex was either green and thus already reachable from $s$, or could reach some target $t$, and is reachable from $s$ in $H$ via the first green vertex on its path to $t$.)

Consider the single source reachability data structure for the red-green graph, $H$. This alone does not suffice to determine reachability in $G$, but it does when endowed with a few extra words per vertex:
\begin{description}
\item[M1] A red vertex $\redu$ must remember the additional information of the best green vertices $BestGreen(\redu)$ on its own parent frame it can reach. There are at most $4$ such vertices, one for each disegment. \label{M1}
\item[M2] Information about paths from a red to a green vertex in the same component. See Section \ref{intra}. 
\item[M3] Information about paths from a red vertex in some component $C$ to a green vertex in an ancestor component of $C$. See Section \ref{inter}.
\end{description}

Given a green vertex $\greencol{v}$, we know for each ancestral frame segment the best vertex that can reach $\greencol{v}$. For a red vertex $\redcol{u}$,
given a segment $p$ on an ancestral frame to 
$\redcol{u}$, 
we have information about the best vertex on $p$ that may reach $u$ in $H$ via ``ingoing'' edges, that is, an edge from the corresponding $\widehat{F}_i(u)$. 
If that best vertex is red, then it is the best vertex on $p$ that $\redcol{u}$ can reach, again, from the ``inside''.

We may now case reachability based on the colour of nodes:
\begin{itemize}
\item For green $\greencol{u}$ and red $\redcol{v}$,  reach$_G$($\greencol{u},\redcol{v}$) = No.
\item For green vertices $\greencol{u},\greencol{v}$, reach$_G(\greencol{u},\greencol{v})$ = reach$_H(u,v)$
\item For red vertices $\redcol{u},\redcol{v}$, reach$_G(\redcol{u},\redcol{v})$ = reach$_H(v,u)$
\item When $\redu$ is red and $\greenv$ is green, to determine reach$_G(\redcol{u},\greencol{v})$ we need more work. It will depend on where in the hierarchy of components, $u$ and $v$ reside. 
\end{itemize}
%
\begin{wrapfigure}[16]{r}{0.24\textwidth}
\includegraphics[width=1.0\linewidth]{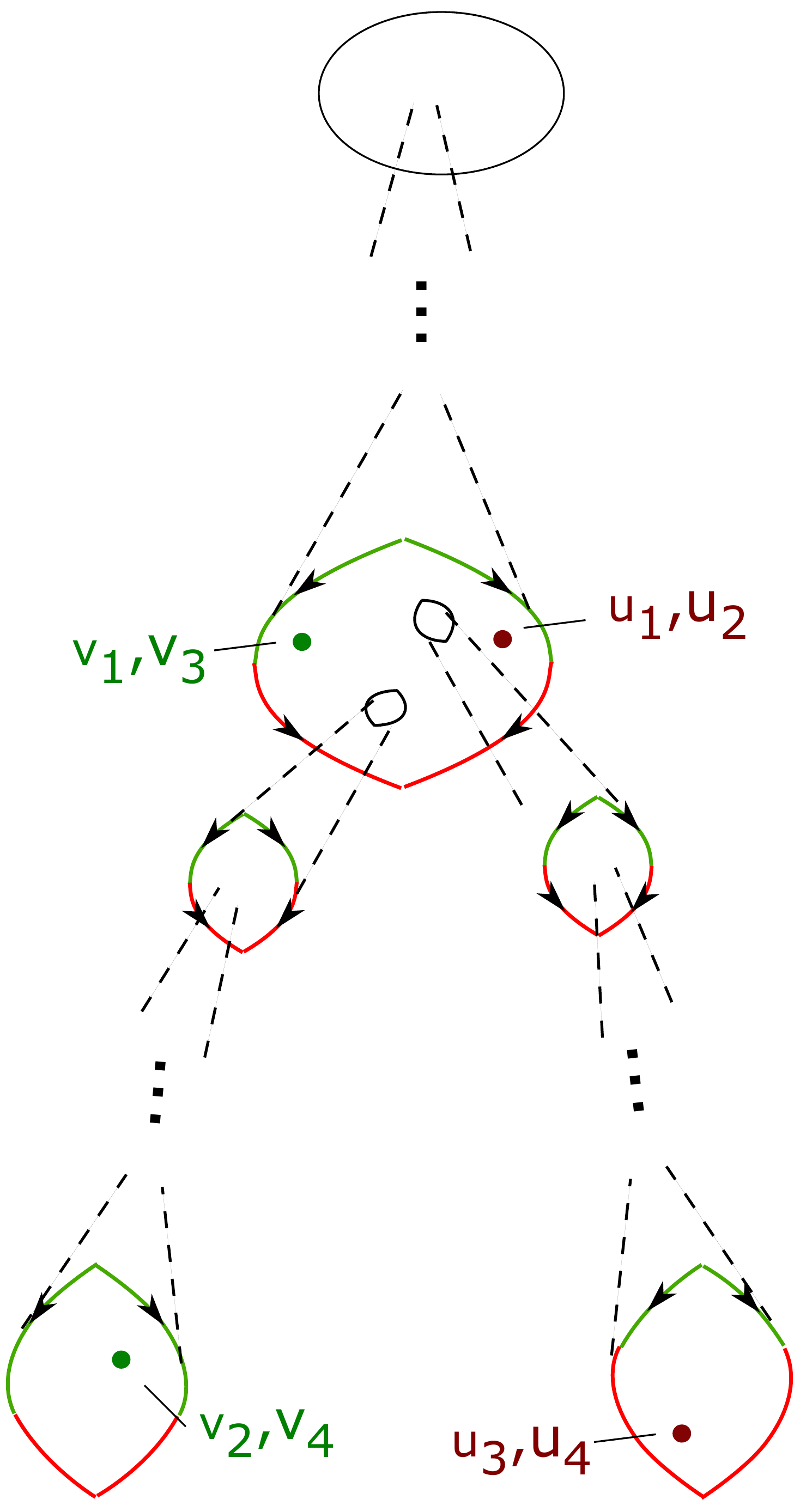}
\label{fig:red-green}
\end{wrapfigure}

When $\redu$ is red and $\greenv$ is green, there are the following cases.

\begin{enumerate}
\item $c[\redcol{u}]=c[\greencol{v}]$. There may be a path from $\redu$ to $\greenv$:
\begin{itemize}[label=$\circ$,leftmargin=*,noitemsep,topsep=0pt,parsep=0pt,partopsep=0pt]
\item Via a green vertex $\greencol{w}$ in the parent frame of $\redcol{u}$. For each candidate $\greenw\in BestGreen(\redu)$, try reach$_H(\greencol{w},\greencol{v})$. (See M1).
\item Staying within the frame, that is, reach$_{c[\redcol{u}]}(\redcol{u},\greencol{v})$. To handle this case we need to store more information, see Section \ref{intra}.
\end{itemize}
\item $c[\redcol{u}] \prec c[\greencol{v}]$. There may be a path from $\redu$ to $\greenv$:
\begin{itemize}[label=$\circ$,leftmargin=*,noitemsep,topsep=0pt,parsep=0pt,partopsep=0pt]
\item Via a green vertex $\greencol{w}$ in the parent frame of $\redcol{u}$, reach$_H(\greencol{w},\greencol{v})$. (See M1).
\item Via a green vertex $\greencol{w}$, where $c[\greencol{w}]=c[\redcol{u}]$, then reach$_G (\redcol{u},\greencol{w})$ is in case 1 above.
$\greencol{v}$ knows the at most $4$ such $\greencol{w}$s from the single source structure. 
\end{itemize}
\item $c[\redcol{u}]\succ c[\greencol{v}]$. There may be a path from $\redu$ to $\greenv$:
\begin{itemize}[label=$\circ$,leftmargin=*,noitemsep,topsep=0pt,parsep=0pt,partopsep=0pt]
\item  Via a red edge $(\redcol{w^{\prime}}, \redcol{w})$ in $G$ with $c[\redcol{w}]\preceq c[\greencol{v}] \prec c[\redcol{w^{\prime}}] \preceq c[\redcol{u}]$. That is, in the single-source structure for $H$, $u$ can find its best vertex $w$ for each disegment of the parent frame of $c[\greenv]$. For a path via that disegment to exist, $w$ must be red, and reach$_G(\redcol{w},\greencol{v})$, which is in case 1 or 2 above, must return true. 
\item  Via a blue edge $(\redcol{w^\prime}, \greencol{w})$ with $c[\greencol{w}]\preceq c[\greencol{v}] \prec c[\redcol{w^{\prime}}] \preceq c[\redcol{u}]$. 
We handle this case in Section~\ref{inter}.
\end{itemize}
\newcounter{enumTemp}
\setcounter{enumTemp}{\theenumi}
\end{enumerate}

\begin{enumerate}
\setcounter{enumi}{\theenumTemp}
\item $c[\redcol{u}],c[\greencol{v}]\succ N$, where $N = \operatorname{lca}(c[\redcol{u}],c[\greencol{v}])$. A path from $\redu$ to $\greenv$ must go:
\begin{itemize}[label=$\circ$,leftmargin=*,noitemsep,topsep=0pt,parsep=0pt,partopsep=0pt]
\item Via $\greencol{w}$, $c[\greencol{w}]\preceq N$, then reach$_G(\redcol{u},\greencol{w})$ is in case 3 above.
$\greencol{v}$ computes at most $4$ such $\greencol{w}$s from the single source structure, and note that all the vertices that $\greenv$ computes must be green.
\end{itemize}

\end{enumerate}

\subsection{Intracomponental blue edges} \label{intra}

Consider the set of ``blue'' edges $(\redcol{a},\greencol{b})$ from $G$ where both the red vertex $\redcol{a}$ and green $\greencol{b}$ reside in some given component in the s-t-decomposition of $H$.

\begin{lemma}\label{lem:intrablue}
We may assign to each vertex $\le 2$ numbers,
such that
if red $\redcol{u}$ remembers $i,j\in \mathbb{N}$ and green $\greencol{v}$ remembers $l,r\in \mathbb{N}$, then $\redcol{u}$ can reach $\greencol{v}$ if and only if
$i\leq l\leq j$ or $i\leq r\leq j$ or $\min\!\set{l,r}\leq j<i$ or $j<i\leq\max\!\set{l,r}$.
\end{lemma}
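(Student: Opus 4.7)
My plan is to exhibit an explicit labelling based on the planar s-t-structure of $C$ (viewed in $H$), reducing intracomponental red-to-green reachability to a cyclic arc containment on the blue edges of $C$.

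\textbf{Reduction.} Any $G$-path from red $\redu$ to green $\greenv$ inside $C$ decomposes uniquely as a red-red prefix $\redu\rightsquigarrow\redcol{a}$, a single blue edge $(\redcol{a},\greencol{b})$, and a green-green suffix $\greencol{b}\rightsquigarrow\greenv$, because blue edges go from red to green in $G$ and cannot be reversed (there are no green-to-red edges in $G$). Hence reach$_C(\redu,\greenv)$ holds iff $B(\redu)\cap B^{-1}(\greenv)\neq\emptyset$, where $B(\redu)$ is the set of blue edges of $C$ whose red endpoint is reachable from $\redu$ via red-red edges of $G$, and $B^{-1}(\greenv)$ is the set of blue edges of $C$ whose green endpoint reaches $\greenv$ via green-green edges of $G$. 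The task is to encode $B(\redu)$ and $B^{-1}(\greenv)$ by two numbers each.

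\textbf{Labelling.} Since $C$ viewed in $H$ is a truncated s-t-graph, I would enumerate its blue edges $1,\ldots,k$ cyclically along the common planar boundary between the red and green subregions of $C$. By a standard planar-DAG reachability argument on the red subgraph of $C$, the set $B(\redu)$ is a contiguous cyclic arc; label $\redu$ with its cyclic endpoints $(i,j)$, where $[i,j]$ denotes the arc going clockwise from $i$ to $j$ (wrapping around if $j<i$). Symmetrically, by planarity on the green subgraph of $C$, the set $B^{-1}(\greenv)$ is a contiguous cyclic arc; label $\greenv$ with its two cyclic extremes $(l,r)$. The four clauses of the stated condition then become precisely the sub-cases of ``$l$ or $r$ lies in the cyclic arc $[i,j]$'', split by whether $[i,j]$ is a simple interval ($i\le j$) or wraps around ($j<i$), combined with which of $l$ or $r$ lands in the arc.

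\textbf{Main obstacle.} The ``if'' direction is immediate: if $l$ (resp.\ $r$) lies in $[i,j]$, this blue edge is a common witness of $B(\redu)$ and $B^{-1}(\greenv)$. The crux is the ``only if'' direction: given that $B(\redu)\cap B^{-1}(\greenv)\neq\emptyset$, one must show that $l$ or $r$ lies in $[i,j]$, ruling out the geometric configuration where $[i,j]$ is strictly contained in the interior of $B^{-1}(\greenv)$ while neither extreme $l,r$ lies in $[i,j]$. I would argue this case away using the s-t-structure of $C$ together with a planar-topological argument: the boundary of $\redu$'s red-reachable region in $C$ must, whenever it meets the interior of $B^{-1}(\greenv)$, extend all the way to one of its cyclic extremes $l$ or $r$; otherwise one could construct a red-red path inside $C$ that, by planarity, crosses a blue edge in a direction incompatible with $H$'s s-t-orientation. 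This is the technically delicate step and precisely the one that forces the specific disjunctive form of the stated condition.
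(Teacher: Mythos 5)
Your reduction of intracomponental $\redcol{u}\rightsquigarrow\greencol{v}$ reachability to the existence of a common blue edge, and your encoding of $B(\redu)$ as a cyclic arc $[i,j]$, both match the paper. The genuine gap is exactly where you flag it: the ``only if'' direction is not actually argued. You appeal to an unstated ``planar-topological argument'' to rule out $[i,j]$ being strictly interior to $B^{-1}(\greenv)$, but you never supply it, and it is not even clear that your choice of $(l,r)$ as the cyclic extremes of $B^{-1}(\greenv)$ is the right labelling to make the stated disjunction sound.

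The paper picks a different, more structured $(l,r)$. Take the s-t-subgraph of $C$ induced by the vertices that can reach $\greencol{v}$; its outer face is bounded by two dipaths $P$ and $Q$ from the source of $C$ to $\greencol{v}$, and one sets $l,r$ to the indices of the \emph{latest} blue-edge endpoints lying on $P$ and on $Q$ respectively. With this choice the only-if direction is a short planarity argument: on any $G$-path $\redu\rightsquigarrow\greenv$, the unique blue edge $(\redcol{a},\greencol{b})$ has $\redcol{a}$ outside the region bounded by $P\cup Q$ (red vertices cannot reach $\greencol{v}$) and $\greencol{b}$ inside it, so the edge must touch $P\cup Q$; since $P,Q$ are green it must do so at $\greencol{b}$. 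Following $P$ or $Q$ forward from $\greencol{b}$ to its last blue endpoint then shows $\redu$ reaches $v_l$ or $v_r$, i.e.\ $l$ or $r$ lies in $[i,j]$. This crossing argument is the concrete idea your sketch is missing; it also renders your worrisome strict-containment configuration impossible as an immediate corollary, rather than as a separate topological case to rule out.
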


\begin{proof}
The key observation is that we may enumerate all blue edges $b_0 = (\redcol{u_0},\greencol{v_0}), \ldots b_i = (\redcol{u_m},\greencol{v_m})$ such that any red vertex can reach a segment of their endpoints, $\greencol{v_i},\ldots,\greencol{v_j}$. 
Namely, the blue edges form a minimal cut in the planar graph which separates the red from the green vertices, and this cut induces a cyclic order. In this order, each red vertex may reach a segment of blue edges, and each green vertex may reach a segment of blue edge endpoints. Thus, the blue edge endpoints reachable from a given red vertex (through any path) is a union of overlapping segments, which is again a segment.

Now each red vertex remembers the indices of the first $\greencol{v_i}$ and last $\greencol{v_j}$ blue edge endpoint it may reach. 
For a green vertex $\greencol{v}$, the s-t-subgraph with $\greencol{v}$ as target has a delimiting face consisting of two paths, $P$ and $Q$. $\greencol{v}$ remembers the indices $l$, $r$ of the latest blue edge endpoints $v_l\in P$ and $v_r\in Q$, if they exist.
Clearly, if $l$ or $r$ is within range, $\redcol{u}$ may reach $\greencol{v}$. Contrarily, if $\redcol{u}$ may reach $\greencol{v}$, it must do so via some vertex 
$\greencol{v^\prime}$ on $P \cup Q$. But then $\greencol{v^\prime}$ must be able to reach $v_l$ or $v_r$, and thus, $l$ or $r$ is within range.
\end{proof}

\subsection{Intercomponental blue edges} \label{inter}

For any red vertex $\redcol{u}$, if a blue edge $(u',v)$ reachable from $u$ is separated $u$ by a frame, then one of the best red vertices on that frame can reach $u'$.  So let each red vertex remember the best $\leq4$ blue edges it can reach on its own frame.  Then we can define $4$ bitmasks $\{B^\beta(u)\}_{0\leq\beta\leq3}$ such that for any $i$ finding the highest 1-bit $\leq{}i$ in each, gives at most $4$ levels such that the best red vertices reachable from $u$ on those levels together know the best blue edges for $u$.

\nocite{DBLP:journals/ipl/Kameda75}
\bibliographystyle{plain}
\bibliography{refs}
\newpage

\end{document}